\begin{document}

\title*{Closing Bell}
\subtitle{Boxing black box simulations in the  resource theory of contextuality}
\titlerunning{Closing Bell: Boxing black box simulations in the resource theory of contextuality}
\toctitle{Closing Bell: Boxing black box simulations in the resource theory of contextuality}
\author{Rui Soares Barbosa, Martti Karvonen, and Shane Mansfield}
\authorrunning{R.S.~Barbosa \and M.~Karvonen \and S.~Mansfield} 
\institute{
Rui Soares Barbosa
\at INL -- International Iberian Nanotechnology Laboratory, Braga, Portugal \\
\email{rui.soaresbarbosa@inl.int}
\and
Martti Karvonen
\at Department of Mathematics and Statistics, University of Ottawa, Ottawa, Canada\\
\email{martti.karvonen@uottawa.ca}
\and
Shane Mansfield
\at
Quandela, Palaiseau, France \\
\email{shane.mansfield@quandela.com}
}
\maketitle

\abstract{
This chapter contains an exposition of the sheaf-theoretic framework for contextuality emphasising resource-theoretic aspects, as well as some original results on this topic.
In particular, we consider functions that transform empirical models on a scenario $S$ to empirical models on another scenario $T$, and characterise those that are induced by classical procedures between $S$ and $T$ corresponding to `free' operations in the (non-adaptive) resource theory of contextuality.
We construct a new `hom' scenario built from $S$ and $T$, whose empirical models induce such functions.
Our characterisation then boils down to being induced by a non-contextual model.
We also show that this construction on scenarios provides a closed structure on the category of measurement scenarios.
}

\section*{Prologue}

Among the many and varied facets of Samson Abramsky's work have been his
contributions to the foundations of quantum mechanics. Approaching the subject
through the lenses of computer science, he has brought its modes of thought and
mathematical tools to bear on the analysis of natural systems, providing fresh
perspectives that have illuminated the fundamental structures of the quantum world
and their interaction with notions of information and computation.

In particular, some of Samson's major contributions over the past decade or so have
been to the study of non-locality and contextuality. These are key phenomena that set
quantum theory apart from classical physical theories and that can be shown to
relate closely to quantum-over-classical advantage in computation and
information processing. Samson, his collaborators, and others have developed a general,
unifying framework that shines light on structural and logical aspects at the
core of these phenomena
\cite{ab,cohomology-of-contextuality,abramsky2013relational,abramskyhardy:logical,abramsky2013robust,abramsky2014operational,abramsky2014classification,abramsky2014no,contextualitycohomologyparadox,abramsky2015contextualsemantics,abramsky2016hardy,abramsky2016possibilities,abramsky2017quantum,abramsky2018minimum,abramsky2017contextual,abramsky2017complete,ourlics:comonadicview,abramsky2019simulations,abramsky2019non,Samson:Pitowsky,abramsky2021logic,
wangetal:ambiguous,abramsky2018:borders,abramsky2022kconsistency,
barbosa2014monogamy,gogiosozeng:avn,barbosaetal:cv,boothetal:negativity,karvonen2018categories,karvonen2021neither,aasnaess2020,caru2017,caru2018,kishida2014,kishida2016,ruishizzle:extendability,mansfield2017unified,mansfield2018quantum,mansfield2017Hardy,mansfieldfritz2012,raussendorf2013contextuality,desilva2018logical}
Perhaps surprisingly, the crystallised formulation that emerges requires
remarkably little from the formalism of quantum theory. Stripped down to its
essentials, contextuality arises in the tension between local consistency and
global inconsistency which is made possible by the limitation that one
only has access to partial views of a quantum system. As it turns out, and as
Samson has duly pointed out elsewhere, similar structures and concerns crop up
in many other subjects, from relational databases and constraint satisfaction
to natural language \cite{abramsky2015contextualsemantics}.

Our contribution to this volume has two main objectives. On the one hand, it
presents some new results that are suggestive of how the research programme
might evolve over the coming years (Sections~\ref{sec:main-result} and \ref{sec:closure}). In addition to the results themselves we
provide detailed discussion of some
of the open questions and research directions that arise from them (Section~\ref{sec:outlook}).
On the other hand, it is partly aimed at providing an
up-to-date exposition of some of the main ideas of the framework (Sections~\ref{sec:framework-objects} and \ref{sec:framework-morphisms}). The adopted
perspective is -- we hope -- somewhat original. It differs from previous
expositions in that it focuses on the resource
theory of contextuality developed in our recent work with Samson.

This resource-theoretic perspective has placed the emphasis on simulations, or transformations, between
contextual behaviours rather than on individual instances of such behaviours;
\ie on morphisms rather than objects, to adopt the language of category
theory. We will show that a novel upshot of this perspective is a uniform
treatment of some important concepts from the literature: non-local games, for
example, arise as particular instances of simulations.

The new contributions
concern precisely the structure of this resource theory of contextuality. The
`free' transformations (\ie the classical simulations) between contextual
behaviours are characterised by regarding transformations as empirical
behaviours in their own right and reducing the question of `free'-ness to that
of non-contextuality of the corresponding behaviour.
In categorical language,
this is achieved through internalising the hom-sets, finding a closed structure in
the category of simulations. 

The technical contents of the chapter are preceded by a lengthy introductory
section, which aims to give a broad overview that motivates the approach and the basic ingredients of the
framework. The impatient and technically-minded reader may safely skip it and
jump straight into the weedier pastures of definition--theorem--proof land. The
more leisurely reader at the opposite extreme may be tempted to stop at the
introduction, in which we have aimed to convey the central ideas, and our
hope is that they won't leave empty-handed.

We have endeavoured to make the chapter accessible to a broad readership including
logicians, computer scientists, physicists of a foundational bent, and
superpositions thereof. In particular, no knowledge whatsoever of quantum
mechanics is assumed or even used: it really is all just about partial
information. We hope that everyone -- including Samson -- will be able to find something
of interest here.

\section{Introduction}\label{sec:introduction}

\subsection*{A behavioural lens}

Systems will be considered from a
purely \stress{observational} or \stress{operational} perspective.\footnotemark\
As such, a
system will simply be treated as a black box with which an external agent can
interact. Computer scientists may think of these interactions as the posing of
queries and the obtaining of responses, like in the calling of functions in a
programming language.
Physicists may prefer to think of interactions as
the performing of measurements and obtaining of outcomes. From this perspective
the states of a system are not defined \stress{intrinsically} or \textit{a
priori}, but rather they are descriptions of the empirically observable
behaviour in every allowed interaction.

\footnotetext{At the risk of provoking a relapse into an erstwhile indulgence of Samson's,
who has been known to self-identify as a recovering philosopher,
we remark that this chapter adopts an approach that is somewhat in
the empiricist tradition of philosophy. As someone whose research interests and contributions are
ever-evolving and indeed ever-relevant, one former student of his has also pointed
out that Samson has clearly distinguished himself from another famous Samson, the
dinosaur.}

In this, computer scientists may be familiar with the terms observational or
behavioural as opposed to \stress{state-space based} approaches, while physicists
may recognise a similar distinction between operationalism and realism. At the
same time, it may be worth commenting that adopting this perspective need not
entail any deep philosophical commitment, but merely a methodological one. For
on the one hand `operationalism is, at least, a useful exercise for freeing
the mind from the baggage of preconceptions about the world'
\cite{hardy2010physics}, while on the other it is an apt approach when one's
primary concern is the practical one of understanding how best to utilise the
systems under consideration, as is typical in quantum information and
computation.

Attention will be limited to \stress{single-use} black boxes. The
agent is permitted only one round of interaction with the system, which is
subsequently unusable. This kind of `one-shot' limitation is typical in
interactions with quantum systems, where measurements are often destructive,
rendering the system unavailable for future use. In terminology more familiar
to computer scientists, there is no state update or continuation. Equivalently,
we may think of black boxes that are reinitialised after each round of
interaction. Despite the apparent poverty of a setting shorn of sequential
interactions and transformations, it is in fact already rich enough to capture
salient features that set quantum systems apart from classical ones (indeed to
capture the most well-loved and well-studied of these features).

The interface, or \stress{type}, of a black box specifies a finite set of basic
queries or measurements and their respective sets of possible responses or outcomes.\footnote{From now on, we shall primarily adopt the terminology measurements and outcomes.}
The agent consumes the box by simultaneously performing a subset of these
measurements. Of specific interest will be situations in which not all subsets
of measurements can be performed jointly. Such a limitation could be imposed as
a design feature: \eg security concerns may dictate that some combinations of
database attributes not be simultaneously accessible. Similarly, it could
result from a lack of experimental resources: \eg having a limited number of
measurement devices or detectors in a physics experiment. However, our primary
motivation comes from quantum theory, where the limitation reflects a more
fundamental restriction stemming from the incompatibility of certain
combinations of measurements, in the sense of the uncertainty principle.
Performing one measurement may spoil the possibility of performing another by
disturbing its outcome statistics.

In such a situation it is not possible for the agent to freely examine \stress{all} the
observable properties of the system, as would be matter of course in the
setting of classical physics. Instead, their access to the system is limited,
mediated by sets of jointly-measurable observables, called \emph{contexts}.
Each context provides a partial (classical) window into the (quantum) system. And it is
only through the varied collection of partial points of view afforded by these
windows that one may infer about the system as a whole.

This limitation calls to mind the situation described in the opening passage of Peter Johnstone's compendium
on topos theory, \textit{Sketches of an Elephant} \cite{johnstone2002sketches}:
\begin{quotation}
  Four men, who had been blind from birth, wanted to know what
  an elephant was like; so they asked an elephant-driver for information. He
  led them to an elephant, and invited them to examine it; so one man felt the
  elephant's leg, another its trunk, another its tail and the fourth its ear.
  Then they attempted to describe the elephant to one another. The first man
  said `The elephant is like a tree'. `No', said the second, `the elephant is
  like a snake'. `Nonsense!' said the third, `the elephant is like a broom'.
  `You are all wrong,' said the fourth, `the elephant is like a fan'. And so
  they went on arguing amongst themselves, while the elephant stood watching
  them quietly. [\dots] But the important thing about the elephant is that
  `however you approach it, it is still the same animal'\dots
\end{quotation}
We might imagine that the debate would soon resolve itself if only each of the
participants were to move around the elephant and bit-by-bit -- or sketch-by-sketch --
build up a more
global picture of the animal.
Yet, the limitation that only partial empirical
information about the system can ever be obtained at once gives rise to an
altogether more intriguing set of possibilities. After all,
the elephant in this story might be thought of as an essentially
classical beast. And so in some respects this is the point at which the real
fun begins.

\subsection*{Contextuality: `at the borders of paradox'}

The idea is that even though the viewpoints afforded by overlapping
contexts may fit nicely together there can nevertheless be situations in which
it is impossible to paste \stress{all} of them into a consistent global
picture. Such a gap between \emph{local consistency} and \emph{global inconsistency}
gives rise to an apparent paradox,
an instance of which is beautifully illustrated
by M.C.\ Escher's lithograph \textit{Klimmen en dalen}
reproduced here in Figure~\ref{fig:escher}.\footnotemark

\footnotetext{This is similarly illustrated by the `impossible biscuit' in the
poster for the 2018 Lorentz Centre workshop \textit{Logical Aspects of Quantum
Information}, which was co-organised by Samson:
\url{https://www.lorentzcenter.nl/logical-aspects-of-quantum-information.html}}

\begin{figure}[tbhp]
  \centering
  \includegraphics[width=0.7\textwidth]{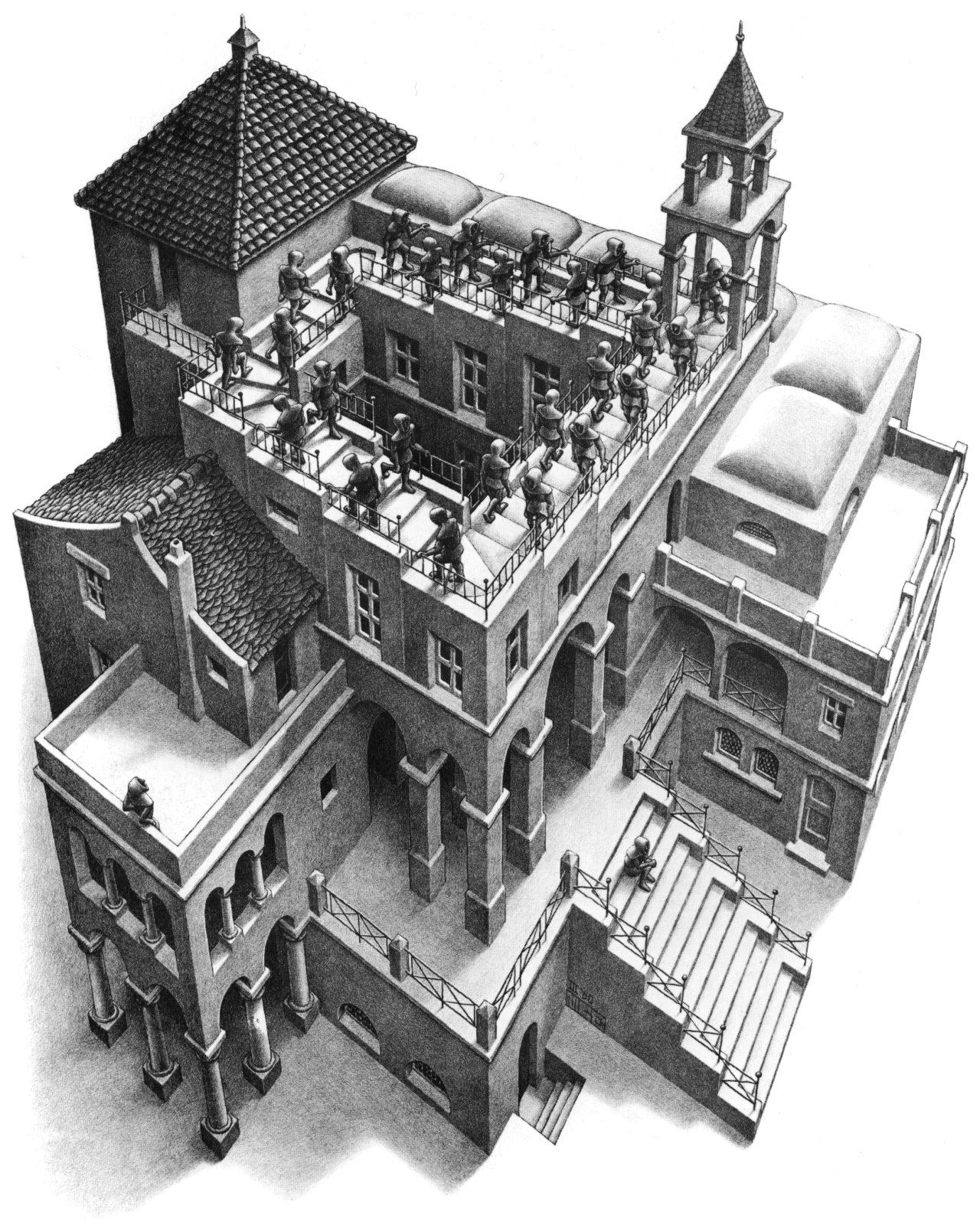}
  \caption{M. C. Escher, \textit{Klimmen en dalen} (Ascending and descending),
  1960. Lithograph, 285mm $\times$ 355mm.}
  \label{fig:escher}
\end{figure}

More concretely, the behaviour of a black box is described by an
\emph{empirical model} $e$ which details its (probabilistic)
response to any allowed interaction. For each allowed context of
measurements $C$, it specifies a probability distribution $e_C$ over the
set of available joint outcomes. We will encounter several examples of empirical
models in what follows (see Examples
\ref{ex:trianglemodel}--\ref{ex:possibilisticns}).

The distributions that make
up an empirical model are naturally constrained to be `locally' compatible in
the following sense. If $U$ is a subset of a context $C$, one could jointly
perform the measurements in $C$ but then forget the outcomes of those
measurements not in $U$. This would yield joint outcomes to the measurements in
$U$ with probabilities given by the marginal distribution $e_C|_U$.
The local compatibility requirement is that for any two contexts $C$ and $C'$
containing $U$, the marginals $e_C|_U$ and $e_{C'}|_U$ coincide. In other
words, $e_C$ and $e_{C'}$ \textit{agree on their overlap}, $e_C|_{C \cap C'} =
e_{C'}|_{C \cap C'}$. Thus, the probabilistic empirical behaviour observed for
the measurements in $U$ is the same regardless of whether they arise as a
subset of the context $C$ or of the context $C'$.

It is like saying that sketches of the
system should agree, or fit together, wherever they overlap. This is a property
that holds of empirical models that arise in physics. It both ensures a
compatibility with basic tenets of relativity in certain scenarios (see
discussion of Example~\ref{ex:ns}) and justifies the independent labelling of
individual measurements.

A more `\stress{global}' notion of compatibility asks that these locally
compatible probability distributions can be `glued' together into a global
probability distribution over joint outcomes to \stress{all} of the measurements at
once. Concretely, this global distribution would yield the empirical probability distribution $e_C$
when marginalised to each context $C$.

Existence of such a global distribution
allows one to think of the state of the system as a probabilistic mixture of
deterministic states that assign definite values to all observables.
Such deterministic states could be empirically inaccessible, which is why they are
often referred to as \stress{hidden} variables or as \stress{ontic} states (as
opposed to epistemic or empirical ones). But even if the agent is not in fact
allowed to perform all the measurements at once, they would have no
reason to doubt that these have
predetermined would-be outcome values at a more fundamental, though inaccessible,
level, and that these values exist independently of the agent's choice of
measurement context.
This is like asking that taken together the sketches provide
a coherent picture of the system as a whole -- or, to borrow Einstein's idiom,
asserting that the moon is there even when one is not looking at it
\cite{Pais1979:EinsteinQuantum,Mermin1985:moon}.
Yet, examples of violation of
the global consistency condition can be found. In the case of quantum systems,
they preclude such a conceptually neat and intuitive understanding of the
underlying states of the system, and thus of the physical world we inhabit.

Empirical models assign probabilities to observable events. When these
empirical probabilities satisfy local but not global consistency they are said
to be \emph{contextual}. As has been pointed out by Pitowsky
\cite{Pitowsky:Boole}, and further illuminated by Samson
\cite{Samson:Pitowsky}, one could say that contextuality was anticipated as far
back as in Boole's work on the `conditions of possible experience'
\cite{Boole1862}. Boole derived inequalities that the probabilities of
logically related events must satisfy in order to be, in our terms, globally
consistent. Of course, as the terminology indicates, Boole, sitting in his
rain-pelted quarters in what was then Queen's College, Cork, would have believed that
these conditions must be satisfied by real-world experiments. However, we now
know that this is not the case.

It was roughly a century later in the surroundings of CERN that John Bell, an
alumnus of another Queen's on the island of Ireland, showed that the laws of
quantum theory predict empirical models that exhibit what we here call
contextuality. This is the surprising content of the celebrated Bell
\cite{bell64} and Bell--Kochen--Specker \cite{bell66,ks} theorems. By now it
has been confirmed that the measurement statistics obtained in a variety of
experiments witness a violation of Boole's conditions, recast and renamed as
Bell inequalities: \eg
\cite{aspect1982experimental,kirchmair2009state,hensen2015loophole,shalm2015strong,giustina2015significant}.

A strong result connecting Boole to Bell, proved by Samson and Lucien Hardy,
shows that a complete set of inequalities characterising the polytope of
non-contextual empirical models can be derived from logical consistency
conditions \cite{abramskyhardy:logical}. This should strike the reader as
rather alarming: the observed behaviour of physical systems can apparently
satisfy an inconsistent set of logical formulae! The saving grace in this
drastic situation is that, since observations may only arise in context, one can
never observe enough events at once to manifest the inconsistency. In relation
to this Samson has often been heard to say that quantum systems skirt the
borders of logical paradox.\footnotemark

\footnotetext{This also brings to mind the words of \'Alvaro de
Campos, as if quantum systems were contriving to realise the motto from
his futurist phase, `to be sincere contradicting oneself'.
In the original: `Ser sincero contradizendo-se'.
From the poem \textit{Passagem das horas} (22-05-1916), in \cite{alvarodecampos}.}

It is worth stressing that contextuality is a property of the observable
behaviour of a system. It is independent of whatever theory, quantum or
otherwise, that might be conjectured to account for that behaviour. The power of
the aforementioned seminal theorems of quantum foundations is that
contextuality cannot simply be dismissed as some bizarre artifice of an
incomplete or inadequate mathematical formulation of a physical theory \cite{einstein1935can} -- it is
an unavoidable feature in \stress{any} theory that accounts for the empirical
behaviours that have been observed in experiments. This is what grants contextuality
its \stress{phenomenological} status, and what justifies its being tested
experimentally. No matter how skilled the sketch artists are, a picture of
the whole elephant will always elude them -- a true elephant in the room.

\subsection*{Laying foundations for quantum information}

While Bell's theorem laid bare this counter-intuitive aspect of quantum
behaviours, less conclusive clues about the theory's counter-intuitive nature
had already been noticed for some time and had been a source of philosophical
or interpretational unease for several of its founding contributors: \eg the
EPR paradox \cite{einstein1935can}, or the Schr\"odinger's cat
thought experiment \cite{schrodinger1935gegenwartige}.

But surprising, or
troubling, as quantum theory may be in this respect, perspectives on the matter have
broadened in recent decades.
It was only a matter of time before one of the
great slogans of programming and hacker culture came to be applied to
quantum theory too: `it's not a bug, it's a feature!'.
The idea is that 
the use of quantum systems to carry and manipulate information opens up the
possibility of exploiting their non-classical \stress{weirdness} to attain
advantage in computational or other information-processing tasks.

This perspective has led to a renewal of interest in foundational aspects of
quantum theory, as we strive for a systematic and effective understanding of
quantum advantage. To get there requires being able to reason about information
processing at the quantum level, with all the common-sense-defying possibilities
it offers.

Contextuality, as the archetypal non-classical feature of quantum
phenomenology, has drawn particular attention in this regard. This has led to
the development of general, structural frameworks for treating contextuality,
including the sheaf \cite{ab}, graph \cite{csw2014graphtheoretic},
hypergraph \cite{afls} and contextuality-by-default \cite{dzhafarov2014contextuality} approaches.\footnotemark\ These interrelated frameworks
go well beyond the \textit{ad hoc} analysis of particular instances of the
phenomenon or the hunt for `small proofs' of the Bell--Kochen--Specker theorem,
topics that had been the main preoccupation of `all hitherto exisiting'
literature on contextuality. Instead, the focus is on developing
a general theory that distils the essence of contextuality and reveals its
structural and compositional aspects. The framework introduced by Samson and
Adam Brandenburger \cite{ab}\footnotemark\ is a particularly potent distillate
that emphasises these aspects and characterises contextuality as obstructions
to the passage from local to global. The framework gives elegant expression to
this idea through the mathematics of sheaf theory.

\addtocounter{footnote}{-1}
\footnotetext{Within these frameworks, the phenomenon of non-locality as
discussed by Bell may be seen as a special case of contextuality that arises
in distributed or multi-party scenarios. Note that locality in Bell's sense
differs from our use of the term earlier in relation to local compatibility.}
\stepcounter{footnote}
\footnotetext{Subsequent developments are to be found in many papers including
in particular the local-consistency-versus-global-inconsistency picture in
\cite{contextualitycohomologyparadox}.}

These structural frameworks have provided the basis for a range of recent results
that establish links between contextuality and quantum advantage
\cite{raussendorf2013contextuality,howard2014contextuality,abramsky2017contextual,bermejo2017contextuality,abramsky2017quantum,mansfield2018quantum,karanjai2018contextuality},
which have prompted further investigation into the r\^ole of contextuality as a
resource.

\subsection*{Resources: from objects to transformations}

In previous work with Samson we built upon the sheaf-theoretic framework to
develop a compositional resource theory of contextuality
\cite{abramsky2017contextual,karvonen2018categories,ourlics:comonadicview} (cf.
\cite{amaral2017noncontextual,amaral:resourcetheory}). The central notion is
that of simulation between empirical behaviours, which rests on an underlying
notion of experimental procedure between black boxes.

The scenario to have in mind is the following. Imagine an agent who has access
to a black box of the kind described before. They can perform experiments by
interacting with the box. A recipe for such an experiment specifies the agent's
actions surrounding their interaction with the box.

As we are only considering
single-use boxes, the instructions are of a rather simple form: they specify a
set $C$ of compatible measurements to be performed and a post-processing
function mapping the set of possible joint outcomes of these measurements to a
(new) set of outcomes for the experiment. For example, a recipe for an experiment
could read something
like `perform the compatible, dichotomic measurements $a$ and $b$
simultaneously, obtain their (Boolean) outcome values, and combine them with
\textsc{xor} to yield the (also dichotomic) result of the experiment'. One
could also include probabilistic mixtures of such deterministic experiments.

The agent may then package a collection of such experiments as a new (outer)
box, offering these derived measurements to external users through an
interface (Figure~\ref{fig:agent}). They must of course be careful to ensure that this external
interface only labels a set of (new) measurements as compatible if the agent is able
to perform all the corresponding experiments in parallel without running into
the compatibility limitations of the original black box.

In a slightly
synecdochical abuse of terminology, we call such a collection of instructions
for experiments an \emph{experimental procedure} (or just procedure, for short).
It is a procedure for using
the original box in building the new, wrapper box. Through such a procedure,
any behaviour (\ie empirical model) of the original box is converted into a
behaviour of the new box. However, it should be kept in mind that not every
possible behaviour of the new type is necessarily achievable in such a manner.

Setting the physics language aside for a moment, the kind of situation just described
will be very familiar to computer scientists. It arises all the time, for example, in
modular programming: one uses the functions or methods provided through an API
by a library (or by an object in object-oriented programming), whose implementation might be
hidden from us, in order to implement new functions or methods, which may in
turn be packaged as a new program module or library (or as a `wrapper' object) and
provided to other programmers.

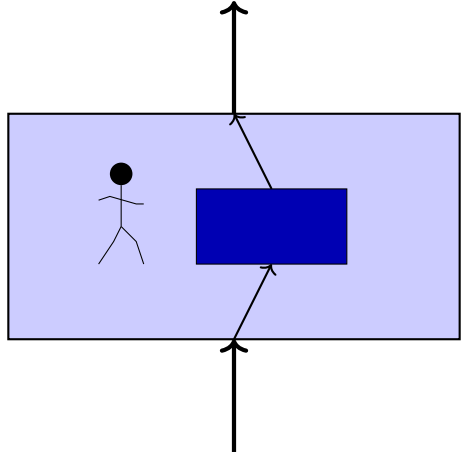
\begin{figure}
\begin{center}
\begin{tikzpicture}
\fill[fill=blue!20!white, draw=black,thick] (-3,-1.5) -- (3,-1.5) -- (3,1.5) -- (-3,1.5) --cycle; 
\fill[fill=blue!70!black, draw=black] (-.5,-.5) -- (1.5,-.5) -- (1.5,.5) -- (-.5,.5) --cycle; 
\draw[->, ultra thick] (0,-3) -- (0,-1.5);
\draw[->, ultra thick] (0,1.5) -- (0,3);
\draw[->, thick] (0,-1.5) -- (.5,-.5);
\fill (-1.5,.7) circle  (.15) ;
\draw (-1.5,.55) -- (-1.5,0) ;
\draw  (-1.5,0) -- (-1.3,-.2) -- (-1.2,-.5);
\draw  (-1.5,0) -- (-1.6,-.2) -- (-1.8,-.5);
\draw  (-1.2,.3) -- (-1.3,.3) -- (-1.65,.4) -- (-1.8,.35);
\draw[<-, thick] (0,1.5) -- (0.5,.5);
\end{tikzpicture}
\caption{An experimental procedure uses a black box of one type to simulate a black box of
another type.
\label{fig:agent}
}
\end{center}
\end{figure}

\subsection*{Simulations: inside-out and outside-in}

There is an alternative way to think about the notion of procedure, approaching
it `from the outside in', rather than  `from the inside out'. The difference is
that between synthesis and analysis: instead of `$A$ is converted to $B$', one
says that `$B$ is simulated from $A$' or that `$B$ reduces to $A$'.

From the
perspective of someone external to the new box, for whom it is just a black-box
system, the procedure followed by an internal agent may be posited as a
(typically incomplete) explanatory device for the empirical behaviour of the
system. This could be helpful, for example, in reducing the box's behaviour to
the behaviour of another black box, perhaps one of a simpler kind or one that is
more familiar and well understood. Indeed, as the agent's actions are fully
classical (non-contextual),
they essentially describe a simulation of the empirical behaviour of
the outer box from the behaviour of a posited inner box.

It may be noted that the agent in this story plays a double r\^ole. In relation
to the original black box they play the r\^ole of a user or of the
environment -- they may choose measurements and then obtain the respective outcomes.
But in relation to the wider world outside of the new box, they play the
r\^ole of a system -- they are prompted with measurement requests and must
produce an outcome.\footnotemark

\footnotetext{One cannot help but be reminded of the reversal of Player
and Opponent r\^oles in games of function type in game semantics. Player in
such a game plays simultaneously, and inter-dependently, two simpler games,
corresponding to the output and the input types, and adopts a different r\^ole
in each of them \cite{abramsky1999game}.}

\subsection*{To what end?}

In summary, taking the perspective of resource theory, the emphasis is no
longer placed on the individual black box behaviours but on simulations between
different instances of such behaviours. There are a number of reasons that
recommend this approach in the study of contextuality.

\begin{itemize}

  \item At least implicitly or informally, the notion of simulation has been
    central to a number of results in the non-locality and contextuality
    literature, \eg \cite{barrettpironio2005,barrettetal2005,jonesmasanes2005interconversions,dupuis2007nouniversalbox,allcock2009closedsets,forsterwolf2011bipartite}.
    A more explicit and structural formalisation of the concept can be useful
    for proving further results of this kind. Good examples of this are the
    no-copying result in \cite[Theorem 22]{ourlics:comonadicview} and the more general no-catalysis result in \cite{karvonen2021neither}.
    
  \item More broadly, resource theories provide a versatile setting that has
    already proved useful in exploring a variety of other resources, such as
    entanglement, in the field of quantum information, \eg
    \cite{horodecki2013quantumness,chitambar2019resource}.
    A more general
    mathematical framework encompassing these and many other examples can be
    found in \cite{coecke2016mathematical,fritz2017resource}. As with
    classical notions of reducibility, the existence of a simulation between
    one behaviour and another provides a way of comparing their degrees of
    contextuality. The induced preorder is richer, more expressive, and
    provides more structural and fine-grained distinctions than the linear
    order induced by a `measure of contextuality' such as the contextual
    fraction or others \cite{abramsky2017contextual,grudka2014quantifying}.
    In fact, there is not just one but a hierarchy of such preorders,
    analogous to the Abramsky--Brandenburger hierarchy of probabilistic,
    logical, and strong contextuality \cite{ab}. These preorders are determined by how
    flexible a notion of simulation one one wishes to consider. Ultimately,
    it is these preorders that should be regarded as being the fundamental
    concepts, while the various `measures' (or at least the linear orders
    they induce) are just somewhat arbitrary linear extensions of them.

    \item The resource-theoretic framework of simulations provides a unifying
      language in which several important concepts from the contextuality
      literature find common expression. The most striking examples are
      provided by two `corner cases': non-contextual models and Bell
      inequalities.
    
    \begin{itemize}
    
        \item Within our framework, non-contextual behaviours are naturally
          characterised as those that can be simulated from `nothing', \ie
          from the unique model on the empty black box that allows for no
          measurements. More precisely, non-contextual models on any given box
          are in one-to-one correspondence with (probabilistic) procedures from
          the empty box.\footnotemark
        
        \footnotetext{Note that in
        \cite{ourlics:comonadicview,abramsky2019simulations}, where we first
        made an observation to this effect, the source of these simulations was
        the trivial scenario with one measurement and a single outcome. The
        difference arises due to the kind of simulations we allow in each case.
        It is related to the fact that `the' singleton set is the terminal object
        in the category of sets and functions (\ie there is exactly one
        function from any given set to a singleton set) whereas in the
        category of sets and relations the terminal object is the empty set.}
        
        \item A novel, and perhaps surprising, fact that we explore in
          this chapter is that the well-studied notions of Bell inequalities
          and non-local games, suitably generalised to apply not just to
          non-locality but to contextuality more broadly, can also be described
          as experimental procedures.
          The key r\^ole here is played by the box `$\dice{2}$' that admits a single measurement
          with binary outcome. Its possible behaviours are thus
          characterised by a single number in the unit interval, specifying the
          probability that the outcome is $1$.
          A procedure from a given box to the box $\dice{2}$
          corresponds to a (normalised) Bell functional (the `left-hand side' of a Bell or non-contextuality inequality) 
          or to a non-local or contextual game.
          An empirical model on the given box is mapped through such a procedure
          to a number in $[0,1]$ -- this is the value of the functional or the
          winning probability of the game.
        
    \end{itemize}

    \item Last but not least, the shift in emphasis from single boxes to
      transformations between them is very much in the spirit of category
      theory. Category theory stems from the recognition that mathematical
      objects, to reappropriate and paraphrase the words of John
      Donne\footnote{From \textit{Meditation XVII}, in \cite{donne}.}, are best
      studied not as islands entire of themselves but as pieces of the
      continent, parts of the main. The same reasoning can be applied to
      `systems', physical or otherwise. In other words, what matters is not so
      much what things are, but how they stand in relation to one another. This
      perspective has proved to be immensely fruitful not only in mathematics
      but also in computer science, as a way to systematise concepts, recognise
      structure, and build bridges between \textit{a priori} disparate
      subjects. Having a category -- of boxes and procedures, or of behaviours
      and simulations, in this case -- provides us with a powerful tool to
      think about the structure of the theory at hand. As an example, the
      results in this chapter were in a sense motivated from categorical
      considerations -- but nevertheless they may still be stated more
      concretely, without the high-flown language.
    
\end{itemize}

\subsection*{Summary of results}

\subsubsection*{A question: from objects to maps}

We now turn to an overview of the main results in this chapter. A
procedure $f$ that uses a box of type $S$ to build one of type $T$ determines a
\stress{state transformation}, a function $\fdec{\EMP(f)}{\EMP(S)}{\EMP(T)}$ mapping any
empirical model for $S$ to one for $T$. It is natural to ask which functions
arise in this way. We are thus led to the following question:
\begin{equation}\label{q:proc}\tag{A}
\begin{minipage}{0.8\textwidth}
\textit{Given a function $\fdec{F}{\EMP(S)}{\EMP(T)}$, can it be realised by an
experimental procedure? \Ie is there a procedure $\fdec{f}{S}{T}$ such that
$F = \EMP(f)$?}
\end{minipage}
\end{equation}
In light of the remarks made above, when $S$ is the empty box, its set of empirical models $\EMP(S)$ is a
singleton. So, a function $\EMP(S) \to \EMP(T)$ simply picks out an element of
$\EMP(T)$. The question then reduces to a more familiar one:
\begin{equation}\label{q:ncmodel}\tag{B}
\begin{minipage}{0.8\textwidth}
    \textit{Given an empirical model, is it non-contextual?}
\end{minipage}
\end{equation}

This latter question of detecting contextuality of a given empirical model,
\ie from the observable probabilistic behaviour of a black box, has been
extensively studied and is by now well understood.
It can be phrased as a linear programming problem: the noncontextual empirical models form a convex polytope, whose vertices are the deterministic models corresponding to global assignments of outcomes to all measurements, and the question is to determine whether a given empirical model belongs to the polytope.

The more general question
\eqref{q:proc} can be seen as a relativised version of \eqref{q:ncmodel}.
The generalisation is somewhat in the spirit of the `relative point of view'  \cite{nlab:relativepov},
advocated by Grothendieck in the context of
algebraic geometry. The idea is that the consideration of properties of objects
gives way to the consideration of properties relativised to morphisms. This
results in fact in a strict generalisation, as properties of objects can be
then viewed as properties of the unique morphism from that object to the
terminal object.

\subsubsection*{An answer: from maps to objects}

We provide an answer to question \eqref{q:proc} in the form of necessary and
sufficient conditions for a map between empirical models to be realisable by an
experimental procedure (Theorem~\ref{thm:mainthm}). We now give a brief
summary of the main ingredients.

As a preliminary observation, note that the set $\EMP(S)$ of empirical models,
or possible behaviours, of any given box $S$ is a convex set. In other words, it is closed
under probabilistic mixtures: for any two conceivable behaviours, a mixture of
them is also a behaviour. Operationally, one can think that whoever prepares
the single-use copies of the black box does so by first throwing a (biased)
coin and then preparing it to behave according to one or the other. Moreover,
any behaviour transformation that is induced by a procedure preserves convex
combinations. We may thus restrict attention, in answering question
\eqref{q:proc}, to functions $\fdec{F}{\EMP(S)}{\EMP(T)}$ with this property.

Past this initial hurdle, we get to the crux of our characterisation. The short
summary is that \eqref{q:proc} is answered by reducing it back to
a question closely related to
\eqref{q:ncmodel}.
This is achieved by representing behaviour transformations as behaviours themselves.
Modulo a few details, it goes as follows.
For any boxes $S$ and $T$, we build a new box of (function) type, $[S,T]$.
Its possible behaviours, the empirical models in $\EMP([S,T])$,
induce convex-combination-preserving transformations 
$\EMP(S) \to \EMP(T)$.
Those transformations turn out to be realisable as procedures $S \to T$ precisely when they are induced by non-contextual empirical models.

In order to make this work, however, we need a refinement of our notion
of box. It consists of an added specification that restricts the allowed
behaviours of the box, somewhat akin to a type or class invariant
in programming. One may think of it as a contracted promise that comes
attached to the box interface and which any behaviour must fulfil. Concretely,
this is given as a \stress{predicate}, implemented as a two-valued
experiment to which the allowed behaviours are pledged to always return the
outcome $1$. In terms familiar in the quantum foundations and information literature, the
behaviours must be perfect strategies for a given (non-local or contextual) game.
So, the construction of the box $[S,T]$ also involves specifying such a
predicate, $g_{S,T}$.

The box $[S,T]$ has the following interface: measurements (and their compatibility structure)
correspond to those of $T$, while outcomes specify procedures for interacting
with the box $S$ in order to obtain an outcome in $O_T$ (the outcome set
of $T$). The r\^ole of
the predicate $g_{S,T}$ is to ensure that these procedures never lead to an
invalid use of $S$. In particular, following any set of procedures obtained as joint
outcomes from the box $[S,T]$ should only ever require measuring a
context of compatible measurements in $S$.

Thus, an empirical model of this box determines a map transforming empirical models of $S$ to empirical models of $T$.
The full answer to \eqref{q:proc} is then stated as follows:
a map $\fdec{F}{\EMP(S)}{\EMP(T)}$ that preserves convex combinations
is realised by a procedure $S \to T$
if and only if 
it is induced by an empirical model of $[S,T]$ that
satisfies the predicate $g_{S,T}$ and is non-contextual.
The question of whether a convex-combination-preserving map is realised by a procedure is thus that of membership in a certain image of the polytope of noncontextual models, also a linear programming problem.
Moreover, for the case of deterministic procedures, we obtain a sharper characterisation in Theorem~\ref{thm:characterizingdeterministicprocedures}:
roughly speaking, we probe the map $\fdec{F}{\EMP(S)}{\EMP(T)}$ along measurement contexts in $T$, and $F$ is realized by a deterministic procedure if and only if each of those composites is already determined by a measurement context in $S$.

The construction outlined above suggests that one regard the pair
$\tuple{[S,T],g_{S,T}}$ as somewhat akin to a function space, or more accurately a
space of procedures between black boxes. And indeed, if one considers the
category whose objects are boxes with invariant specifications, then this
construction provides it with a closed structure
\cite{eilenbergkelly:closedcats,laplaza:closedcats,Street:cosmoi,manzyuk:closedcats} (Theorem~\ref{thm:closure}).
This is an internalisation of the notion of hom-set, whereby the collection of
morphisms between any two objects is in a precise sense represented as an object in the
category itself.\footnotemark

\footnotetext{At the risk of overstretching the use of poetic metaphor, one
is reminded of Blake's, `[to] hold infinity in the palm of your hand, and
eternity in an hour.' From the poem `Auguries of Innocence' (c. 1803), in The Ballads (or Pickering) Manuscript, published in \cite{blake}.}

\section{The framework: objects}\label{sec:framework-objects}

We now introduce more carefully the framework that will be used throughout this chapter.
The basic setting that we outline in this section is that of the sheaf-theoretic approach to contextuality introduced by Abramsky \& Brandenburger \cite{ab}. The formalism provides general notions of measurement scenarios and empirical models,
respectively the \stress{types} or \stress{interfaces} and the \stress{states} or \stress{behaviours} of our black boxes.
In addition, a central r\^ole will be played by morphisms between these basic objects,
which were introduced in our previous work to underlie a resource theory of contextuality \cite{karvonen2018categories, abramsky2017contextual, ourlics:comonadicview},\footnotemark\
on which we elaborate further in the next section.

\footnotetext{In fact, the notion of morphism considered here differs slightly from those of \cite{karvonen2018categories} and \cite{ourlics:comonadicview}. It is the appropriate notion to capture \stress{non-adaptive} simulations. The minor discrepancies are discussed and explained in Remark~\ref{rem:whynotadaptive} at the end of Section~\ref{ssec:simulations}.}

\subsection{Measurement scenarios}

The first ingredient is the notion of measurement scenario.
This is a specification of the interface of a black box:
which queries (measurements) are available to the agent and
which type of response (outcome) they can elicit.
As such, measurement scenarios serve as the \stress{types}
(in the computer science sense) in our framework.

As discussed in the introduction,
the compatibility structure of measurements plays
a central r\^ole.
The scenario specifies which subsets of measurements form
contexts and can thus be jointly performed.
Note that if $C$ is a context, then any subset $U$ of $C$ must also be a context: 
the agent might as well jointly perform all the measurements in $C$
and disregard the outcomes of those in $C \setminus U$.
Moreover, it must be possible to perform any individual measurement
on its own.\footnotemark\
These two desiderata are neatly encapsulated in the notion of
abstract simplicial complex,
which is important in algebraic topology and combinatorics.

\footnotetext{Otherwise it would be difficult to justify calling it a measurement in the first place.}

 \begin{definition}
 An (abstract) \emph{simplicial complex} $\Sigma$ on a set of vertices $X$ is a family of finite subsets of $X$, called faces,
 which is non-empty, downwards-closed, and contains all the singletons. That is:
\begin{itemize}
    \item $\emptyset \in \Sigma$;
    \item for all $x \in X$, $\enset{x} \in \Sigma$;
    \item if $\sigma\in\Sigma$ and $\sigma' \subseteq \sigma$, then $\sigma' \in \Sigma$.
\end{itemize}
\end{definition}

Besides the measurements and their compatibility structure,
a measurement scenario must also specify the available outcomes for each measurement.
Putting these ingredients together, we arrive at the following definition.

\begin{definition}
A \emph{measurement scenario} is a triple $S=\scen{S}$ where:
\begin{itemize}
\item $X_S$ is a finite set ,whose elements are called \emph{measurements};
\item $O_S = (O_{S,x})_{x \in X_S}$ is a family that specifies, for each measurement $x \in X_S$, a finite non-empty set $O_{S,x}$, whose elements are the \emph{outcomes} of $x$;
\item $\Sigma_S$ is a simplicial complex on $X_S$, whose faces are called the \emph{measurement contexts}.
\end{itemize}
\end{definition}

We now introduce a couple of simple examples that will be useful for illustrating the main concepts.

\begin{example}\label{ex:trianglescen}
The (Auld) \emph{Triangle} scenario $\triangle$ has three measurements (or queries),
\[ X_{\triangle} \;\defeq\; \{\, \textit{`pint?'}, \quad \textit{`wine?'}, \quad \textit{`grub?'} \,\} \Mdot\]
The contexts admit no more than a pair of measurements to be performed at once; \ie the maximal faces of $\Sigma_\triangle$ are
\[ \{\textit{`pint?'},\; \textit{`wine?'}\}, \quad \{\textit{`wine?'},\; \textit{`grub?'}\}, \quad \{\textit{`pint?'},\; \textit{`grub?'}\} \Mdot\]
The outcomes (or responses) to each measurement $x \in X_\triangle$
take values in a two-element set,
\[ O_{\triangle,x} \;\defeq\; \{\, \textit{`yes'}, \quad \textit{`no'}\,\} \Mdot \]

This scenario can be found, albeit with different labels
for measurements and outcomes, in many other articles. It is the simplest
scenario in which contextuality can arise. In particular, it is
the scenario for Specker's parable of the overprotective seer \cite{liang2011specker}.
\end{example}

\begin{example}\label{ex:fourcandles}
The \emph{Four Candles} scenario $\square$ has four
measurements (queries),
\[ X_{\square} \;\defeq\; \{\, \textit{`SammyA'}, \quad \textit{`GeorgieB'}, \quad \textit{`JohnnyB'}, \quad \textit{`EvilG'} \,\} \Mdot\]
The maximal contexts, \ie the maximal faces of $\Sigma_\square$, are
\[
\{\textit{`SammyA'},\; \textit{`GeorgieB'}\}, \quad \{\textit{`GeorgieB'},\; \textit{`JohnnyB'}\}, \quad \{\textit{`JohnnyB'},\; \textit{`EvilG'}\}, \quad \{\textit{`SammyA'},\; \textit{`EvilG'}\} \Mdot
\]
The outcomes (or responses) to each measurement $x \in X_\square$
take values in a two-element set,
\[ O_{\square,x} \;\defeq\; \{\textit{`grain'}, \quad \textit{`grape'}\} \Mdot\]
We might think of the scenario as occurring when an agent needs to buy a
round of drinks for his four friends, but only gets to interrupt the conversation
enough to ask neighbouring pairs what they would like.

This scenario, again with different measurement and outcome
labels, is better known in the physics literature as the CHSH scenario. It is the scenario concerned by
perhaps the most well-known of Bell inequalities, originally formulated by Clauser, Horne,
Shimony, and Holt \cite{clauser1969proposed}.
\end{example}

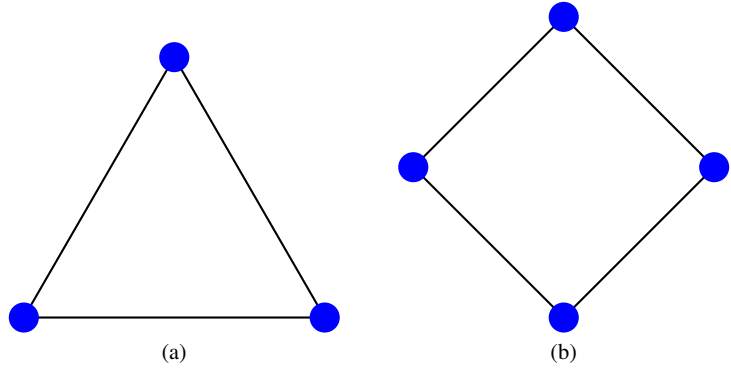
\begin{figure}[tbph]
\begin{center}
\begin{tabular}{cc}
\begin{tikzpicture}[scale=2]
  \path (-1, -1 ) coordinate (A1);
  \path ( 1, -1 ) coordinate (A2);
  \path ( 0, .732 ) coordinate (B1);

  \begin{scope}[ thick,black]
    \draw (A1) -- (B1) -- (A2) -- cycle;
  \end{scope}
 
  \fill[blue] (A1) circle(.1) ;
  \fill[blue] (A2) circle(.1) ;
  \fill[blue] (B1) circle(.1) ;
\end{tikzpicture}
\qquad & \qquad
\begin{tikzpicture}[scale=2]
  \path (-1, 0 ) coordinate (A1);
  \path ( 1, 0 ) coordinate (A2);
  \path ( 0,-1 ) coordinate (B1);
  \path ( 0, 1 ) coordinate (B2);

  \begin{scope}[ thick,black]
    \draw (B1) -- (A2) -- (B2) -- (A1) -- (B1);
  \end{scope}
 
  \fill[blue] (A1) circle(.1) ;
  \fill[blue] (A2) circle(.1) ;
  \fill[blue] (B1) circle(.1) ;
  \fill[blue] (B2) circle(.1) ;
\end{tikzpicture}
\\ (a) \qquad & \qquad (b)
\end{tabular}
\caption{Simplicial complexes representing measurement compatibility in (a) the
$\triangle$ scenario and (b) the $\square$ scenario.}
\end{center}
\end{figure}

We now introduce an extremely simple family of measurement scenarios
that will play a significant r\^ole in our treatment to follow.

\begin{example}\label{ex:nn}
For each positive integer $n$,
we denote by $\nn$ the scenario that
has a single measurement with $n$ possible outcomes, \ie
\[
X_\nn \,\defeq\, \enset{*}
\Mand
O_{\nn,*} \,\defeq\, \enset{0,\ldots, n-1}
\Mdot
\]
Note that $\Sigma_\nn$ is uniquely determined.
\end{example}

\subsection{Empirical models}

Having formalised how to describe the interface of the black boxes of interest, we now turn our attention to their possible behaviours.
These are described by empirical models on the given measurement scenario.

We start by considering the atomic empirical events
that represent a single interaction with a black box.

\begin{definition}
Let $S$ be a scenario.
For any $U \subseteq X_S$, we write \[\Ev_S(U) \defeq \prod_{x \in U} O_x \]
for the set of assignments of outcomes to each measurement in the set $U$.
When $U$ is a valid context, these are the joint outcomes one might obtain for the measurements in $U$.

The mapping above extends to a sheaf  $\fdec{\Ev_S}{\mathcal{P}(X_S)\op}{\cat{Set}}$,
called the \emph{event sheaf},
with restriction maps \[\fdec{\Ev_S({U\subseteq V})}{\Ev_S(V)}{\Ev_S(U)}\] given by the obvious projections.
When it does not give rise to ambiguity,
we often omit the subscript and denote the event sheaf more simply by $\Ev$.
\end{definition}

An account of the black box's behaviour ought to specify
its response to any allowed interaction (measurement context),
as a probability distribution over corresponding atomic events.

Let $\Dist$ denote the (discrete) probability distribution functor.
That is, for any set $X$,
$\Dist(X)$ is the set of (finitely-supported) probability distributions on $X$,
and for any function $\fdec{f}{X}{Y}$,
$\Dist(f)$ maps a distributions $d$ on $X$
to its push-forward along $f$, a distribution on $Y$.
Similarly, we denote by
$\Dist_\BB$ the Boolean distribution functor;
\ie $\Dist_\BB(X)$ is the set of Boolean-valued distributions on $X$.
Note that this is the covariant non-empty powerset functor.\footnote{Note that for Boolean distributions the restriction to finite support is unnecessary. But here we only deal with finite sets of events, anyway.}

\begin{definition}
A (probabilistic) \emph{empirical model} $e$ on a scenario $S$, written $e \colon S$,
is a compatible family for $\Sigma$ on the presheaf $\Dist \circ \Ev_S$.
More explicitly, it is a family $\family{e_\sigma}_{\sigma \in \Sigma_S}$ where,
for each $\sigma\in\Sigma_S$,
\[e_\sigma \;\in\; \Dist \circ \Ev(\sigma) \,=\, \Dist\left( \prod_{x \in \sigma}O_x \right)\]
is a probability distribution over the joint outcomes for the measurements in the context $\sigma$.
These distributions are required to be compatible in the sense that
for any $\sigma, \tau \in \Sigma_S$ with
$\tau \subseteq \sigma$,
one must have 
$e_{\tau} = e_{\sigma}|_\tau$,
where \[e_{\sigma}|_\tau \,\defeq\, \Dist \circ \Ev(\tau \subseteq \sigma)(e_\sigma) \]
is the marginalisation of $e_\sigma$ to the smaller context $\tau$;
\ie for any $t \in \Ev(\tau)$,
\[e_\tau(t) \,\defeq \sum_{s \in \Ev(\sigma), s|_\tau = t} e_\sigma(s)\Mdot\]
The set of all probabilistic empirical models on $S$ is denoted $\EMP(S)$.
\end{definition}

Note that compatibility can equivalently be expressed as the requirement that for all facets (\ie maximal contexts) $C$ and $C'$ of $\Sigma$,
\[e_C|_{C\cap C'} = e_{C'}|_{C\cap C'} \Mdot\]
Compatibility holds for all quantum-realisable behaviours \cite{ab}.
It generalises a property known as \emph{no-signalling} \cite{ghirardi1980general}, as will be illustrated shortly in Example~\ref{ex:ns}.

\begin{example}\label{ex:trianglemodel}
The following table provides an example of an empirical model in the $\triangle$ scenario.
The rows of the table specify probability distributions for the maximal contexts, and all
other distributions can simply be obtained by marginalisation.
\[
\begin{array}{ll||c|c|c|c}
  ~ & ~ & \textit{`yes'}\,\textit{`yes'} & \textit{`yes'}\,\textit{`no'} & \textit{`no'}\,\textit{`yes'} & \textit{`no'}\,\textit{`no'} \\
  \hline\hline
  \textit{`pint?'}    & \textit{`wine?'} & 0 & \sfrac{1}{2} & \sfrac{1}{2} & 0 \\
  \textit{`wine?'} & \textit{`grub?'}    & \sfrac{1}{2} & 0 & 0 & \sfrac{1}{2} \\
  \textit{`pint?'}    & \textit{`grub?'}    & \sfrac{1}{2} & 0 & 0 & \sfrac{1}{2} \\
\end{array}
\]
This is a black box behaviour that solves the following conundrum: 
one might wish to have a glass of wine if and only one is also eating; ditto for a pint of beer. However, one might want to have either beer or wine but not both.
An empirical model realising this scenario permits all of these constraints to be satisfied as long as no more than two questions are asked at once.
\end{example}

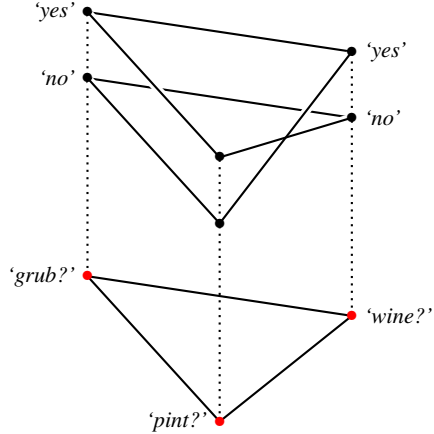
\begin{figure}[tbph]
    \centering
\begin{tikzpicture}[scale=1,x=50pt,y=50pt,thick,label distance=-0.25em,baseline=(current bounding box.center)]
\node (e) at (2,-0.3) {};
\node (n) at (1,0.8) {};
\node (T) at (0,1.5) {};
\node (u) at (0,0.5) {};
\node (uu) at (0,0.3) {};
\node [inner sep=0em] (a) at (0,0) {};
\node [inner sep=0em] (b) at ($ (a) + (e) $) {};
\node [inner sep=0em] (a') at ($ (a) + (e) + (n) $) {};
\node [inner sep=0em] (b') at ($ (a) + (n) $) {};
\node [inner sep=0em] (a1) at ($ (a) + (T) $) {};
\node [inner sep=0em] (a0) at ($ (a1) + (u) $) {};
\node [inner sep=0em] (b0) at ($ (b) + (T) $) {};
\node [inner sep=0em] (b2) at ($ (b0) + (u) $) {};
\node [inner sep=0em] (a'1) at ($ (a') + (T) $) {};
\node [inner sep=0em] (a'0) at ($ (a'1) + (u) $) {};
\node [inner sep=0em] (b'1) at ($ (b') + (T) $) {};
\node [inner sep=0em] (b'0) at ($ (b'1) + (u) $) {};

\draw (a') -- (b) -- (b') -- (a');

\draw [dotted] (b2) -- (b);

\draw [dotted] (a'0) -- (a');
\draw [dotted] (b'0) -- (b');

\node [inner sep=0.1em,label=left:{$\textit{`pint?'}$}] at (b) {{\color{red}$\bullet$}};
\node [inner sep=0.1em,label=right:{$\textit{`wine?'}$}] at (a') {{\color{red}$\bullet$}};
\node [inner sep=0.1em,label=left:{$\textit{`grub?'}$}] at (b') {{\color{red}$\bullet$}};

\draw [line width=3.2pt,white] (a'1) -- (b'1);
\draw [line width=3.2pt,white] (a'0) -- (b'0);
\draw (a'1) -- (b'1);
\draw (a'0) -- (b'0);
\draw [line width=3.2pt,white] (a'0) -- (b0);
\draw [line width=3.2pt,white] (a'1) -- (b2);
\draw (a'0) -- (b0);
\draw (a'1) -- (b2);
%
\draw [line width=3.2pt,white] (b2) -- (b'0);
\draw [line width=3.2pt,white] (b0) -- (b'1);
\draw (b2) -- (b'0);
\draw (b0) -- (b'1);

\node [inner sep=0.1em] at (b0) {$\bullet$};
\node [inner sep=0.1em] at (b2) {$\bullet$};

\node [inner sep=0.1em,label=right:{$\textit{`yes'}$}] at (a'0) {$\bullet$};
\node [inner sep=0.1em,label=right:{$\textit{`no'}$}] at (a'1) {$\bullet$};
\node [inner sep=0.1em,label=left:{$\textit{`yes'}$}] at (b'0) {$\bullet$};
\node [inner sep=0.1em,label=left:{$\textit{`no'}$}] at (b'1) {$\bullet$};
\end{tikzpicture}
    \caption{Bundle diagram for the empirical model of Example \ref{ex:trianglemodel}.}
    \label{fig:bundle}
\end{figure}

Another useful way to represent the above empirical model is as a bundle diagram
(Figure~\ref{fig:bundle}).
This method of representing models was introduced in \cite{contextualitycohomologyparadox}.
In the `downstairs' part of this diagram lives the simplicial complex $\Sigma_\triangle$ that captures measurement compatibility.
Above each measurement is a fibre consisting of its available outcome values.
Finally, the faces (just edges in this case, as the complex is $1$-dimensional)
in the `upstairs' of the diagram represent possible joint outcomes.
Note that while the empirical model associates probabilities with the joint outcomes,
the bundle representation in general carries less information,
as it simply includes faces for
those joint outcomes that have non-zero probability and omits the others.
As we will elaborate on shortly, in
many cases such (possibilistic) information about the support of the probability distributions already suffices to pick out interesting features of an empirical model.
As it happens, in this specific example,
the supports of the distributions uniquely determine the probabilities, by compatibility.

\begin{example}\label{ex:pr}
Similarly, the following table provides an example of an empirical model in the $\square$ scenario.
Here we notice that all context pairs give correlated outcomes except the context
consisting of $\textit{`SammyA'}$ and $\textit{`EvilG'}$, whose outcomes are anti-correlated.
\[
\begin{array}{ll||c|c|c|c}
  ~ & ~ & \textit{`grape'}\,\textit{`grape'} & \textit{`grape'}\,\textit{`grain'} & \textit{`grain'}\,\textit{`grape'} & \textit{`grain'}\,\textit{`grain'} \\
  \hline\hline
  \textit{`SammyA'} & \textit{`EvilG'}    & 0 & \sfrac{1}{2} & \sfrac{1}{2} & 0 \\
  \textit{`SammyA'} & \textit{`GeorgieB'} & \sfrac{1}{2} & 0 & 0 & \sfrac{1}{2} \\
  \textit{`JohnnyB'}& \textit{`EvilG'}    & \sfrac{1}{2} & 0 & 0 & \sfrac{1}{2} \\
  \textit{`JohnnyB'}& \textit{`GeorgieB'} & \sfrac{1}{2} & 0 & 0 & \sfrac{1}{2} \\
\end{array}
\]
\end{example}

\begin{example}\label{ex:ns}
The following table gives an empirical model on the $\square$ scenario,
whose measurements and outcomes have been relabelled.
This is the \emph{CHSH model}.
\[
\begin{array}{ll||c|c|c|c}
  ~ & ~ & 0\,0 & 0\,1 & 1\,0 & 1\,1 \\
  \hline\hline
  X_A & X_B    & \sfrac{1}{2} & 0 & 0 & \sfrac{1}{2} \\
  X_A & Y_B    & \sfrac{3}{8} & \sfrac{1}{8} & \sfrac{1}{8} & \sfrac{3}{8} \\
  Y_A & X_B    & \sfrac{3}{8} & \sfrac{1}{8} & \sfrac{1}{8} & \sfrac{3}{8} \\
  Y_A & Y_B    & \sfrac{3}{8} & \sfrac{1}{8} & \sfrac{1}{8} & \sfrac{3}{8} \\
\end{array}
\]
Here we can understand the black box as being shared by two parties $A$ and $B$;
each party can independently choose to make an $X$ or a $Y$ measurement, and for all measurements
the outcomes take values $0$ or $1$. Notice, for example,
that the probability that $X_A$ returns
$0$ is independent of which measurement party $B$
chooses to perform ($\sfrac{1}{2}+0 = \sfrac{1}{2}$ when $X_B$ is measured and $\sfrac{3}{8}+\sfrac{1}{8} = \sfrac{1}{2}$ when $Y_B$ is measured). This is the
no-signalling property at work.
It implies that the black box cannot be used by the parties to transmit information instantaneously to one another by means of their choice of measurement.
\end{example}

If we relabel and revisit Example~\ref{ex:pr}, viewing it as a bipartite
scenario as we have interpreted Example~\ref{ex:ns},
it corresponds to an empirical model known as the PR box~\cite{popescu1994quantum}.

\subsection{Possibilistic collapse}

As already mentioned, sometimes it is enough to consider the possibilistic information present in an empirical model.

\begin{definition}
A \emph{possibilistic} empirical model $e$ on a scenario $S$, also written $e \colon S$,
is a compatible family for $\Sigma$ on the presheaf $\Dist_\BB \circ \Ev_S$.
The set of possibilistic empirical models on $S$ is denoted $\EMP_\BB(S)$.
\end{definition}

Since Boolean distributions on a finite set can be identified with its non-empty subsets, one may describe a
possibilistic empirical model more explicitly as a family  
$\family{e_\sigma}_{\sigma \in \Sigma_S}$ where
each $e_\sigma$ is a non-empty subset of $\Ev(\sigma)$
and for any $\sigma, \tau \in \Sigma_S$ with $\tau \subseteq \sigma$,
\[e_{\tau} \,=\, e_{\sigma}|_\tau \,\defeq\, \setdef{s|_\tau}{s\in e_\sigma} \Mdot\]

The set $\EMP_\BB(S)$ of possibilistic empirical models on $S$ is equipped with a partial order, whereby $d\leq e$ 
if $d_\sigma\subseteq e_\sigma$ for all $\sigma\in\Sigma_S$.
The structure of this partial order was more thoroughly investigated in~\cite{abramsky2016possibilities}. We will make use of it when defining weak simulations in Definition~\ref{def:catsofempiricalmodels} below.

There is a natural map $\EMP(S)\to\EMP_\BB(S)$ which sends
a (probabilistic) model $e$ to its \emph{possibilistic collapse}.
This is the possibilistic model $e'$ defined by $e'_\sigma\defeq\supp(e_\sigma)$.
We will often abuse notation by denoting both a model and its possibilistic collapse by the same letter.
As an aside, note that not all models in $\EMP_\BB(S)$ arise via possibilistic collapse from a model in $\EMP(S)$ \cite{abramsky2013relational,abramsky2016possibilities}, a fact that has echoes of the main problematic set out in this chapter.

\begin{example}\label{ex:possibilisticns}
The following table gives the possibilistic collapse of the CHSH model.
\[
\begin{array}{ll||c|c|c|c}
  ~ & ~ & 0\,0 & 0\,1 & 1\,0 & 1\,1 \\
  \hline\hline
  X_A & X_B    & 1 & 0 & 0 & 1 \\
  X_A & Y_B    & 1 & 1 & 1 & 1 \\
  Y_A & X_B    & 1 & 1 & 1 & 1 \\
  Y_A & Y_B    & 1 & 1 & 1 & 1 \\
\end{array}
\]
\end{example}

\subsection{Contextuality}

We now come to the definition of contextuality.

\begin{definition} 
An empirical model $e \colon S$ is said to be {non-contextual} if it is extendable to a global section for ${\Dist \circ \Ev}$. In other words, $e$ is non-contextual if there exists a distribution $d \in \Dist \circ \Ev(X_S)$ on global assignments of outcomes to all measurements in $X_S$ such that $d |_\sigma = e_\sigma$ for every context $\sigma \in \Sigma_S$.
Otherwise, the empirical model is said to be \emph{contextual}.
\end{definition}

An equivalent formulation of non-contextuality of an empirical model $e$, which will be used throughout this chapter, is to say that $e$ can be written as a convex combination (taken contextwise) of deterministic empirical models. By a deterministic empirical model, we mean an empirical model $d \colon S$ in which $d_\sigma$ is a delta distribution for every context $\sigma \in \Sigma_S$, or equivalently, a compatible family for $\Sigma$ on the presheaf $\Ev$. Since $\Ev$ is in fact a sheaf, deterministic empirical models are in one-to-one correspondence with global assignments $s \in \Ev(X_S)$. We write $\delta_s$ for the deterministic model corresponding to the global assignment $s$.

As already mentioned, sometimes it is possible to witness contextuality at the possibilistic level, \ie by looking only at the supports of the empirical probability distributions.

\begin{definition}
An empirical model $e \colon S$ is said to be {logically non-contextual} if its possibilistic collapse is non-contextual over $\BB$, \ie if there is a Boolean distribution $d \in \Dist_\BB \circ \Ev(X_S)$ over joint outcomes that marginalises to (the possibilistic collapse of) $e$.
Otherwise, the empirical model is said to be \emph{logically contextual}.
\end{definition}

Note that in the case of logical contextuality there is a canonical candidate for the global distribution, namely that corresponding to the set
of global value assignments that are compatible with the supports of $e$,
\[\mathcal{S}_e \defeq \setdef{ s \in \Ev(X_S)}{\Forall{\sigma \in \Sigma_S} s|_C \in \supp e_C} \Mdot\]
That is, a model $e$ is logically non-contextual if and only if $\mathcal{S}_e|_{\sigma}$ is equal to the support of $e_\sigma$ for all contexts $\sigma \in \Sigma_S$. 
Equivalently, $e$ is logically contextual if and only if there is an
assignment $s \in \Ev_S(\sigma)$ in the support of $e_\sigma$ which cannot be extended to a global assignment in $\mathcal{S}_e$.

This leads us to an even stronger notion of contextuality.

\begin{definition}
An empirical model $e \colon S$ is said to be \emph{strongly contextual} if $\mathcal{S}_e = \emptyset$.
\end{definition}

In other words, $e$ is strongly contextual
if there is not even a single global assignment $s\in\Ev(X_S)$ that is compatible with the support of $e$ in every context.

These \stress{strengths} of contextuality form a strict hierarchy \cite{ab}. Strong contextuality implies
logical contextuality, which in turn implies probabilistic contextuality (for probabilistic empirical models).
Moreover, it is possible to find empirical models
that separate each of these classes.

The model from Example~\ref{ex:trianglemodel} is strongly contextual. In fact this
can be deduced simply by inspecting its bundle diagram in Figure~\ref{fig:bundle} and
noticing that any attempt at finding a univocal global value assignment consistent with
the observed outcomes must fail -- if one traces a path from jointly possible outcome
to jointly possible outcome it is impossible to close it while only ascribing a single
outcome value to each measurement.
Similarly the model from Example~\ref{ex:pr} is strongly contextual, a fact that can also
be deduced from a bundle diagram as was previously illustrated in
\cite{contextualitycohomologyparadox}. The CHSH model of Example~\ref{ex:ns}
is probabilistically contextual but it is not logically contextual. We refer the interested
reader to \cite{ab} for more details on the hierarchy of contextuality.
 
 \section{The framework: morphisms}\label{sec:framework-morphisms}
 
 We move to considering transformations between measurement scenarios.
 The various notions of morphism $S \to T$ introduced in this section 
 describe procedures that can be followed by a classical agent
 with access to a measurement scenario $S$ in order to implement the measurements
 of a new scenario $T$.
 Such procedures naturally induce simulations between empirical models on the scenarios at hand.
They can be regarded as the `free' operations in a (non-adaptive) resource theory of contextuality.

We discuss how some well-studied notions in the literature, notably non-local games, admit a neat description in terms of procedures and simulations.

We will generally refer to morphisms $S \to T$ as experimental procedures or just \emph{procedures},
with additional adjectives (deterministic, probabilistic, possibilistic) added as needed.
However, some particular instances of these concepts warrant specific terminology.
An \emph{experiment} on a scenario $S$ refers to 
a procedure of type $S\to \nn$, where $\nn$ is the scenario with a single measurement and $n$ possible outcomes from Example~\ref{ex:nn}.
When $n=2$, we also speak of a \emph{predicate}.\footnotemark

\footnotetext{While the word ``predicate'' does not quite fit with the experimental imagery evoked by much of our terminology, there is a reason for introducing an alternative word for two-valued experiments. Later, it will be useful to consider whether a given model always returns the outcome $1$ in such an experiment, and to restrict attention to those models on a scenario which do so. Hence calling it a predicate serves the purpose of indicating a change of viewpoint, where we will be restricting attention to models that always satisfy a given property.}

 \subsection{Deterministic procedures}\label{sec:classical}
 
 We first consider deterministic procedures.
 A procedure $S \to T$ consists of two parts:
 a (pre-processing) map of inputs in the backward direction and a (post-processing) map of outputs in the forward direction.
 More concretely, a procedure for implementing a measurement $x$ in $X_T$ must specify a context of measurements in $S$ to be performed and a way to map the outcomes obtained into values 
 in the outcome set of $x$. Importantly, it must be ensured that in the implementation of a context (\ie compatible set of measurements) of $T$, only a compatible set of measurements in $S$ is performed.
 This compatibility condition is captured by the concept of simplicial relation.
 
We start by fixing some notation.
Given a relation $\fdec{R}{X}{Y}$,
the image of an element $x \in X$ under $R$ is the set
\[
R(x) \defeq \setdef{y \in Y}{x R y}
\Mdot\]
Similarly, if $U$ is a subset of $X$ then
the image of $U$ under $R$ is the set
\[R(U) \defeq \bigcup_{x \in U}R(x) = \setdef{y \in Y}{\Exists{x \in \sigma} x R y}
\Mdot\]

\begin{definition}
Let $\Sigma$ and $\Delta$ be simplicial complexes.
A \emph{simplicial relation} $\fdec{R}{\Sigma}{\Delta}$
is a relation between the vertices of $\Sigma$ and those of $\Delta$
that maps faces to faces, \ie
such that for all $\sigma \in \Sigma$, $R(\sigma) \in \Delta$.
\end{definition}

\begin{definition}
A \emph{deterministic procedure} $\fdec{f}{S}{T}$ between measurement scenarios $S$ and $T$ is a pair $f = \tuple{\pi_f,\alpha_f}$ consisting of:
\begin{itemize}
    \item a simplicial relation $\fdec{\pi_f}{\Sigma_T}{\Sigma_S}$, which specifies for each measurement $x$ of $T$ a context $\pi_f(x)$ of $S$;
    \item a family $\alpha_f = \family{\alpha_{f,x}}_{x \in X_T}$ of functions $\fdec{\alpha_{f,x}}{\Ev_S(\pi_f(x))}{\Ev_T(x)}$, which map joint outcomes of $\pi_f(x)$ to outcomes of $x$.
\end{itemize}
The category of measurement scenarios and deterministic procedures is denoted by $\ScenDet$.
\end{definition}

\begin{remark}
Specifying a family of functions $\family{\fdec{\alpha_x}{\Ev_S(\pi(x))}{\Ev_T(x)}}_{x \in X_T}$ is equivalent to specifying a natural transformation $\fdec{\alpha}{\Ev_S(\pi -)}{\Ev_T(-)}$ of presheaves on $\mathcal{P}(X_T)$. This is because both of these presheaves are in fact sheaves (on $X_T$ with the discrete topology) and morphisms of sheaves can be glued together along any cover. 
\end{remark}

As remarked above, a \emph{deterministic experiment}
on a scenario $S$ is a deterministic procedure $S \to \nn$,
and a \emph{deterministic predicate} is a deterministic procedure $S \to \dice{2}$.

\begin{remark}
Note that a deterministic predicate on $S$ is given by a choice of $\sigma\in\Sigma_S$ and a subset of $\Ev_S(\sigma)$. 
\end{remark}

\begin{example}\label{ex:procsquarefromtriangle}
We consider an example of a deterministic experimental procedure $\fdec{f}{\triangle}{\square}$ where $\triangle$ and $\square$ are the scenarios from Examples~\ref{ex:trianglescen} and~\ref{ex:fourcandles} respectively. We define $\pi$ and $\alpha$ pointwise as follows. We send $\textit{`SammyA'}$ to query for $\textit{`pint?'}$ and to heed to the answer, \ie to choose $\textit{`grain'}$ if the answer is $\textit{`yes'}$ and $\textit{`grape'}$ otherwise. We send $\textit{`EvilG'}$ to query for $\textit{`wine?'}$ and instruct him to disobey the answer, \ie to go for $\textit{`grain'}$ if the answer is $\textit{`yes'}$ and $\textit{`grape'}$ otherwise. Finally, 
we send both $\textit{`GeorgieB'}$ and $\textit{`JohnnyB'}$ to the query $\textit{`grub?'}$, and instruct them to choose $\textit{`grain'}$, if the answer is $\textit{`yes'}$ and to choose $\textit{`grape'}$ otherwise. In effect, each party sends $\textit{`yes'}$ to $\textit{`grain'}$ and $\textit{`no'}$ to $\textit{`grape'}$; they just perform different queries on the triangle.
\end{example}

\subsection{Probabilistic and possibilistic procedures}\label{ssec:probabilisticprocs}

We now introduce a more general notion of morphism, which allows for some classical randomness.

\begin{definition}
A \emph{probabilistic procedure} $S\to T$ is a convex mixture of deterministic procedures $S\to T$, \ie an element of $\Dist(\ScenDet(S,T))$.
The category of measurement scenarios and probabilistic procedures is denoted by $\Scen$.
\end{definition}

We often denote such a probabilistic procedure as
$\sum_{i\in I} r_i f_i$ where $I$ is a (necessarily non-empty) finite set,
$r_i$ are positive reals summing to $1$,
and each $\fdec{f_i}{S}{T}$ is a deterministic procedure.

Similarly, when we care only about the possible events and not their specific probabilities,
we are led to a possibilistic version of procedures.
\begin{definition}
A \emph{possibilistic procedure} $S\to T$ is a Boolean mixture of deterministic procedures $S\to T$, \ie an element of  $\Dist_\BB(\ScenDet(S,T))$. 
The category of measurement scenarios and possibilistic  procedures is denoted by $\Scen_\BB$.
\end{definition}

We often write such a possibilistic procedure as
$\bigvee_{i\in I} f_i$ where $I$ is non-empty finite set and each $f_i$ is a deterministic procedure $S\to T$. 

Analogously to the deterministic case,
a \emph{probabilistic experiment} (resp. \emph{possibilistic experiment}) on a scenario $S$ is a
a probabilistic (resp. possibilistic) procedure $S \to \nn$,
and this is also called a \emph{probabilistic predicate} (resp. \emph{possibilistic predicate}) on $S$ when $n = 2$.

\subsection{Simulations}\label{ssec:simulations}

The morphisms of measurement scenarios introduced above yield maps between their sets of empirical models.
The idea is that by following a procedure $\fdec{f}{S}{T}$, an empirical model $e \colon S$ is used to simulate a new empirical model on $T$, denoted $\EMP(f)\,e$.

Mathematically, a helpful analogy is to think of empirical models as probability distributions on
a `contextual space', and to regard the simulation maps induced by (deterministic) procedures as analogous to the push-forward of a probability measure along a (measurable) function. Indeed, this is precisely how the map is defined at each context.

\begin{definition}\label{def:EMPprobabilistic}
Given a deterministic procedure $\fdec{f}{S}{T}$
we define a function $\fdec{\EMP(f)}{\EMP(S)}{\EMP(T)}$ by setting 
    \[(\EMP(f)\,e)_\sigma\defeq \alpha^*_{f,\sigma}(e_{\pi_f \sigma}) \Mcomma\]
\ie the probability distribution that $\EMP(f)\,e$ gives at context $\sigma\in\Sigma_T$
is obtained by pushing forward $e_{\pi_f\sigma}$ along the map $\fdec{\alpha_{f,\sigma}}{\Ev_S(\pi_f\sigma)}{\Ev_T(\sigma)}$.
This is extended to probabilistic procedures, which are convex mixtures $\sum_i r_i f_i$ of deterministic procedures, by defining \[\EMP\left(\sum_i r_i f_i\right) e \defeq \sum r_i (\EMP(f_i)e)\Mdot\]
\end{definition}

In this way, $\EMP$ defines a functor with domain $\Scen$. As for its codomain, note that $\EMP(S)$ has more structure than just that of a bare set. Namely, one can naturally form convex combinations $\sum_{i=1}^n r_i e_i$ of empirical models, and so $\EMP(S)$ has the structure of a convex set; \ie it is an algebra of the distribution monad on $\cat{Set}$. The following lemma shows that $\EMP(f)$ preserves this structure.

\begin{lemma}\label{lem:simsareconvex}
If $\fdec{f}{S}{T}$ is a probabilistic procedure, then $\EMP(f)$ preserves convex combinations.
\end{lemma}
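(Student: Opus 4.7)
The plan is to reduce the statement to the deterministic case and then invoke linearity to extend it to arbitrary probabilistic procedures. Unwinding the definitions, both sides of the desired equation are empirical models on $T$, so it suffices to show they agree context by context. Throughout I will use that convex combinations of empirical models are computed contextwise, so for any family $(e_j)_j$ of models on $S$ with weights $(s_j)_j$ and any $U \subseteq X_S$ we have $(\sum_j s_j e_j)_U = \sum_j s_j (e_j)_U$.

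First I would handle the case where $f$ is deterministic. Fix a context $\sigma \in \Sigma_T$. By definition of $\EMP(f)$,
\[
\bigl(\EMP(f)\bigl(\textstyle\sum_j s_j e_j\bigr)\bigr)_\sigma
\;=\; \alpha^*_{f,\sigma}\!\left(\bigl(\textstyle\sum_j s_j e_j\bigr)_{\pi_f \sigma}\right)
\;=\; \alpha^*_{f,\sigma}\!\left(\textstyle\sum_j s_j (e_j)_{\pi_f\sigma}\right).
\]
The key observation is that the pushforward $\alpha^*_{f,\sigma} = \Dist(\alpha_{f,\sigma})$ is affine, since $\Dist$ takes functions to affine maps (this is just the statement that the monad multiplication for $\Dist$ is an affine combination, or equivalently that $\Dist(X)$ is the free convex set on $X$). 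Applying this affinity yields $\sum_j s_j \alpha^*_{f,\sigma}((e_j)_{\pi_f\sigma}) = \sum_j s_j (\EMP(f)(e_j))_\sigma$, which is exactly $(\sum_j s_j \EMP(f)(e_j))_\sigma$.

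Next I would pass to a general probabilistic procedure $f = \sum_i r_i f_i$ with each $f_i$ deterministic. By the definition of $\EMP$ on probabilistic procedures and the deterministic case already handled,
\[
\EMP(f)\bigl(\textstyle\sum_j s_j e_j\bigr)
\;=\; \textstyle\sum_i r_i \,\EMP(f_i)\bigl(\sum_j s_j e_j\bigr)
\;=\; \sum_i r_i \sum_j s_j \,\EMP(f_i)(e_j).
\]
Swapping the order of summation (a finite interchange, legal because both sums are finite and convex combinations of empirical models are associative and commutative) gives $\sum_j s_j \sum_i r_i \,\EMP(f_i)(e_j) = \sum_j s_j \,\EMP(f)(e_j)$, as desired.

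There is no real obstacle here; the only point requiring any care is the deterministic case, which hinges on the elementary fact that taking the pushforward of probability distributions along a fixed function is an affine map. Everything else is bookkeeping about finite convex combinations and the contextwise definition of the convex structure on $\EMP(S)$.
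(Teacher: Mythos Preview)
Your proof is correct and follows essentially the same approach as the paper: first establish the deterministic case by observing that the pushforward $\alpha^*_{f,\sigma}$ preserves convex combinations at each context, then extend to probabilistic procedures by writing $f=\sum_i r_i f_i$ and using that a convex combination of convex-preserving maps is convex-preserving. Your version simply spells out the finite sum interchange more explicitly than the paper does.
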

\begin{proof}
We observe first that the claim holds whenever $f$ is a deterministic procedure. Indeed, as $\EMP(f)\,e$ is defined for $\sigma\in\Sigma_T$ by 
\[(\EMP(f)\,e)_\sigma=\alpha^*_{\sigma}(e_{\pi \sigma})\]
this follows from the fact that $\alpha^*_\sigma$ preserves convex combinations for each $\sigma$. Moreover, when $f=\sum r_if_i$ is a convex combination of deterministic procedures $f_i$ the function $\EMP(f)$ is defined via $\sum r_i \EMP(f_i)$, and so it follows that $\EMP(f)$ preserves convex combinations since each $\EMP(f_i)$ does.
\end{proof}

Therefore, we can think of $\EMP$ as a functor whose codomain is the category of convex sets (with convex-combination-preserving functions). However, at times we find it convenient to abuse notation and compose this with the forgetful functor to $\cat{Set}$ without comment.
 
 Formally speaking, the action of $\EMP$ on probabilistic procedures follows inevitably from its action on deterministic procedures. This is essentially because $\EMP$ is a convex set. The passage from $\ScenDet$ to $\Scen$ can be seen as freely enriching $\ScenDet$ in convex sets. As $\EMP$ already gives a functor from $\ScenDet$ to convex sets, it induces an enriched (\ie convex) functor from $\Scen$ to convex sets.

There is an similar construction for possibilistic procedures,
to which the analogous remarks apply (replacing convex sets by sup-lattices).
\begin{definition}
A deterministic procedure $f$ induces a function $\fdec{\EMP_\BB(f)}{\EMP_\BB(S)}{\EMP_\BB(T)}$
defined in the same way as for probabilistic empirical models (Definition~\ref{def:EMPprobabilistic}).
This is extended to possibilistic procedures by setting
\[\EMP_\BB\left(\bigvee_i f_i\right)e \defeq \bigvee_i (\EMP_\BB(f_i)e) \Mdot\]
\end{definition}

We now have all the necessary ingredients in place to introduce the relevant notions of simulation
between empirical models.

\begin{definition}\label{def:catsofempiricalmodels}
The category $\cat{Emp}$ of probabilistic empirical models and (probabilistic) simulations is defined as the category of elements of $\EMP$.
Explicitly, the objects of $\cat{Emp}$ are probabilistic empirical models $e\colon S$,
and morphisms $e\colon S \to d\colon T$ are \emph{probabilistic simulations}, \ie probabilistic procedures $\fdec{f}{S}{T}$ such that $\EMP(f)\,e=d$.

The category $\cat{Emp}_\BB$ of possibilistic empirical models and possibilistic simulations is defined as the category of elements of $\EMP_\BB$. Explicitly, the objects of $\cat{Emp}_\BB$ are possibilistic empirical models $e\colon S$, and  morphisms  $e\colon S\to d\colon T$ are \emph{possibilistic simulations}, \ie  possibilistic procedures $\fdec{f}{S}{T}$ such that $\EMP_\BB(f)\,e=d$.

The category $\cat{Emp}_\BB^{\leq}$ of possibilistic empirical models and weak simulations is defined as the lax category of elements of $\EMP_\BB$. Explicitly, the objects of $\cat{Emp}_\BB^{\leq}$ are possibilistic empirical models $e\colon S$, and  morphisms  $e\colon S\to d\colon T$ are given by \emph{weak simulations}, \ie  possibilistic procedures $\fdec{f}{S}{T}$ such that $\EMP(f)_\BB\,e \leq d$.
\end{definition}

While the names of the morphisms above are chosen to elicit a helpful intuition, let us explain this in some more detail. We recommend that one thinks of a morphism $e\to d$ as a way of simulating $d$ from $e$. The underlying morphism of scenarios gives an operational description of the simulation procedure, whereas the choice of category amounts to choosing a notion of \stress{correctness} for such a procedure.

For $\fdec{f}{S}{T}$ to define a probabilistic simulation $e\to f$ in $\cat{Emp}$, the statistics of $e$ need to be taken to the statistics of $d$ exactly. The adjective probabilistic refers to the fact that the procedure can be probabilistic. In other words, when simulating $d$ from $e$ one obtains exactly $d$.
In contrast to this, for $f$ to be a possibilistic simulation $d\to e$ in $\cat{Emp}_\BB$ it is sufficient that $\EMP(f)\,e$ and $d$ have the same support. Finally, for a morphism $S\to T$ to define a weak simulation $e\to d$, it is enough that one never observes an outcome that $d$ deems impossible when running the simulation using $e$.

While weaker and weaker forms of simulation may seem too weak to be of practical interest,
note that the corresponding notions of \emph{non-simulability} between empirical models become stronger as one relaxes the type of simulation considered.
With this in mind, we say that an empirical model $e$ is probabilistically / possibilistically / strongly non-simulable from $d$ if there are no morphisms $d\to e$ in $\cat{Emp}$ / $\cat{Emp}_\BB$ / $\cat{Emp}_\BB^{\leq}$. Theorem~\ref{thm:contextualiffnonsimulable} below says that probabilistic, possibilistic, and strong contextuality of a model $e$ correspond precisely to probabilistic, possibilistic, and strong non-simulability from a trivial model.

\begin{example}\label{ex:simPRfromtriangle}
A straightforward calculation shows that the procedure $\fdec{f}{\triangle}{\square}$ from Example~\ref{ex:procsquarefromtriangle} takes the model $e \colon \triangle$ from Example~\ref{ex:trianglemodel} to the model $d \colon \square$ from Example~\ref{ex:pr}, so that $f$ defines a simulation $e \to d$ in $\cat{Emp}$. 

One can in fact show that there is no simulation in the opposite direction. Indeed, consider an arbitrary simplicial relation $\fdec{\pi}{\Sigma_\triangle}{\Sigma_\square}$.
As any pair of measurements is compatible in the scenario $\triangle$, this must also be true in the image of $\pi$. This implies that this image must in fact be a face of $\Sigma_\square$.
Thus, a procedure $\fdec{f}{\square}{\triangle}$ can only make use of one maximal context of $\square$.
Note that the restriction to a context of any empirical model $d \colon \square$ is necessarily non-contextual. Consequently, so is $\EMP(\pi,\alpha)d$.
However, the model $e:\triangle$ from Example~\ref{ex:trianglemodel} is strongly contextual, 
and so it cannot be equal to $\EMP(f)\,d$ for any $d \colon \square$.
In fact, this model cannot even be weakly simulated from a model on $\square$, or for that matter from any quantum-realisable model.
This statement remains true even if one allows for probabilistic and adaptive procedures, but establishing that in detail is beyond the scope of the present chapter.
However, the crux of the argument is the same: in quantum-realisable scenarios a set of measurements is compatible if and only if it is pairwise compatible, a fact known as Specker's principle. Therefore, any attempt at simulating an empirical model on $\triangle$ from a model on such a quantum-realisable scenario ends up using only a fully compatible -- and thus non-contextual -- part.
\end{example}

Further examples of simulations can be found in the literature. For example, Protocols 1, 2, 5, and 6 in~\cite{barrettetal2005} define simulations between various empirical models.
The other protocols in that article go slightly beyond the present framework by being adaptive, approximately correct, or allowing for some limited classical communication.
Similarly, the proof of Corollary 2 in~\cite{barrettpironio2005}, which states that any two-output bipartite box can be simulated by sufficiently many PR boxes, does not require adaptivity, hence it also holds true in our current framework. 

\begin{remark}\label{rem:whynotadaptive}
The definitions of morphisms presented here differ slightly from those in earlier expositions in~\cite{karvonen2018categories} or~\cite{ourlics:comonadicview}. 
Our deterministic morphisms are the same as in~\cite{karvonen2018categories}, whereas in~\cite{ourlics:comonadicview} we required the basic morphisms to have an underlying simplicial map rather than a simplicial relation.
The bigger differences occur afterwards: in~\cite{karvonen2018categories} the expressive power of deterministic morphisms was increased by letting the components of $\alpha$ be stochastic maps. Our present notion of probabilistic procedure is more general in that it allows both $\pi$ and $\alpha$ to behave stochastically. The earlier limitation was just due to not seeing the current definition as a possibility. In contrast to this, \cite{ourlics:comonadicview} extends the deterministic morphisms differently, by passing to a coKleisli category of a certain comonad on $\Scen$ in order to allow \emph{adaptive} protocols instead of mere joint measurements as in our current setup.
Note that the presence of adaptivity elides the difference in expressive power between simplicial maps and simplicial relations: measuring a context with several measurements can be achieved through a measurement protocol that measures one measurement at a time (with a trivial form of adaptivity).
We would happily work in the same adaptive setting if only we knew how to generalise our current results, specifically  Theorem~\ref{thm:characterizingdeterministicprocedures}, to such adaptive morphisms.
As we currently don't know how to do this, we focus on the non-adaptive case and discuss the issues raised by adaptivity in Section~\ref{ssec:variantsofmainresult}.
A further difference is how randomness is dealt with: here we obtain shared, classical randomness by allowing convex mixtures of morphisms, whereas in~\cite{ourlics:comonadicview} we defined a morphism $d\to e$ to be a deterministic co-Kleisli morphism $d\otimes c\to e$ for some non-contextual $c$. In the presence of adaptivity, there is no difference in expressive power, but only a difference in viewpoint. The setup in~\cite{ourlics:comonadicview} suggests defining more general morphisms $d\to e$ as maps $d\otimes c\to e$ where the resource $c$ is in some fixed class of interest (\eg the quantum-realisable empirical models). A benefit of our current framework is that it makes it easier to discuss probabilistic morphisms at the level of scenarios already, rather than only at the level of empirical models. 
\end{remark}

\subsection{Contextuality as non-simulability}

The following result shows how the familiar notion(s) of contextuality can be neatly expressed in terms of simulations. As discussed below, this justifies the relative point of view, allowing in particular to regard our question~\eqref{q:proc} as a generalisation of the question of determining whether a model is contextual, our question~\eqref{q:ncmodel}.

\begin{theorem}\label{thm:contextualiffnonsimulable} Let $\Zero$ denote the unique scenario with an empty set of measurements, and let $\zero$ denote the unique empirical model on it.  A probabilistic model $e : S$ is contextual if and only if it is probabilistically non-simulable from $\zero$,
\ie if and only if there is no morphism $\zero \to e$ in $\cat{Emp}$.

Moreover, a possibilistic model $e$ is logically (resp. strongly) contextual if and only if it is possibilistically (resp. strongly) non-simulable from $\zero$, \ie if and only if there is no morphism $\zero\to e$ in $\cat{Emp}_\BB$ (resp. $\cat{Emp}_\BB^{\leq}$).
\end{theorem}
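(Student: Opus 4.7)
The plan is to unpack what a (probabilistic, possibilistic, or weak) simulation $\zero \to e$ amounts to, and observe that in each case one recovers precisely the three notions of non-contextuality recalled earlier.

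First, I would examine the structure of procedures out of $\Zero$. Since $X_\Zero = \emptyset$, the simplicial complex $\Sigma_\Zero$ consists of the empty face alone, and $\Ev_\Zero(\emptyset) = \prod_{y\in\emptyset}O_y$ is a singleton. Hence a deterministic procedure $\fdec{f}{\Zero}{S}$ is forced to send every $\sigma\in\Sigma_S$ to $\emptyset$ under $\pi_f$, while its data $\alpha_f$ amounts to choosing, for each $x\in X_S$, an outcome $\alpha_{f,x}(*)\in O_{S,x}$. Thus deterministic procedures $\Zero\to S$ are in bijective correspondence with global assignments $s\in\Ev_S(X_S)$. Moreover, chasing through Definition~\ref{def:EMPprobabilistic}, the push-forward of $\zero_\emptyset$ along $\alpha_{f,\sigma}$ is the delta distribution at $s|_\sigma$, so the induced empirical model is precisely the deterministic model $\delta_s$.

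For the probabilistic statement, a probabilistic procedure $\Zero\to S$ is a convex combination $\sum_i r_i f_i$ of deterministic ones, and by the preceding step $\EMP(\sum_i r_i f_i)\,\zero = \sum_i r_i \delta_{s_i}$ where $s_i$ is the global assignment associated with $f_i$. Therefore $e$ admits a probabilistic simulation from $\zero$ if and only if $e$ can be written as a convex combination of deterministic empirical models, which (using the equivalent characterisation of non-contextuality noted earlier in the section) is exactly non-contextuality.

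For the possibilistic statement, a possibilistic procedure $\Zero\to S$ is a non-empty Boolean mixture $\bigvee_i f_i$ of deterministic procedures, and $\EMP_\BB(\bigvee_i f_i)\,\zero$ is the possibilistic model assigning to each $\sigma$ the non-empty set $\{\,s_i|_\sigma \mid i\in I\,\}$. Identifying Boolean distributions with non-empty subsets, this is precisely the image of $\mathcal{S}\defeq\{s_i\}_i\subseteq\Ev_S(X_S)$ under marginalisation to $\sigma$. So $e$ is possibilistically simulable from $\zero$ iff there is a non-empty $\mathcal{S}\subseteq\Ev_S(X_S)$ with $\mathcal{S}|_\sigma=\supp(e_\sigma)$ for every $\sigma\in\Sigma_S$, which is exactly logical non-contextuality. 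For the weak-simulation case, the only change is that the equality $\mathcal{S}|_\sigma = \supp(e_\sigma)$ is relaxed to the containment $\mathcal{S}|_\sigma \subseteq \supp(e_\sigma)$. Such a non-empty $\mathcal{S}$ exists iff $\mathcal{S}_e = \{\, s\in\Ev_S(X_S) \mid \forall\sigma,\ s|_\sigma\in\supp(e_\sigma)\,\}$ is non-empty, i.e.\ iff $e$ is \emph{not} strongly contextual.

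The argument is essentially bookkeeping: the only mildly delicate step is being careful that $\Sigma_\Zero$ really does contain the empty face and that $\Ev_\Zero(\emptyset)$ is a singleton (the empty product of sets), so that deterministic procedures from $\Zero$ pick out exactly global value assignments. Once that is nailed down, the three equivalences fall out by matching the three hierarchy layers (delta distributions / supports / mere consistency with supports) with the three layers of morphisms (probabilistic / possibilistic / weak).
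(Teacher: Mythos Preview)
Your proposal is correct and follows essentially the same approach as the paper's proof: both identify deterministic procedures $\Zero\to S$ with global assignments, observe that they push $\zero$ forward to the corresponding deterministic models $\delta_s$, and then unwind the three notions of simulation to recover the three levels of (non-)contextuality. Your treatment of the possibilistic and weak cases is slightly more explicit (phrasing things via a subset $\mathcal{S}\subseteq\Ev_S(X_S)$), whereas the paper dispatches the logical case by analogy with the probabilistic one and handles the strong case by picking a single consistent global assignment, but the underlying argument is the same.
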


\begin{proof}
These characterisations of probabilistic and logical contextuality were already proved in~\cite[Theorem 4.1]{karvonen2018categories}, but we include a proof here for the sake of completeness.

The crucial observation is that deterministic procedures $\Zero \to S$ are in one-to-one correspondence with global assignments for $S$, \ie with elements of $\Ev_S(X_S)$, and thus with deterministic models on $S$. If $\fdec{f}{\Zero}{S}$ is a deterministic procedure, then $\pi_f$ is necessarily the empty relation, so the only choice is in choosing $\fdec{\alpha_{f,x}}{\Ev_\Zero(\emptyset)}{\Ev_S(x)}$ for each $x \in X_S$. Noting that  $\Ev_\Zero(\emptyset)$ is a singleton and $\Ev_S(x) = O_{S,x}$, we see that $\alpha_{f,x}$ is determined by an element of the outcome set of $x$, and so $\alpha_f$ precisely specifies a global assignment $s \in \Ev_S(X_S)$.
Writing $f_s$ for the deterministic procedure corresponding to $s$ in this way, note that $\EMP(f_s)\,z = \delta_s$, where $\delta_s$ is the deterministic model determined by $s$. So, in particular, every deterministic model is simulable from $\zero$.

We start with the probabilistic case and prove the implications in both directions by proving their contrapositives. If $e$ is non-contextual, then it can be expressed as a convex combination of deterministic models $e = \sum_{s \in \Ev(X_S)} r_s\delta_{s}$, and the procedure $\sum r_s f_{s}$ simulates $e$. Conversely, suppose $e$ is simulable from $\zero$, say by a probabilistic procedure $\fdec{f}{\Zero}{S}$. Then $f$ is a convex combination of deterministic procedures, which are of the form $f_s$ for a $s \in \Ev(X_S)$, and so  we can write $f = \sum_{s\in\Ev(X_S)} r_s f_s$. Consequently,
\[ e \;=\; \EMP(f) \, z \;=\; \sum_{s\in\Ev(X_S)} r_s\EMP(f_s)\,z \;=\; \sum_{s\in\Ev(X_S)} r_s \delta_s \Mcomma\]
thus $e$ is non-contextual.
The case of logical contextuality is obtained by replacing convex sums with possibilistic combinations in the above. 

We now move on to strong contextuality, again proving the two directions by showing their contrapositives. If $e$ is weakly simulable from $\zero$, the underlying procedure is necessarily of the form $\bigvee_{s \in A} f_s$ for some subset of global assignments $A \subseteq \Ev(X_S)$.
Then any of the deterministic models $\EMP_\BB(f_s)\,z = \delta_s$ for $s \in A$ is a witness of $e$ not being strongly contextual.
Conversely, if $e$ is not strongly contextual, there is a global assignment $s \in \Ev(X_S)$ consistent with it. The corresponding deterministic procedure $\fdec{f_s}{\Zero}{S}$ yields a possibilistic simulation $\zero \to \delta_s$ in $\cat{Emp}_\BB$, which is a weak simulation $\zero \to e$ in $\cat{Emp}_\BB^{\leq}$.
\end{proof}

Thus $e:S$ is contextual if and only if the map $\EMP(\Zero)=\enset{\zero}\to\EMP(S)$ corresponding to it arises from a procedure $\Zero\to S$. This suggests the following, more general question: which functions $\EMP(S)\to \EMP(T)$ are induced by a procedure $S\to T$? That is, which such functions are equal to $\EMP(f)$ for some $\fdec{f}{S}{T}$?

Such generalisation can be seen as an instance of Grothendieck's relative point of view \cite{nlab:relativepov}, 
which, roughly speaking, suggests that properties of mathematical objects should be seen as arising from properties of morphisms, so that a geometric object $X$ is defined to have a property precisely when the canonical morphism $X\to 1$ has the corresponding property of morphisms. As, heuristically speaking, one expects geometry to be dual to algebra, we find ourselves in a dual situation: a model can be defined to be contextual if the map $\EMP(\Zero)=\enset{\zero}\to\EMP(S)$ corresponding to it is contextual. To make this precise, we need to answer the question of when a general map $\EMP(S)\to \EMP(T)$ is contextual. 

\subsection{Non-local games as experiments}

Another familiar concept from the literature that admits a neat formulation in our framework is that of non-local games,
or more generally, contextual games as considered in \cite[Appendix E]{abramsky2017contextual}.
Note that these can also be thought of as linear inequalities on the probabilities predicted by empirical models, and thus encompass in particular Bell locality inequalities or non-contextuality inequalities.

One usually thinks of non-local games as follows: there are $n$ spatially  separated parties who can agree on a strategy beforehand but cannot communicate once the game is in play.
A referee sends to the $i$-th party a question from an input set $X_i$,
and expects back an answer from some output set $O_i$.
The referee draws the questions according to a joint probability distribution on inputs.
Afterwards, the referee collects the answers and applies a rule
\[\fdec{W}{X_1\times\cdots\times X_n\times O_1\times\cdots\times O_n}{\enset{0,1}}\]
to determine whether the players win or lose.
Both the distribution on inputs and this winning condition are known a priori by all players.
The goal of the players is to maximise their winning probability.

Such games turn out to be readily formalised as probabilistic experiments: one first builds a (Bell-type) measurement scenario $S$ whose the measurements are given by $\coprod_i X_i$ and where the maximal measurement contexts are given by a choice of a measurement $x_i\in X_i$ for each party.
The outcome set of each measurement in $X_i$ is $O_i$.
If the referee asks a joint question $q \in X_1 \times \cdots \times X_n$ with probability $r_q$, then we can model the game as a probabilistic experiment on $S$, which with probability $r_q$ chooses the measurement  context corresponding to $q$, and then uses $W$ to obtain an outcome in $\enset{0,1}$, which is the outcome set of the only measurement of the scenario $\two$.
Thus, the game setup (namely, the distribution on inputs and the winning condition) can be packaged into a probabilistic experiment $\fdec{g}{S}{\two}$.
From this point of view, a strategy for the players to answer the questions is given by an empirical model $e$ on the scenario $S$.
Finally, the \emph{winning probability} achieved by that strategy arises as the model $\EMP(g)\, e \colon \two$,
bearing in mind that an empirical model on $\two$ is uniquely determined by the probability of obtaining the outcome $1$.

Note that we can also encode a more general class of games, where the referee attributes a pay-off to each combination of questions and answers, rather than a discrete win-or-lose valuation. Such is in particular the general form of a Bell inequality. If we normalise such pay-offs, this amounts to generalising $W$ to take values in $[0,1]$ instead of $\enset{0,1}$. These games can still be seen as probabilistic experiments $\fdec{g}{S}{\two}$, given that the extra flexibility can be modelled through a further mixture of deterministic experiments.

\begin{example}
Recall the Four Candles scenario $\square$ from Example~\ref{ex:fourcandles}. We will represent the CHSH game as a probabilistic experiment $\fdec{\sum_{1=1}^n \sfrac{1}{4}f_i}{\square}{\two}$ with each $f_i$ a deterministic experiment. Recall that a deterministic experiment on $\square$ is determined by specifying a measurement context $\sigma\in \Sigma_\square$ and a subset of $\Ev_\square(\sigma)$ that gets sent to $1$, which we should think of as the winning condition. With this in mind, we let $f_1$ be determined by the context $\enset{\textit{`SammyA'},\, \textit{`EvilG'}}$ and the winning subset be given by $\enset{(\textit{`grape'},\textit{`grain'}),(\textit{`grain'},\textit{`grape'})}$. In other words, for the players to win, $\textit{`SammyA'}$ and $\textit{`EvilG'}$ must anti-coordinate in their choice of beverage. The other experiments $f_2,f_3$ and $f_4$ correspond to the remaining maximal contexts, with the winning constraint given now by coordination, \ie by the set $\enset{(\textit{`grape'},\textit{`grape'}),(\textit{`grain'},\textit{`grain'})}$. 

If all our agents have stubbornly, \ie deterministically, chosen what their chosen beverage is, they can satisfy the referee in at most three out of the four possible questions. As $\EMP(f)$ preserves convex combinations, no classical (\ie non-contextual) strategy can do any better than this. So the classical value of the game, \ie the maximum winning probability achievable by a classical strategy, is $\sfrac{3}{4}$. This bound can be saturated by \eg everyone going for $\textit{`grain'}$.  

However, if the players can coordinate using quantum resources, they can implement the model from Example~\ref{ex:ns}.
This results in a strategy that wins with probability $\sfrac{13}{16} \approx 81\%$.
In fact, with a more judicious choice of shared state and measurements one can design a quantum strategy that wins with probability $\sfrac{(2+\sqrt{2})}{4}$, or approximately $85\%$ of the time.
If `super-quantum' models are allowed, the players can follow a \stress{perfect strategy}, one which allows them to win the game with certainty, namely by using the empirical model from Example~\ref{ex:pr}. This super-quantum model, usually known as the PR box, can in fact be characterised as the unique empirical model that yields a perfect strategy for this game.
\end{example}

\subsection{Predicates and possibilistic models}

When one only cares about perfect strategies, \ie winning a game with certainty,
the only relevant information is contained in the supports of the game and the model, \ie their possibilistic collapses.
As it happens, it will be useful later to restrict attention to only those models that win a particular game with certainty, thus interpreting a (possibilistic) game as a predicate on empirical models. We now make some remarks on this situation.

\begin{definition}\label{def:predicatesatisfaction}
A possibilistic model $e\colon S$ is said to satisfy a possibilistic predicate  $\fdec{g}{S}{\two}$ if
$\EMP_\BB(g)\, e$ is equal to the empirical model on $\two$ corresponding to the deterministic outcome $1$.
A probabilistic model $e\colon S$ is said to satisfy a possibilistic predicate $\fdec{g}{S}{\two}$ if its possibilistic collapse does so.

We write $e\colon \tuple{S,g}$ to indicate that $e\colon S$ satisfies $g$.
The set of all probabilistic (resp. possibilistic) models on $S$ that satisfy $g$ is denoted by $\EMP(S,g)$ (resp. $\EMP_\BB(S,g)$).

There is a preorder on possibilistic predicates $S\to \two$ according to which $g\leq h$ whenever $\EMP_\BB(S,g)\subseteq\EMP_\BB(S,h)$. This yields an equivalence relation on predicates: $g$ and $h$ are said to be equivalent, written $g \sim h$, if $g \leq h$ and $h\leq g$, \ie if $\EMP_\BB(S,g)=\EMP_\BB(S,h)$.
\end{definition}

\begin{definition}\label{def:predicatefrommodel}
A possibilistic model $e\colon S$ induces a predicate $\fdec{g(e)}{S}{\two}$ given by $g(e) \defeq \bigvee_{\sigma\in\Sigma} g(e)_\sigma$ where $g(e)_\sigma$ is the deterministic procedure consisting of the relation mapping the unique measurement $*$ to the context $\sigma$ and the characteristic function $\Ev(\sigma)\to\enset{0,1}=O_*$ of $e_\sigma\subseteq\Ev(\sigma)$.
\end{definition}

In other words, the predicate $g(e)$ is satisfied by a model $d \colon S$ if for each context $\sigma$ the support of $d_\sigma$ is contained in that of $e_\sigma$, \ie if $d \leq e$ as possibilistic models (note that we may need to take the possibilistic collapse of $d$).

\begin{example}\label{ex:ksmodels}
Kochen--Specker models, studied in~\cite[Section 7]{ab} and in~\cite{ruishizzle:extendability}, are certain possibilistic empirical models that are inspired by and closely connected to the original formulation of the Kochen--Specker theorem~\cite{ks}.
These models are easy to characterise in the current framework as those satisfying a certain possibilistic predicate. Let $S$ be a scenario in which each measurement is dichotomic, \ie has outcome set $\enset{0,1}$. For each maximal context $\sigma\in\Sigma_S$ of $S$, define a deterministic predicate $g_\sigma$ corresponding to the subset of $\Ev_S(\sigma)$ consisting of
the local assignments over $\sigma$ which assign outcome $1$ to exactly one measurement in $\sigma$.
More formally, $\pi_{g_\sigma}$ maps the unique measurement $* \in X_\two$ to $\sigma$,
while $\alpha_{g_\sigma}$ sends the subset $\setdef{s\in\Ev(\sigma)}{s^{-1}(1)\text{ is a singleton}}$ to $1$ and the rest of $\Ev(\sigma)$ to $0$. Define the Kochen--Specker predicate $g_{\text{KS}}$ on $S$ as $\bigvee_{\sigma}g_\sigma$ where $\sigma$ ranges over all maximal contexts. For a scenario $S$, we then say that $e$ is a Kochen--Specker model if it satisfies the predicate $g_{\text{KS}}$. From this point of view, the import of the Kochen--Specker theorem is that, for suitably chosen scenarios of quantum measurements:
\begin{enumerate}
    \item any quantum state gives rise to a Kochen--Specker model on the scenario;
    \item any Kochen--Specker model on the scenario is strongly contextual.
\end{enumerate}
This results in a \emph{state-independent strong contextuality} argument.\footnotemark\
The results of~\cite[Section 7]{ab} can then be seen as giving a criterion on a scenario $S$ that implies the strong contextuality of every Kochen--Specker model on it.
\end{example}

\footnotetext{See \cite[Section 4]{abramsky2017quantum} for a more general account of state-independent contextuality phrased in similar language.}

Kochen--Specker models are defined somewhat differently in~\cite{ruishizzle:extendability}.
In contrast to the viewpoint above, there is single Kochen--Specker model on each scenario according to the definition there. Namely, this is defined to be the possibilistic empirical model whose support for each context is given exactly by the winning constraint of the Kochen--Specker predicate above.
Importantly, this support of the predicate is guaranteed to satisfy (possibilisitic) no-signalling, and thus it yields a well-defined possibilistic model.
In other words, it turns out that the Kochen--Specker predicate $g_{\text{KS}}$ is induced by a possiblistic model in the sense of Definition~\ref{def:predicatefrommodel}; that is, $g_{\text{KS}}=g(e)$ for some possibilistic model $e$.

This is no accident. In fact, it is true of every possibilistic predicate, at least up to equivalence in the predicate preorder (Definition~\ref{def:predicatesatisfaction}).

 \begin{proposition}\label{prop:possibilisticgamesasmodels}
If a possibilistic predicate $\fdec{g}{S}{\two}$ is satisfied by some model, then it is equivalent to a predicate $g(e)$ induced by some possibilistic model $e \colon S$.
 \end{proposition}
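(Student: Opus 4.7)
My plan is to construct the desired model $e$ as the contextwise union of all possibilistic models satisfying $g$, and then verify that the predicate it induces is equivalent to $g$.

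First, I would unpack what it means for a possibilistic model $d \colon S$ to satisfy the predicate $g = \bigvee_{i \in I} g_i$, where each deterministic component $g_i$ is specified by a context $\sigma_i \in \Sigma_S$ together with a ``winning'' subset $A_i \subseteq \Ev_S(\sigma_i)$ that the function $\alpha_{g_i}$ sends to $1$. The condition $\EMP_\BB(g)\,d = \bigvee_i \EMP_\BB(g_i)\,d$ equalling the deterministic $1$-model on $\two$ amounts to requiring $d_{\sigma_i} \subseteq A_i$ for every $i \in I$. Note this condition is \emph{downward closed}: if $d \leq d'$ in the possibilistic order and $d'$ satisfies $g$, then so does $d$.

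Next, letting $M \defeq \EMP_\BB(S,g)$ (non-empty by hypothesis), I would define $e$ contextwise by $e_\sigma \defeq \bigcup_{d \in M} d_\sigma$. To see this is a bona fide possibilistic model: each $e_\sigma$ is non-empty since $M$ is non-empty and each summand is; and compatibility is preserved under union because restriction commutes with union, so for $\tau \subseteq \sigma$ we get $e_\sigma|_\tau = \bigcup_{d} d_\sigma|_\tau = \bigcup_d d_\tau = e_\tau$. Moreover $e$ itself satisfies $g$: for each $i$, $e_{\sigma_i} = \bigcup_{d \in M} d_{\sigma_i} \subseteq A_i$ since each $d \in M$ has $d_{\sigma_i} \subseteq A_i$. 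Thus $e$ is the maximum element of $M$ with respect to the possibilistic order.

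Finally, I would verify the equivalence $g(e) \sim g$ by checking both inequalities. Unpacking Definition~\ref{def:predicatefrommodel}, a model $d$ satisfies $g(e)$ precisely when $d \leq e$. For $g \leq g(e)$: any $d \in M$ satisfies $d \leq e$ by the very construction of $e$, hence satisfies $g(e)$. For $g(e) \leq g$: if $d \leq e$, then using $e \in M$ together with the downward-closure observation from the first paragraph, $d \in M$, i.e., $d$ satisfies $g$. This closes both directions and establishes the proposition. The argument is essentially routine once the right candidate $e$ is identified, so the only real thing to ``not miss'' is that taking a contextwise union of possibilistic models genuinely preserves compatibility—a fact which, while easy, is what makes the join of $M$ a well-defined model and ultimately underwrites the whole proof.
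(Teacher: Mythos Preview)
Your proposal is correct and takes essentially the same approach as the paper: both define $e$ as the contextwise join $\bigvee\{d : d \in \EMP_\BB(S,g)\}$ of all possibilistic models satisfying $g$. The paper's proof is a single line stating this choice of $e$ without further verification, whereas you spell out why the join is a well-defined possibilistic model, why it still satisfies $g$, and why the two predicates have the same satisfying models; these details are routine but your write-up makes them explicit.
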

\begin{proof}
As long as $\EMP_\BB(S,g) \neq \emptyset$, the predicate $g$ is equivalent to $g_e$ where $e=\bigvee\setdef{d}{d\in \EMP_\BB(S,g)}$.
\end{proof}

This result means that for the purpose of satisfaction by empirical models, we might as well assume that all predicates are either the predicate \stress{false} (which constantly outputs the outcome $0$) or of the form $g(e)$. Note that not only are all satisfiable predicates equivalent to one of this form, but indeed there is a canonical representative for every equivalence class, given by the choice of $e \colon S$ in the proof above.

In a sense, one can think of an arbitrary predicate as defining a `theory' that we are interested in having satisfied by empirical models. Then, passing to the canonical form corresponds to closing this theory under implication assuming only no-signalling as an axiom.

Note that if one instead works with the definition that $g\leq h$ 
when $\EMP(S,g)\subseteq\EMP(S,h)$, \ie when any \stress{probabilistic} model satisfying $g$ also satisfies $h$, one can prove that each satisfiable predicate $\fdec{g}{S}{\two}$ is equivalent to $g(e)$ where $e$ is the possibilistic collapse of some probabilistic model, rather than an arbitrary possibilistic model. One can just take $e=\bigvee\setdef{d}{d\in \EMP(S,g)}$. 

\section{When is a function induced by a procedure?}\label{sec:main-result}

We now turn our attention towards answering the central question: which functions $\fdec{F}{\EMP(S)}{\EMP(T)}$ are induced by a probabilistic procedure $\fdec{f}{S}{T}$? 
We will build up to to the answer in stages: we start with the case where $T\cong\nn$ and the function $F$ preserves deterministic models,
culminating with an answer in the deterministic case in Theorem~\ref{thm:characterizingdeterministicprocedures} and in the general case in Theorem~\ref{thm:mainthm}.

\subsection{The affine span of deterministic models}

Before diving in, we make some useful preliminary observations.
First, note that Lemma~\ref{lem:simsareconvex} implies that it is necessary for $F$ to preserve convex combinations for it to be induced by a procedure. The next two results enable us to deduce that
a convex-combination-preserving $\fdec{F}{\EMP(S)}{\EMP(T)}$ 
is uniquely determined by its action on deterministic empirical models on $S$,
which we can identify with $\Ev(X_S)$.

A central result of~\cite{ab} is that the no-signalling condition is equivalent to
realisability by a non-contextual hidden variable model with `negative probabilities'.
In other words, any (no-signalling) probabilistic empirical model $e \colon S$ can be extended
to a quasiprobability distribution on global assignments; \ie there is function $\fdec{d}{\Ev(X_S)}{\RR}$ with $\sum_{s \in \Ev(X_S)}d(s) = 1$ which marginalises to yield $d|_\sigma = e_\sigma$ for every context $\sigma \in \Sigma_S$.
We state this result here in the form we will require.

\begin{theorem}[Theorem 5.9 of~\cite{ab}, rephrased]
Any probabilistic empirical model $e \colon S$ can be written as an affine combination of deterministic empirical models, \ie $e = \sum_{s\in\Ev(X_S)} r_s \delta_s$ for some $r_s \in \RR$ with $\sum_s r_s = 1$.
\end{theorem}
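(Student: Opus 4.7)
My plan is to reduce the theorem to a purely linear-algebraic surjectivity statement. Let $W := \prod_{\sigma \in \Sigma_S} \RR^{\Ev(\sigma)}$ and let $\widetilde{\EMP}(S) \subseteq W$ be the affine subspace cut out by the same normalization and compatibility equations that define $\EMP(S)$, but with the non-negativity requirement dropped. Then $\EMP(S) \subseteq \widetilde{\EMP}(S)$, each deterministic model $\delta_s$ lies in $\widetilde{\EMP}(S)$, and the theorem will follow once we show that the affine span of $\{\delta_s : s \in \Ev(X_S)\}$ fills all of $\widetilde{\EMP}(S)$.

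To package this, let $H$ denote the hyperplane of functions $\fdec{f}{\Ev(X_S)}{\RR}$ with $\sum_s f(s) = 1$; this $H$ is affinely spanned by the point masses $\delta_s$. Marginalization yields an affine map $\fdec{M}{H}{\widetilde{\EMP}(S)}$ sending $f$ to the family $(f|_\sigma)_{\sigma \in \Sigma_S}$, with compatibility and normalization automatic because marginals of a single signed global distribution are pointwise consistent. Since $M(\delta_s)$ recovers the deterministic empirical model $\delta_s \in \widetilde{\EMP}(S)$, the image of $M$ is exactly the affine span of the $\delta_s$ inside $\widetilde{\EMP}(S)$, so the real content of the theorem is the surjectivity of $M$.

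To prove surjectivity, I would proceed inductively along an ordering $C_1, \ldots, C_n$ of the maximal contexts of $\Sigma_S$. At step $k$, given a signed distribution $f_{k-1}$ on $\Ev\bigl(\bigcup_{i<k} C_i\bigr)$ whose marginal on each earlier $C_i$ equals $e_{C_i}$, one extends it to $\Ev\bigl(\bigcup_{i \le k} C_i\bigr)$ by choosing, for each partial assignment in the support of $f_{k-1}$, a signed distribution on the fresh measurements $C_k \setminus \bigcup_{i<k} C_i$ so that the total marginal on $C_k$ matches $e_{C_k}$; compatibility of $e$ guarantees the required consistency on the overlap. The main obstacle is that the overlap $\bigl(\bigcup_{i<k} C_i\bigr) \cap C_k$ need not be a single face of $\Sigma_S$, so the gluing has to simultaneously respect several compatibility constraints, which is delicate. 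A more uniform route, and the one Abramsky and Brandenburger take in~\cite{ab}, is to phrase the obstruction in sheaf/cohomological terms: one shows that the annihilator of $\{\delta_s\}_{s}$ inside the dual of $W$ is generated by the normalization and no-signalling functionals, from which the surjectivity of $M$ onto $\widetilde{\EMP}(S)$ follows by passing to annihilators.
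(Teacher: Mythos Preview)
The chapter does not prove this statement; it is cited as Theorem~5.9 of \cite{ab} and used as a black box for what follows. So there is no in-paper proof to compare against.

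Your reduction in the first two paragraphs is correct: the claim amounts to surjectivity of the marginalisation map $M$ onto the affine no-signalling subspace $\widetilde{\EMP}(S)$. The inductive gluing, however, has a genuine gap, and your diagnosis of where it lies is slightly off. The overlap $U := \bigl(\bigcup_{i<k}C_i\bigr)\cap C_k$ is in fact always a face of $\Sigma_S$, since $U \subseteq C_k \in \Sigma_S$ and simplicial complexes are downward closed. The actual obstruction is that gluing $f_{k-1}$ with $e_{C_k}$ over $U$ requires $f_{k-1}|_U = e_U$, whereas your inductive hypothesis only controls $f_{k-1}|_{C_i}$ for $i<k$, and hence only the marginals on the individual pieces $C_i \cap C_k$; those marginals do not determine the marginal on their union $U$. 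Strengthening the hypothesis to $f_{k-1}|_\sigma = e_\sigma$ for every face $\sigma \subseteq \bigcup_{i<k} C_i$ does not rescue the argument: in the triangle scenario with $C_1=\{a,b\}$, $C_2=\{b,c\}$, $C_3=\{a,c\}$, after step~$2$ the union is already all of $X_S$, so the strengthened hypothesis demands $f_2|_{\{a,c\}} = e_{\{a,c\}}$, which is the entire claim for that scenario --- the induction collapses to the original problem.

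The dual-space/annihilator route you gesture at in the last sentence is a valid way through (show that every linear functional on $W$ annihilating all the $\delta_s$ lies in the span of the no-signalling and normalisation relations, whence $\widetilde{\EMP}(S) \subseteq \mathrm{im}(M)$), but you do not carry it out. As written, the proposal is a correct reformulation followed by two unfinished sketches, the first of which cannot be completed along the lines indicated.
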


\begin{lemma}\label{lem:simsareaffine}
If $\fdec{F}{\EMP(S)}{\EMP(T)}$ preserves convex combinations, then it preserves existing affine combinations. That is, if $e=\sum r_i e_i$ where $e, e_i\colon S$ and $r_i\in\RR$ satisfy $\sum r_i=1$, then $F(e)=\sum r_i F(e_i)$.
\end{lemma}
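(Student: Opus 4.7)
The plan is to reduce the statement about affine combinations to the hypothesis about convex combinations by a standard rearrangement trick that splits the sum according to the sign of the coefficients.

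Given $e = \sum_i r_i e_i$ with $\sum_i r_i = 1$, let $P \defeq \{i : r_i \geq 0\}$ and $N \defeq \{i : r_i < 0\}$, and set $R_P \defeq \sum_{i \in P} r_i$ and $R_N \defeq \sum_{i \in N} |r_i|$. If $N = \emptyset$, then we already have a convex combination and there is nothing to prove. Otherwise $R_P = 1 + R_N > 1 > 0$, and rearranging the defining equation yields
\[
e + \sum_{i \in N} |r_i|\, e_i \;=\; \sum_{i \in P} r_i\, e_i \Mdot
\]
Dividing through by $R_P$ turns both sides into genuine convex combinations of empirical models in $\EMP(S)$:
\[
\tfrac{1}{R_P}\, e + \sum_{i \in N} \tfrac{|r_i|}{R_P}\, e_i \;=\; \sum_{i \in P} \tfrac{r_i}{R_P}\, e_i \Mcomma
\]
since the coefficients on the left sum to $(1 + R_N)/R_P = 1$ and those on the right to $R_P/R_P = 1$.

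Now I would apply $F$ to both sides and invoke the hypothesis that $F$ preserves convex combinations, obtaining
\[
\tfrac{1}{R_P}\, F(e) + \sum_{i \in N} \tfrac{|r_i|}{R_P}\, F(e_i) \;=\; \sum_{i \in P} \tfrac{r_i}{R_P}\, F(e_i) \Mdot
\]
Multiplying through by $R_P$ and rearranging gives $F(e) = \sum_{i \in P} r_i F(e_i) - \sum_{i \in N} |r_i|\, F(e_i) = \sum_i r_i F(e_i)$, as required.

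There is no real obstacle here; the only subtlety worth checking is that, when we form the auxiliary convex combinations on each side, the results genuinely lie in $\EMP(S)$. This is immediate since $\EMP(S)$ is convex (being closed under contextwise convex combinations of its distributions). The argument also implicitly uses that the appropriate affine combinations on the $\EMP(T)$ side are well-defined empirical models, but this is guaranteed once $F(e)$ and the $F(e_i)$ lie in $\EMP(T)$ and the identity we are deriving makes sense after rearrangement into convex form.
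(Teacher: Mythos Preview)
Your proof is correct and uses essentially the same rearrangement trick as the paper: move the negative-coefficient terms to the other side so that only genuine convex combinations remain, apply $F$, and rearrange back. The paper carries this out in the binary case $e=r_1e_1+r_2e_2$ and appeals to induction for the general case, whereas you handle arbitrary sums directly by grouping the coefficients by sign; this is a cosmetic difference, and if anything your version is slightly cleaner since it sidesteps the (not entirely trivial) induction step.
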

\begin{proof}
We prove this when $e=r_1e_1+r_2e_2$, the general case following by induction.
Without loss of generality we may assume that $r_1 \geq r_2$. If $r_1\leq 1$, then this is an ordinary convex combination and there is nothing to prove. If $r_1>1$, then
rearranging the equation into $e_1=(\sfrac{1}{r_1})e-(\sfrac{r_2}{r_1})e_2$ results in an ordinary convex combination. Plugging this into $F$, which we know to preserve convex combinations, and then rearranging back gives $F(e)=r_1F(e_1)+r_2F(e_2)$ as desired.
\end{proof}

From the last two results we immediately obtain the following. 

\begin{theorem}\label{thm:convexfnsdeterminedatsections}
A convex-combination preserving function $\fdec{F}{\EMP(S)}{\EMP(T)}$ is uniquely determined by its restriction to $\Ev(X_S)$; \ie if $F,G\colon \EMP(S) \doubleto\EMP(T)$ preserve convex combinations and agree on deterministic models, then $F=G$.
\end{theorem}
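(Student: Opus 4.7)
The plan is to combine the two immediately preceding results: the Abramsky--Brandenburger theorem rewriting any empirical model as an affine combination of deterministic ones, and Lemma~\ref{lem:simsareaffine} saying that convex-combination-preserving maps automatically preserve existing affine combinations. Everything else is bookkeeping.

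Concretely, I would proceed as follows. Fix an arbitrary $e \in \EMP(S)$. By the rephrased Theorem~5.9 of \cite{ab}, there exist reals $\family{r_s}_{s \in \Ev(X_S)}$ with $\sum_s r_s = 1$ such that
\[
e \;=\; \sum_{s \in \Ev(X_S)} r_s \, \delta_s \Mcomma
\]
an affine combination of deterministic empirical models. Since $F$ preserves convex combinations, Lemma~\ref{lem:simsareaffine} ensures that $F$ preserves this (existing) affine combination, giving $F(e) = \sum_s r_s F(\delta_s)$; the same reasoning applies to $G$, yielding $G(e) = \sum_s r_s G(\delta_s)$. Because $F$ and $G$ agree on deterministic models by hypothesis, $F(\delta_s) = G(\delta_s)$ for every $s \in \Ev(X_S)$, and hence $F(e) = G(e)$. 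As $e$ was arbitrary, $F = G$.

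There is essentially no obstacle here beyond invoking the two prior results; the real content sits in Lemma~\ref{lem:simsareaffine} (which extracts affine-preservation from mere convex-preservation by exploiting that $\EMP(S)$ is closed under convex combinations and that an affine combination with some coefficient $>1$ can be rearranged into a genuine convex combination). One minor point worth making explicit in the write-up is that the identification of deterministic models with elements of $\Ev(X_S)$, already noted in the discussion of non-contextuality and used in the proof of Theorem~\ref{thm:contextualiffnonsimulable}, justifies the phrasing ``restriction to $\Ev(X_S)$'' in the statement.
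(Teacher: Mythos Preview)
Your proposal is correct and matches the paper's approach exactly: the paper simply states that the theorem follows immediately from the two preceding results (the affine-combination representation of empirical models and Lemma~\ref{lem:simsareaffine}), and you have spelled out precisely that deduction.
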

 
\subsection{The case of deterministic experiments}

We first consider the crucial special case of functions
$\fdec{F}{\EMP(S)}{\EMP(T)}$ that preserve deterministic models.
As deterministic models can be identified with global assignments, such an $F$ induces a function $\Ev(X_S)\to\Ev(X_T)$. By Theorem~\ref{thm:convexfnsdeterminedatsections}, it is in fact determined by this restriction.

Specialising further to the case when $T=\nn$, for which $\Ev(X_T) = \Ev(\enset{*}) = O_{\nn,*} =\enset{0,\ldots,n-1}$,
it is easy to see when such a function arises from a deterministic experiment $S\to\nn$. This happens precisely when $F$ can be computed within a fixed context.

\begin{proposition}\label{prop:testsaslocaltests}
A function $\fdec{f}{\Ev(X_S)}{\enset{0, \ldots, n-1}}$ arises from a deterministic experiment $S\to\nn$ if and only if it factors through $\Ev_S(\sigma)$ for some $\sigma\in\Sigma_S$.
\end{proposition}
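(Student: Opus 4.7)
The plan is to unfold the definition of a deterministic experiment with codomain $\nn$, identify what it does to deterministic models, and then match this description to the claimed factorisation condition.

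First I would unpack the data. A deterministic experiment $\fdec{f}{S}{\nn}$ consists of a simplicial relation $\fdec{\pi_f}{\Sigma_{\nn}}{\Sigma_S}$ and a family $\alpha_f = \{\alpha_{f,*}\}$. Since $X_{\nn}=\enset{*}$ and $\Sigma_{\nn}$ has only $\emptyset$ and $\enset{*}$ as faces, specifying $\pi_f$ amounts to choosing a face $\sigma \defeq \pi_f(*) \in \Sigma_S$. The outcome-side component is then just a function $\fdec{\alpha_{f,*}}{\Ev_S(\sigma)}{\enset{0,\ldots,n-1}}$, since $\Ev_{\nn}(*) = O_{\nn,*} = \enset{0,\ldots,n-1}$. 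So the bijective datum for a deterministic experiment $S \to \nn$ is precisely a pair $\tuple{\sigma, \bar{f}}$ with $\sigma \in \Sigma_S$ and $\fdec{\bar{f}}{\Ev_S(\sigma)}{\enset{0,\ldots,n-1}}$.

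Next I would compute the action on deterministic models, which by Theorem~\ref{thm:convexfnsdeterminedatsections} determines the induced function on $\Ev(X_S)$. For any global assignment $s \in \Ev(X_S)$, using the formula for $\EMP(f)$ applied to $\delta_s$ at the unique context $\enset{*}$ of $\nn$, one gets a delta distribution on $\alpha_{f,*}(s|_\sigma) = \bar{f}(s|_\sigma)$. In other words, the function on $\Ev(X_S)$ arising from the deterministic experiment $\tuple{\sigma,\bar{f}}$ is exactly the composite
\[
\Ev(X_S) \xrightarrow{(-)|_\sigma} \Ev_S(\sigma) \xrightarrow{\bar{f}} \enset{0,\ldots,n-1} \Mdot
\]
This directly proves the $(\Rightarrow)$ direction.

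For the converse, suppose $\fdec{f}{\Ev(X_S)}{\enset{0,\ldots,n-1}}$ factors as $f = \bar{f} \circ \rho_\sigma$, where $\fdec{\rho_\sigma}{\Ev(X_S)}{\Ev_S(\sigma)}$ is the restriction (projection) map for some $\sigma \in \Sigma_S$ and $\fdec{\bar{f}}{\Ev_S(\sigma)}{\enset{0,\ldots,n-1}}$. Then the deterministic experiment associated with the pair $\tuple{\sigma, \bar{f}}$ induces, by the previous paragraph, exactly the function $\bar{f}\circ\rho_\sigma = f$. This gives the $(\Leftarrow)$ direction and completes the proof.

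The proof is essentially an unpacking of definitions, so there is no real obstacle; the only thing to be careful with is checking that the simplicial-relation requirement on $\pi_f$ reduces exactly to the face condition $\sigma \in \Sigma_S$ (which uses that $\Sigma_S$ is downward closed and that $\emptyset \in \Sigma_S$), so that the correspondence between experiments $S \to \nn$ and pairs $\tuple{\sigma,\bar{f}}$ is tight.
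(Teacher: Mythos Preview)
Your argument is correct. The paper does not supply a proof of this proposition at all---it is stated as an immediate consequence of the definitions---so your unfolding is precisely the kind of verification the reader is expected to carry out, and it matches the paper's informal gloss that a deterministic experiment on $S$ ``is given by a choice of $\sigma\in\Sigma_S$ and a subset of $\Ev_S(\sigma)$'' (in the $n=2$ case) and that the induced function is ``computed within a fixed context''.
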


The following lemma will be crucial to our characterisation.
It contains a result that can actually be phrased in
purely classical (\ie non-contextual, even deterministic) language.
It concerns deterministic functions of the form 
\[\fdec{f}{O_1\times \cdots \times O_k}{Q}\]
which we think about as a function on $k$ arguments.
One might be interested in calculating the result of the function by inspecting
as few of the arguments as possible.
However, deciding which of them to inspect must be done statically,
before any of their values are known.
The point of the lemma is that there is always a canonical optimal choice,
a least subset of arguments that can get the job done.
This no longer holds if one is permitted the extra flexibility  
of dynamically choosing the next argument to inspect depending on previously observed values.
Indeed, this constitutes the main obstacle to answering question \eqref{q:proc} for adaptive procedures, which have such flexibility~\cite{ourlics:comonadicview}.

\begin{lemma}\label{lem:leastsubset}
Let $S$ be a scenario and  $\fdec{F}{\Ev(X_S)}{Y}$ a function. Then there is a least subset $U\subseteq X_S$ such that $F$ factors through $\Ev(U)$. 

Moreover, let $\family{F_i}_{i \in I}$ be a family of functions $\fdec{F_i}{\Ev(X_S)}{Y_i}$,
and let $U_i$ be the least subset such that $F_i$ factors through $\Ev(U_i)$. Then the least subset $U$ such that $\fdec{\tuple{F_i}_{i\in I}}{\Ev(X)}{\prod_{i\in I} Y_i}$ factors through $\Ev(U)$ equals $\bigcup_{i \in I} U_i$.
\end{lemma}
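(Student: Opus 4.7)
The plan is to exhibit the least subset explicitly as the intersection of all subsets of $X_S$ through which $F$ factors. Let $\mathcal{U}_F \defeq \{ U \subseteq X_S \mid F \text{ factors through } \Ev(U)\}$. This collection is non-empty (it contains $X_S$) and $X_S$ is finite, so it suffices to show that $\mathcal{U}_F$ is closed under binary intersections; the least element will then be the intersection of all its members.

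The heart of the argument is a patching step. Suppose $U_1, U_2 \in \mathcal{U}_F$, and let $s, s' \in \Ev(X_S)$ satisfy $s|_{U_1 \cap U_2} = s'|_{U_1 \cap U_2}$. Define an intermediate assignment $s'' \in \Ev(X_S)$ by $s''(x) \defeq s(x)$ if $x \in U_1$ and $s''(x) \defeq s'(x)$ otherwise; note this is well-defined since $s$ and $s'$ are total on $X_S$, so no appeal to non-emptiness of outcome sets is required. Then $s''|_{U_1} = s|_{U_1}$, and hence $F(s'') = F(s)$ since $F$ factors through $\Ev(U_1)$. For $U_2$: on $U_1 \cap U_2$ we have $s''(x) = s(x) = s'(x)$ by the hypothesis on $s, s'$, while on $U_2 \setminus U_1$ we have $s''(x) = s'(x)$ by definition; so $s''|_{U_2} = s'|_{U_2}$, and $F(s'') = F(s')$ since $F$ factors through $\Ev(U_2)$. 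Combining these gives $F(s) = F(s')$, showing $U_1 \cap U_2 \in \mathcal{U}_F$.

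For the second part, write $U \defeq \bigcup_{i \in I} U_i$. Each $F_i$ factors through $\Ev(U_i) \subseteq \Ev(U)$ (via the projection $\Ev(U) \to \Ev(U_i)$), so the tupling $\langle F_i\rangle_{i\in I}$ factors through $\Ev(U)$ as well. Conversely, if $\langle F_i\rangle_{i\in I}$ factors through $\Ev(V)$, then post-composing with the $i$-th projection $\prod_{j \in I} Y_j \to Y_i$ shows that each $F_i$ factors through $\Ev(V)$; by minimality $U_i \subseteq V$ for every $i$, whence $U \subseteq V$. Thus $U$ is the least subset through which the tuple factors.

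There is no real obstacle here: the only subtlety is the patching construction of $s''$, and its only prerequisite --- that $s$ and $s'$ already provide values everywhere --- is automatic. The lemma's real significance lies in its failure to generalise to dynamic/adaptive choices of arguments, as flagged in the paragraph preceding the statement; but proving the static version itself is a purely set-theoretic exercise.
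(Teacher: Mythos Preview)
Your proof is correct and follows essentially the same route as the paper: both reduce the first claim to closure under binary intersection via the same patching construction (your $s''$ is the paper's $r$, obtained by overwriting one assignment with the other outside $U_1$), and both handle the second claim by the obvious two-way inclusion. Your framing in terms of the set $\mathcal{U}_F$ and the explicit appeal to finiteness of $X_S$ is slightly more spelled out than the paper's, but there is no substantive difference.
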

\begin{proof}
For the first part it suffices to show that if $F$ factors through $\Ev(V)$ and through $\Ev(W)$, then it factors through $\Ev(V\cap W)$. Note first that $F$ factors through $\Ev(U)$ if and only if for all $s,t \in \Ev(X_S)$, $s|_U=t|_U$ implies $F(s)=F(t)$. Now, assume that $F$ factors through $\Ev(V)$ and through $\Ev(W)$, and let $s, t \in \Ev(X_S)$ satisfying $s|_{V\cap W}=t|_{V\cap W}$. Define $r$ by gluing together $s|_V$ and $t|_{X\setminus V}$. This satisfies $r|_V=s|_V$ and $r|_W=t|_W$ by construction. As $F$ factors through $\Ev(V)$ and through $\Ev(W)$, we conclude that $F(s)=F(r)=F(t)$, as desired.  

For the second part, let $F_i$, $U_i$ and $U$ be as in the statement. Since $\bigcup_i U_i$ contains each $U_i$, the function $F_i$ factors through $\Ev(\bigcup_i U_i)$ and thus so does $\tuple{F_i}_{i \in I}$.
Hence, $U\subseteq\bigcup_i U_i$. On the other hand, each $F_i$ factors through $\Ev(U)$, whence $U$ contains each $U_i$ and thus it contains $\bigcup_i U_i$, completing the proof.
\end{proof}

We discuss a more abstract interpretation of this result in Section~\ref{ssec:variantsofmainresult},
when we examine why its analogue fails in the adaptive case.

\subsection{The case of deterministic procedures}\label{ssec:casedetmaps}

We will now leverage Lemma~\ref{lem:leastsubset} to answer when a function $\fdec{F}{\EMP(S)}{\EMP(T)}$  is induced by a deterministic procedure $S\to T$. As deterministic procedures map deterministic models to deterministic models, any $\fdec{F}{\EMP(S)}{\EMP(T)}$ induced by a deterministic procedure must do so as well. For such a function, the restriction to deterministic models induces a function $\fdec{\hat{F}}{\Ev(X_S)}{\Ev(X_S)}$ fitting into a commutative square 
\[\begin{tikzpicture} 
         \matrix (m) [matrix of math nodes,row sep=2em,column sep=5em,minimum width=2em]
         {
          \Ev_S(X_S) & \Ev_T(X_T)  \\
          \EMP(S) & \EMP(T)  \\};
         \path[->]
         (m-1-1) edge node [left] {} (m-2-1)
                edge node [above] {$\hat{F}$} (m-1-2)
         (m-1-2) edge node [right] {} (m-2-2)
         (m-2-1) edge node [below] {$F$} (m-2-2);
  \end{tikzpicture}\]
Moreover, we know that $F$ must preserve convex combinations, so that  $F$ is determined uniquely by $\hat{F}$ by Theorem~\ref{thm:convexfnsdeterminedatsections}. We can now use $\hat{F}$ to characterize when $F$ is induced by a deterministic procedure. 

\begin{theorem}\label{thm:characterizingdeterministicprocedures}
    Let  $\fdec{F}{\EMP(S)}{\EMP(T)}$ be a function preserving convex combinations and deterministic models. Then $F$ is induced by a deterministic procedure $S\to T$ iff for each context $\sigma\in \Sigma_T$ the composite \[\Ev_S(X_S)\xrightarrow{\hat{F}}\Ev_T(X_T)\to \Ev_T(\sigma)\] factors through $\Ev_S(\tau)$ for some $\tau\in\Sigma_S$.
\end{theorem}

\begin{proof}
    Assume first that $F=\EMP(f)$, and let $\sigma\in \Sigma_T$. Then the composite $\Ev_S(X_S)\xrightarrow{\hat{F}}\Ev_T(X_T)\to \Ev_T(\sigma)$ factors trough $\Ev_S(\pi_f(\sigma))$ and $\pi_F(\sigma)\in\Sigma_S$ as $\pi$ is simplicial. 

    For the converse, let us define a procedure $f\colon S\to T$ inducing $F$ as follows. Given $x\in X_T$, let $\pi_f(x)$ be the least subset of $X_S$ such that $\Ev_S(X_S)\xrightarrow{\hat{F}}\Ev_T(X_T)\to \Ev_T(x)$ factors through $\Ev_S(\pi_f(x))$, as guaranteed by Lemma~\ref{lem:leastsubset}. As restriction maps are surjective, the map $\Ev_S(\pi_f(x))\to \Ev_T(x)$ appearing in the factorization of $\Ev_S(X_S)\xrightarrow{\hat{F}}\Ev_T(X_T)\to \Ev_T(x)$
    is uniquely determined, and we define $\alpha_{f,x}$ to be that map. 

    This defines $f=(\pi_f,\alpha_f)$. We now show that $\pi_f$ is a simplicial relation. Indeed, for any $\sigma\in\Sigma_S$, the second part of Lemma~\ref{lem:leastsubset} implies that $\pi_f(\sigma)=\bigcup_{x\in \sigma} \pi_f(x)$ is the least subset such that $\Ev_S(X_S)\xrightarrow{\hat{F}}\Ev_T(X_T)\to \Ev_T(\sigma)$ factors through $\Ev_S(\pi_f(\sigma))$. By our assumption on $\hat{F}$, we must have $\pi_f(\sigma)\in\Sigma_S$, so that $\pi_f$ is a simplicial relation.

    Finally, $f$ is constructed so that the induced map $\Ev_S(X_S)\to\Ev_S(\pi_f(x))\to \Ev_T(x)$ is equal to $\Ev_S(X_S)\xrightarrow{\hat{F}}\Ev_T(X_T)\to \Ev_T(x)$ for each $x\in X_T$. This implies that both $F$ and $\EMP(f)$ agree when restricted to deterministic models, whence $F=\EMP(f)$ by Theorem~\ref{thm:convexfnsdeterminedatsections}.
\end{proof}

\subsection{The case of probabilistic experiments}\label{ssec:caseprobabilisticexperiments}

We now generalise Proposition~\ref{prop:testsaslocaltests} in order to answer the question of when a function $F\colon \EMP(S)\to\EMP(\nn)$ is induced by a probabilistic experiment $S\to\nn$. Assuming that $F$ preserves convex combinations, Theorem~\ref{thm:convexfnsdeterminedatsections} implies that we can simplify this situation by restricting $F$ to $\Ev(X_S)$.

Note that $\EMP(\nn) \cong D(O_{\nn,*}) \cong D(\enset{0,\ldots n-1})$.
So we aim to characterise the functions 
\[\fdec{F}{\Ev(X_S)}{D(\enset{0,\ldots n-1})}\]
that are induced by probabilistic experiments $S\to\nn$. 
This happens if and only if 
$F$ can be written as a convex combination
$F = \sum_j r_jF_j$, with $r_j \geq 0$ and $\sum_j r_j = 1$,  where each \[\fdec{F_j}{\Ev(X_S)}{\enset{0,\ldots, n-1}}\] is a deterministic function
factoring through a projection $\Ev(X_S)\to\Ev(\sigma_j)$ for some context $\sigma_j\in \Sigma_S$.
This follows from the definition of probabilistic procedures as convex mixtures of deterministic ones, the fact that $\EMP$ preserves convex mixtures, \ie
\[\EMP(\sum_j r_jf_j) = \sum_j r_j \EMP(f_j),\]
and the characterisation of deterministic experiments given in Proposition~\ref{prop:testsaslocaltests}.

\subsection{The general case and procedures as empirical models}

We now wish to answer the general question of when a function $F\colon \EMP(S)\to\EMP(T)$ is induced by a probabilistic procedure $S\to T$.
In order to understand $F$ we can study the composites
\[\EMP(S)\to\EMP(T)\to\EMP(T|_\sigma)\] for each context $\sigma\in\Sigma_T$.
Note that each of these is an instance of the simpler setting considered in Section~\ref{ssec:caseprobabilisticexperiments}.
As before, we may assume that $F$ preserves convex combinations, so that it is determined by its values on $\Ev(X_S)$.
Since in addition $\EMP(T|_\sigma) \cong D(\Ev_T(\sigma))$,
each composite above is a function of type
\[\fdec{F_\sigma}{\Ev_S(X_S)}{D(\Ev_T(\sigma))} \Mdot\]
When the map $F$ is induced by a probabilistic procedure,
then each $F_\sigma$ is described as a convex mixture of (deterministic) functions
\[\fdec{F_{\sigma,i}}{\Ev_S(X_S)}{\Ev_T(\sigma)} \Mcomma\]
each of which must factor through $\Ev(\tau)$ for some $\tau$ a context of $S$ by Proposition~\ref{prop:testsaslocaltests}.
So, for each context $\sigma\in \Sigma_T$ we get some probabilistic data, namely a probability distribution on such functions. 
It also turns out that this data glues well across different $\sigma \in \Sigma_T$.
This suggests thinking of it as an empirical model on a new scenario.
We now make this idea precise.

\begin{definition}
Let $S$ and $T$ be measurement scenarios. We define a new scenario $[S,T]$
by setting $X_{[S,T]} \defeq X_T$, $\Sigma_{[S,T]}\defeq \Sigma_T$, and 
\[O_{[S,T],x}\defeq\setdef{\tuple{U,\alpha}}{U\subseteq X_S, \fdec{\alpha}{\Ev_S(U)}{O_{T,x}}} \Mdot\]
We equip $[S,T]$ with the possibilistic predicate $g_{S,T}$ defined by $g_{S,T}\defeq\bigvee_{\sigma \in \Sigma_T}g_\sigma$ where $g_\sigma$ checks that only a compatible part of $S$ is used in the simulation of the context $\sigma\in\Sigma_T$. More formally, $g_\sigma$ corresponds to the subset \[\setdef{\family{\tuple{U_x,\alpha_x}}_{x\in \sigma}}{\bigcup_{x \in \sigma} U_x\in\Sigma_S} \; \subseteq \; \Ev_{[S,T]}(\sigma) \Mdot\]
\end{definition}

\begin{proposition}\label{prop:modelson[S,T]}
Deterministic procedures $S \to T$ correspond bijectively to
deterministic empirical models of $[S,T]$ satisfying $g_{S,T}$.
\end{proposition}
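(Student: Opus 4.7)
The plan is to observe that both sides of the claimed bijection unfold into the same combinatorial data, and that the predicate condition $g_{S,T}$ on one side is literally the simplicial-relation condition on the other. Since the construction of $[S,T]$ and $g_{S,T}$ was engineered precisely for this purpose, the proof should essentially be a matter of matching definitions, and I do not expect a real technical obstacle.

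First, I would unpack what a deterministic empirical model on $[S,T]$ looks like. As noted earlier in the text, $\Ev_{[S,T]}$ is a sheaf, so deterministic empirical models correspond bijectively to global assignments $s \in \Ev_{[S,T]}(X_{[S,T]}) = \Ev_{[S,T]}(X_T)$. Writing out the definition of $O_{[S,T],x}$, such an $s$ is precisely a family $\family{\tuple{U_x, \alpha_x}}_{x \in X_T}$ assigning to each $x \in X_T$ a subset $U_x \subseteq X_S$ together with a map $\fdec{\alpha_x}{\Ev_S(U_x)}{O_{T,x}}$. On the other hand, a deterministic procedure $\fdec{f}{S}{T}$ is by definition a pair $\tuple{\pi_f, \alpha_f}$ comprising a simplicial relation $\fdec{\pi_f}{\Sigma_T}{\Sigma_S}$ and a family of maps $\fdec{\alpha_{f,x}}{\Ev_S(\pi_f(x))}{\Ev_T(x) = O_{T,x}}$.

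The candidate bijection is the obvious one: in one direction, send $f = \tuple{\pi_f, \alpha_f}$ to the global assignment $x \mapsto \tuple{\pi_f(x), \alpha_{f,x}}$; in the other, send a global assignment $x \mapsto \tuple{U_x, \alpha_x}$ to the pair $\tuple{\pi, \alpha}$ where $\pi(x) \defeq U_x$ is extended to subsets by $\pi(\sigma) \defeq \bigcup_{x \in \sigma} U_x$ and the components of $\alpha$ are as given. Mutual inverseness is immediate by construction; what remains is to check that the two constructions restrict to the desired subclasses on each side.

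The crux is the comparison between the $g_{S,T}$-condition and the simplicial-relation condition. Unfolding the definition of satisfaction (Definition~\ref{def:predicatesatisfaction}) for the possibilistic predicate $g_{S,T} = \bigvee_{\sigma \in \Sigma_T} g_\sigma$ applied to a deterministic model with global assignment $s$, one has $\EMP_\BB(g_{S,T}) \, d = \bigvee_\sigma \EMP_\BB(g_\sigma) \, d$, and this equals the deterministic outcome $1$ on $\two$ if and only if each $\EMP_\BB(g_\sigma) \, d$ does, which by definition of $g_\sigma$ means $s|_\sigma$ lies in its winning subset, i.e.\ $\bigcup_{x \in \sigma} U_x \in \Sigma_S$ for every $\sigma \in \Sigma_T$. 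On the other side, $\pi$ is a simplicial relation precisely when $\pi(\sigma) = \bigcup_{x \in \sigma} \pi(x) \in \Sigma_S$ for every $\sigma \in \Sigma_T$; specialising to singleton contexts $\enset{x} \in \Sigma_T$ also ensures each $U_x \in \Sigma_S$, so that $\pi$ indeed takes individual measurements to contexts of $S$ as required. These are manifestly the same condition, completing the bijection.
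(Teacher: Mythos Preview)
Your proposal is correct and follows essentially the same approach as the paper's own proof: both identify deterministic models on $[S,T]$ with global assignments $x \mapsto \tuple{U_x,\alpha_x}$, observe that this is exactly the data of a would-be deterministic procedure with $\pi_f(x)=U_x$, and then note that satisfaction of $g_{S,T}$ is precisely the condition that $\pi_f$ be simplicial. Your version is somewhat more explicit, in particular in spelling out why the possibilistic join $\bigvee_\sigma \EMP_\BB(g_\sigma)\,\delta_s$ equals $\delta_1$ iff each summand does, whereas the paper simply asserts the equivalence.
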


\begin{proof}
Deterministic models are determined by global assignments. Such an assignment
$s\in\Ev(X_{[S,T]})$ consists of an outcome for each measurement $x\in X_{[S,T]}=X_T$.
Each such outcome is a pair $\tuple{U_x,\alpha_x}$ consisting of a subset of measurements of $S$, $U_x \subseteq X_S$, and a function $\fdec{\alpha_x}{\Ev_S(U_x)}{\Ev_T(\enset{x})}$.

This is almost exactly the data needed to specify a deterministic procedure $\fdec{f = \tuple{\pi_f,\alpha_f}}{S}{T}$, with $\pi_f$ defined so that $\pi_f(x)\defeq U_x$. The only caveat is that the relation $\fdec{\pi_f}{X_T}{X_S}$ thus defined need not be simplicial (with respect to the complexes $\Sigma_T$ and $\Sigma_S$).
It turns out that it is a simplicial relation if and only if the deterministic model $\delta_s$ satisfies the predicate  $g_{S,T}$. 
\end{proof}

\begin{corollary}\label{cor:modelson[S,T]}
Probabilistic procedures $S\to T$ correspond bijectively to probability distributions on
\[\setdef{s \in \Ev_{[S,T]}(X_{[S,T]})}{\text{$\delta_s$ satisfies $g_{S,T}$}} \Mcomma\]
and thus give rise to all non-contextual models $e\colon\tuple{[S,T],g_{S,T}}$.
\end{corollary}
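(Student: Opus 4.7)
The plan is to leverage Proposition~\ref{prop:modelson[S,T]} to lift its bijection for deterministic objects to the probabilistic level, then verify separately that the resulting correspondence hits precisely the non-contextual models on $[S,T]$ that satisfy $g_{S,T}$. Write $V \defeq \setdef{s \in \Ev_{[S,T]}(X_{[S,T]})}{\delta_s \text{ satisfies } g_{S,T}}$ for convenience.

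For the first assertion, I would argue as follows. By definition, a probabilistic procedure $S\to T$ is an element of $\Dist(\ScenDet(S,T))$. Proposition~\ref{prop:modelson[S,T]} establishes a bijection $\ScenDet(S,T)\cong V$ (using that $\Ev_{[S,T]}$ is a sheaf, deterministic models on $[S,T]$ are in bijection with global assignments, so the proposition's bijection between deterministic procedures and deterministic models satisfying $g_{S,T}$ becomes a bijection with $V$). Applying the $\Dist$ functor to this bijection of sets yields the desired bijection between probabilistic procedures and $\Dist(V)$.

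For the second assertion, consider the natural map $\Dist(V)\to\EMP([S,T])$ sending $d\in\Dist(V)$ to the empirical model $e$ with $e_\sigma\defeq d|_\sigma$; equivalently, $e=\sum_{s\in V} d(s)\,\delta_s$. The image lies in the non-contextual models satisfying $g_{S,T}$: non-contextuality is witnessed by $d$ itself; and since $g_{S,T}=\bigvee_{\sigma\in\Sigma_T} g_\sigma$, satisfaction reduces to checking, at each maximal context $\sigma$, that every element of $\supp e_\sigma=(\supp d)|_\sigma$ is a winning local assignment, which holds because each $s\in\supp d\subseteq V$ has $\delta_s$ satisfying $g_{S,T}$ by definition of $V$. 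Conversely, given any non-contextual $e\colon\tuple{[S,T],g_{S,T}}$, non-contextuality supplies a distribution $d$ on $\Ev_{[S,T]}(X_{[S,T]})$ with $d|_\sigma=e_\sigma$ for all $\sigma$; I claim $d$ is supported on $V$. For if $d(s)>0$ with $s\notin V$, there is a maximal context $\sigma$ at which $s|_\sigma$ is not a winning local assignment, but then $s|_\sigma\in\supp e_\sigma$ (as $e_\sigma(s|_\sigma)\geq d(s)>0$), contradicting the satisfaction of $g_\sigma$ by $e$. Thus $d\in\Dist(V)$ and maps to $e$, establishing surjectivity.

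There is no significant obstacle here; the argument is essentially a bookkeeping exercise that unpacks Proposition~\ref{prop:modelson[S,T]}, the characterisation of non-contextual models as marginalisations of distributions on global assignments, and the definition of predicate satisfaction. The only point requiring mild care is recognising that satisfaction of the possibilistic predicate $g_{S,T}$ by a probabilistic model is transparently equivalent to the support of each $e_\sigma$ being contained in the winning subset at $\sigma$, which is what allows the witness $d$ of non-contextuality to be restricted to $V$ without loss.
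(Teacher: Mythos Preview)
Your proof is correct and follows essentially the same approach as the paper: apply $\Dist$ to the bijection of Proposition~\ref{prop:modelson[S,T]} for the first assertion, then check that the resulting map hits exactly the non-contextual models satisfying $g_{S,T}$. If anything, you are more thorough on the surjectivity direction --- the paper simply asserts that a non-contextual model satisfying $g_{S,T}$ can be decomposed into deterministic models each satisfying $g_{S,T}$, whereas you spell out why any global assignment in the support of a witnessing distribution must lie in $V$.
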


\begin{proof}
Recall that a probabilistic procedure is a convex combination of deterministic procedures.
The statement thus follows from applying $\Dist$ to both sides of the bijection from
Proposition~\ref{prop:modelson[S,T]}
between deterministic procedures $S \to T$ and deterministic models in $\tuple{[S,T],g_{S,T}}$.

Explicitly, given a probabilistic procedure $\sum r_i f_i$ with each $\fdec{f_i}{S}{T}$ a deterministic procedure,
and writing $s_i \in \Ev(X_{[S,T]})$ for the global assignment corresponding to $f_i$, then
$\sum r_i \delta_{s_i} \colon \tuple{[S,T],g_{S,T}}$ is the empirical model corresponding to $\sum r_i f_i$. It satisfies $g_{S,T}$ because each $\delta_{s_i}$ does, and moreover, it is non-contextual by construction.

Conversely, any non-contextual model in $\tuple{[S,T],g_{S,T}}$ can be written (not necessarily uniquely) as a convex combination of deterministic models $\delta_s$ satisfying $g_{S,T}$, so it arises in this fashion. 
\end{proof}

We have seen in Definition~\ref{def:EMPprobabilistic} that probabilistic procedures $\fdec{f}{S}{T}$ determine convex-preserving maps $\fdec{\EMP(f)}{\EMP(S)}{\EMP(T)}$ between sets of empirical models.
If two procedures induce the same (non-contextual) model on $[S,T]$ as in Corollary~\ref{cor:modelson[S,T]}, then both determine the same action on empirical models.
This is because $\EMP(f)$ depends only on how $f$ behaves at each context of $T$, which is precisely the information retained in the empirical model induced by $f$.
One may then wonder, what about other, possibly contextual, empirical models?

In other words, the situation is represented in the diagram below, which we now explain.
The map sending a probabilistic procedure $f$ to its action $\EMP(f)$ on empirical models factors through the map sending $f$ to the noncontextual model $e_f \colon \tuple{[S,T],g_{S,T}}$ induced by it.
Hence, there is a map (depicted by a dashed arrow in the diagram) from the set $\textsf{NC}([S,T],g_{S,T})$ of noncontextual empirical models on $[S,T]$ satisfying $g_{S,T}$ to the set of maps $\EMP(S)\to\EMP(T)$.
The question is then whether this map can be extended to arbitrary empirical models in a reasonable way,
as depicted by the dotted arrow labelled by a question mark in the diagram.

 \[\begin{tikzpicture} 
         \matrix (m) [matrix of math nodes,row sep=5em,column sep=5em,minimum width=2em]
         {
          \Scen(S,T)  & \cat{Set}(\EMP(S),\EMP(T))  \\
         \textsf{NC}([S,T],g_{S,T}) & \EMP([S,T],g_{S,T})  \\};

         \path[->] (m-1-1) edge node [left] {$f\mapsto e_f$} (m-2-1);
         \path[->] (m-1-1) edge node [above] {$f\mapsto\EMP(f)$} (m-1-2);
         \path[right hook->](m-2-1) edge (m-2-2);
         \path[->,dashed] (m-2-1) edge (m-1-2);
         \path[->,dotted] (m-2-2) edge node[right] {$e\mapsto F_e$~{\large?}} (m-1-2);
  \end{tikzpicture}\]

Indeed, any empirical model $e \colon \tuple{[S,T},g_{[S,T]})$ gives rise to a convex-preserving map $F_e\colon \EMP(S) \to \EMP(T)$.
Intuitively, given a context $\sigma$ of measurements in $T$, sampling from $e_\sigma$ tells us what context to measure in the scenario $S$ and how to interpret the outcomes as outcomes for $\sigma$. In this way, any empirical model on $S$ gets mapped to one on $T$.

Let us now make this more precise.
We begin by analysing the structure of an arbitrary empirical model $e \colon \tuple{[S,T],g_{[S,T]}}$. 
For each $\sigma\in \Sigma_T$, the support of $e_\sigma$ consists of tuples of the form $\family{\tuple{U_x,\alpha_x:\Ev_S(U_x)\to O_{T,x}}}_{x \in \sigma}$,
where the fact that $e$ satisfies $g_{S,T}$ implies that
$\bigcup_{x \in \sigma} U_x \in \Sigma_S$.
As a result, we can equivalently view $e_\sigma$  as a probabilistic procedure  $f_{e,\sigma}\colon S\to T|_\sigma$.
Moreover, the no-signalling condition for $e$ means that these probabilistic procedures fit together in the sense that given $\tau\subseteq \sigma$ in $\Sigma_T$, the procedure $f_{e,\tau}$ can be obtained from $f_{e,\sigma}$ by restricting to $\tau$. In other words, the triangle
 \[\begin{tikzpicture} 
         \matrix (m) [matrix of math nodes,row sep=2em,column sep=5em,minimum width=2em]
         {
          S & T|_\sigma  \\
           & T|_\tau  \\};
         \path[->]
         (m-1-1) edge node [below] {$f_{e,\tau}$}  (m-2-2)
                edge node [above] {$f_{e,\sigma}$} (m-1-2)
         (m-1-2) edge node [right] {} (m-2-2);
  \end{tikzpicture}\]
commutes.
Note that such a family $\family{\fdec{f_{e,\sigma}}{S}{T|_\sigma}}_{\sigma\in\Sigma_T}$ of compatible  probabilistic procedures
does not necessarily arise as the restrictions of a single (global) procedure $S \to T$.
Such glueing is possible precisely when $e$ is noncontextual.

We now use this perspective on an empirical model $e \colon \tuple{[S,T],g_{[S,T]}}$ to define the map $\fdec{F_e}{\EMP(S)}{\EMP(T)}$. We do this by setting $F_e(d)_\sigma\defeq\EMP(f_{e,\sigma})(d)$: the no-signalling condition on $e$ then implies that $F_e(d)$ is a no-signalling model on $T$ for any $d \colon S$.

The discussion above establishes the following theorem.

\begin{theorem}\label{thm:mainthm}
A function $F\colon \EMP(S)\to\EMP(T)$ preserving convex combinations is induced by a probabilistic procedure $S\to T$ if and only if $F = F_e$ for some $e\colon [S,T]$ that is non-contextual and satisfies $g_{S,T}$.
\end{theorem}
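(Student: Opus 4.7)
The plan is to prove the two directions separately, with the forward direction proceeding by direct construction and the reverse direction requiring a careful reconstruction of a procedure from a non-contextual decomposition.

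For $(\Rightarrow)$, suppose $F = \EMP(f)$ for some probabilistic procedure $f = \sum_k c_k f_k$ with each $f_k$ deterministic. By Proposition~\ref{prop:modelson[S,T]}, each $f_k$ corresponds to a deterministic model $\delta_{t_k}$ on $[S,T]$ that satisfies $g_{S,T}$. The goal is to verify that $e_F = \sum_k c_k \delta_{t_k}$, which reduces to a context-wise check: at each $\sigma \in \Sigma_T$, the decomposition of the map $s \mapsto F(\delta_s)|_\sigma$ arising from $\sum_k c_k \EMP(f_k)$, after collecting equal deterministic summands, is identified with the canonical decomposition $\sum_i r_i \delta_{\tuple{U_i, \tilde{F}_i}}$ used to define $e_{F,\sigma}$. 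Once this identification is in place, non-contextuality of $e_F$ follows because it is exhibited as a convex combination of deterministic models, and satisfaction of $g_{S,T}$ follows because each summand $\delta_{t_k}$ does.

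For $(\Leftarrow)$, suppose $e_F$ is non-contextual and satisfies $g_{S,T}$. Write $e_F = \sum_k c_k \delta_{t_k}$ as a convex combination of deterministic models with $c_k > 0$. Since the possibilistic support of $(\delta_{t_k})_\sigma$ lies inside that of $e_{F,\sigma}$ for every context $\sigma$, and $g_{S,T}$ is a possibilistic predicate, each $\delta_{t_k}$ also satisfies $g_{S,T}$. Proposition~\ref{prop:modelson[S,T]} then yields a deterministic procedure $f_k\colon S \to T$ for each $k$, and we assemble the probabilistic procedure $f \defeq \sum_k c_k f_k$. By Lemma~\ref{lem:simsareconvex}, $\EMP(f)$ preserves convex combinations, so by Theorem~\ref{thm:convexfnsdeterminedatsections} it suffices to show that $\EMP(f)(\delta_s) = F(\delta_s)$ for every global assignment $s \in \Ev_S(X_S)$.

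The main obstacle is this last agreement check. The strategy is to fix a context $\sigma \in \Sigma_T$ and compare the two distributions in $\Dist(\Ev_T(\sigma))$. On one hand, restricting $e_F = \sum_k c_k \delta_{t_k}$ to $\sigma$ gives $e_{F,\sigma} = \sum_k c_k \delta_{t_k|_\sigma}$; on the other, by the construction of $e_F$, the same marginal equals $\sum_i r_i \delta_{\tuple{U_i, \tilde{F}_i}}$ with the family-elements $\tuple{U_i, \tilde{F}_i}$ distinct thanks to Lemma~\ref{lem:leastsubset}. The uniqueness of convex decompositions into distinct Dirac measures then forces a matching in which each $t_k|_\sigma$ equals some $\tuple{U_i, \tilde{F}_i}$ and the weights regroup so that $\sum_{k : t_k|_\sigma = \tuple{U_i,\tilde{F}_i}} c_k = r_i$. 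Substituting into Definition~\ref{def:EMPprobabilistic} for $\EMP(f_k)$ on the deterministic input $\delta_s$, the sum $\sum_k c_k (\EMP(f_k)(\delta_s))_\sigma$ collapses to $\sum_i r_i \delta_{F_i(s)}$, which is precisely $(F(\delta_s))_\sigma$ by the canonical decomposition used to define $e_F$.
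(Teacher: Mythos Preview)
Your overall strategy matches the paper's, and your reverse direction is correct (indeed, it spells out the ``by construction'' step more carefully than the paper does). However, your forward direction has a genuine gap.

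You claim that $e_F = \sum_k c_k \delta_{t_k}$, where $t_k \in \Ev(X_{[S,T]})$ is the global assignment corresponding to $f_k$ via Proposition~\ref{prop:modelson[S,T]}. This need not hold. The point is that the assignment $t_k$ records, at each $x \in X_T$, the pair $\tuple{\pi_{f_k}(x),\alpha_{f_k,x}}$, whereas the definition of $e_{F,\sigma}$ uses the \emph{least} subset $U_i$ through which each deterministic summand $F_i$ factors. If some $f_k$ uses a non-minimal $\pi_{f_k}(x)$, then $t_k|_\sigma$ is a different element of $\Ev_{[S,T]}(\sigma)$ from the corresponding $\tuple{U_i,\tilde F_i}$, even though the induced functions $\Ev_S(X_S)\to\Ev_T(\sigma)$ agree. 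For a concrete instance: if $f_1$ and $f_2$ induce the same function (say ``copy the outcome of $a$'') but $\pi_{f_1}(*)=\{a\}$ while $\pi_{f_2}(*)=\{a,b\}$, then $\tfrac12\delta_{t_1}+\tfrac12\delta_{t_2}$ is supported on two distinct points of $\Ev_{[S,T]}(\{*\})$, whereas $e_{F,\{*\}}$ is a Dirac at the single point with least subset $\{a\}$. Your sentence about ``collecting equal deterministic summands'' is an identification at the level of functions $\Ev_S(X_S)\to\Ev_T(\sigma)$, not at the level of elements of $\Ev_{[S,T]}(\sigma)$, and these are not interchangeable.

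The paper closes this gap at the outset of the forward direction: it first replaces each $f_i$ by the procedure using the least subsets guaranteed by Lemma~\ref{lem:leastsubset}, observing that this does not change the push-forward $\EMP(f_i)$. After this normalisation, the $t_k$ do coincide with the canonical $\tuple{U_i,\tilde F_i}$ representatives, and your claimed equality $e_F=\sum_k c_k\delta_{t_k}$ then holds on the nose. With that one added line, your argument is complete and aligns with the paper's.
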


Observe that
taking $S$ to be the trivial scenario $\Zero$
this result reduces to (the probabilistic case of) Theorem~\ref{thm:contextualiffnonsimulable},
which characterises noncontextual models in terms of simulability from the trivial model.
More generally, the question of characterising whether a given map $\fdec{F}{\EMP(S)}{\EMP(T)}$ is induced by a procedure
turns into this special case of testing for noncontextuality.
More precisely, it is not quite testing for membership in the convex polytope  of noncontextual empirical models $\textsf{NC}([S,T])$, or even $\textsf{NC}([S,T],g_{S,T})$,
but in the image of that polytope under the quotient map from $\EMP([S,T],g_{S,T})$ 
to the space of convex-preserving transformations $\EMP(S) \to \EMP(T)$.
Since this map is linear, the membership question is still, just like contextuality testing, a linear programming problem.
Actually, testing membership in a projection of the noncontextual polytope often crops up in study of nonlocality and contextuality, too.
For example, certain Bell or noncontextuality inequalities -- notably, the famous CHSH inequality --
involve only the values of two-measurement correlators, dispensing with full information about the probability distribution on joint outcomes. The same could be said about other parity-based contextuality proofs such as the GHZ--Mermin model \cite{mermin1990simple,greenberger1989going} or All-versus-Nothing arguments more generally \cite{mermin1990extreme,contextualitycohomologyparadox,abramsky2017complete}. This method of projecting to another polytope is made more explicit in~\cite[Section VII]{abramskyhardy:logical}. As the examples show, such limited information may be sufficient to witness contextuality.

\section{Getting closure}\label{sec:closure}

The results of the previous section strongly suggest thinking of $[S,T]$ as something like an \stress{internal hom}.
However, the construction of $[S,T]$ does not take into account the measurement compatibility structure of $S$ encoded in the simplicial complex $\Sigma_S$ -- that is taken care of by the predicate $g_{S,T}$.
This suggests that one should work with pairs $\tuple{S,\fdec{g}{S}{\two}}$ as the basic objects of our category.

In this section we make this viewpoint precise and show that it results in a \emph{closed category}, a notion introduced in ~\cite{eilenbergkelly:closedcats}. Instead of the original definition, we work with the axiomatisation given in~\cite{manzyuk:closedcats} and attributed to~\cite{laplaza:closedcats} and~\cite[Section 4]{Street:cosmoi}. Roughly speaking, closed categories are like monoidal closed categories without the monoidal structure:
they axiomatise the notion of a category \cat{C} where the collection of morphisms $A\to B$ can be given the structure of a \cat{C}-object. This amounts to defining a bifunctor $[-,-]\colon \cat{C}\op\times\cat{C}\to\cat{C}$ with suitable structure on it satisfying some coherence conditions. The idea is that this structure captures the notions of identity and composition. In the absence of a monoidal product, in contrast to the definition of a monoidal closed category, this internal hom structure must be encoded in a \stress{curried} version. For example, composition is represented by a 
transformation $\fdec{L^A_{B,C}}{[B,C]}{[[A,B],[A,C]]}$ natural in $B$ and $C$ and dinatural in $A$.

As the full details are rather technical, we do not recall the complete definition here, even if we do check the conditions in some detail in the proof of Theorem~\ref{thm:closure}.
The reader interested in understanding finer points of the proof should be able to do so after consulting the definition in~\cite{manzyuk:closedcats}.

\begin{definition}\label{def:scenswithgames}
We define a new category whose objects are pairs  $\tuple{S,\fdec{g}{S}{\two}}$ consisting of a scenario and a \emph{structure predicate} on it. We assume that the structure predicate $g$ on $S$
is induced by some (possibilistic) model on $S$ as per Definition~\ref{def:predicatefrommodel}.
A morphism $\tuple{S,g}\to\tuple{T,h}$ is given by a probabilistic procedure $f\colon S\to T$ such that for any probabilistic model $e\colon \tuple{S,g}$ we have $\EMP(f)\,e\colon\tuple{T,h}$, \ie if $e$ satisfies the structure predicate on $S$, then $\EMP(f)\,e$ satisfies the structure predicate on $T$. We denote this category by $\Sceng$. The subcategory of $\Sceng$ with deterministic maps is denoted by $\ScengDet$, while $\Sceng_\BB$ denotes the category obtained by replacing probabilistic procedures and models with possibilistic ones in the preceding definition. 

Given two objects $\tuple{S,g}$ and $\tuple{T,h}$ of $\Sceng$
we define a new object $[\tuple{S,g},\tuple{T,h}]$ by setting
 \[[\tuple{S,g},\tuple{T,h}] \defeq \tuple{[S,T],\bigvee_{\sigma\in\Sigma_T}g_{\sigma,S,T,g,h}}\]
 where $g_{\sigma,S,T,g,h}$ is a deterministic predicate corresponding to the subset of $\Ev_{[S,T]}(\sigma)$ defined by 
    \[\setdef{\family{\tuple{U_x,\fdec{f_x}{\Ev_S(U_x)}{O_{T,x}}}}_{x \in \sigma}}{U \defeq \bigcup_{x \in \sigma}U_x\in\Sigma_S,\Forall{s\in \Ev_S(U)} s\in g_U \Rightarrow f(s) \defeq \family{f_x(s|_{U_x})}_{x\in\sigma} \in h_\sigma} \Mdot\] 
In other words, the predicate on $[\tuple{S,g},\tuple{T,h]}$ checks, for each context $\sigma$ of $\Sigma_T$,  two things: 
    \begin{enumerate}
        \item that only a context of $\Sigma_S$ is used;
        \item that any local assignment satisfying $g$ is mapped to an assignment satisfying $h$.
    \end{enumerate}
\end{definition}

The requirement that the structure predicates be induced by possibilistic models
is so that one can check locally whether $f\colon S\to T$ is in fact a map $\tuple{S,g}\to\tuple{T,h}$. This enables one to write down the definition of the structure predicate on $[\tuple{S,g},\tuple{T,h}]$.
Of course, one should check that this structure predicate is also induced by a possibilistic model. By construction, it is given by a Boolean distribution for each $\sigma$, so one need only check that this is non-signalling.
That is, we need to show that if $\tau \subseteq \sigma$, then any possible $\family{\tuple{U_x,\fdec{f_x}{\Ev_S(U_x)}{O_{T,x}}}}_{x \in \tau}$ can be extended to the context $\sigma$.
Let $U \defeq \bigcup_{x \in \tau} U_x$.
Given an assignment $s \in \Ev_S(U)$, this is already mapped to an assignment $f(s) \in \Ev_T(\tau)$ given by
$f(s) \defeq \family{f_x(s|_{U_x})}_{x\in\tau}$. Moreover, if $s$ satisfies $g_U$ then $f(s)$ satisfies $h_\tau$, and $f(s)$ can therefore be extended to some assignment $t_s \in \Ev_T(\sigma)$ satisfying $h_\sigma$ because $h$ is non-signalling. For $y \in \sigma \setminus \tau$, we set $U_y \defeq U$ and $\fdec{f_y}{\Ev_S(U)}{O_{T,y}}$
given by $f_y(s) \defeq (t_s)_y$ whenever $s \in g_U$ (note that $f_y(s)$ can be set to anything for $s \notin g_U$).

Note that the restriction to predicates induced by possibilistic models is made out of convenience and does not represent a consequential choice. This is because Proposition~\ref{prop:possibilisticgamesasmodels} shows that any predicate is equivalent to one induced by a possibilistic model, in the sense that they have the same set of satisfying models, which is their relevant characteristic here. Therefore, we could have chosen as objects arbitrary pairs of scenarios and predicates and then use the canonical representatives given by Proposition~\ref{prop:possibilisticgamesasmodels} in the definition of $g_{\sigma,S,T,g,h}$.

Note that a plain scenario $S$ can be seen as pair $\tuple{S,\mathsf{t}}$ where $\mathsf{t}$ is a trivial predicate that is always satisfied.
Our earlier categories of scenarios are thus full subcategories of the corresponding categories where objects are scenarios equipped with structure predicates.

\begin{theorem}\label{thm:closure}
 $[-,-]$ makes $\ScengDet$ into a closed category.
\end{theorem}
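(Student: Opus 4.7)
The plan is to follow the axiomatisation of closed categories in~\cite{manzyuk:closedcats}, exhibiting the required data piece by piece and then checking the coherence conditions. Throughout, the trivial scenario $\tuple{\Zero,\mathsf{t}}$ (with the always-satisfied predicate $\mathsf{t}$) will serve as the unit object $\mathbf{1}$. Observe that Proposition~\ref{prop:modelson[S,T]} already gives us the key representability step: deterministic procedures $\tuple{S,g} \to \tuple{T,h}$ in $\ScengDet$ correspond bijectively to those global assignments on $[S,T]$ whose corresponding deterministic model satisfies the enhanced predicate on $[\tuple{S,g},\tuple{T,h}]$. Indeed, the two clauses in Definition~\ref{def:scenswithgames} (compatibility of the chosen subset, and $g$-to-$h$ preservation) are precisely the conditions that make such a global assignment into a morphism of $\ScengDet$. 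Composing with the bijection between morphisms $\tuple{\Zero,\mathsf{t}} \to \tuple{[S,T],\ldots}$ and global assignments on $[S,T]$ satisfying the predicate (noting that $\mathsf{t}$ imposes no constraint on the domain side) yields the hom-isomorphism $\ScengDet(\mathbf{1}, [\tuple{S,g},\tuple{T,h}]) \cong \ScengDet(\tuple{S,g},\tuple{T,h})$.

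Next, I would verify that $[-,-]$ extends to a bifunctor $\ScengDet\op \times \ScengDet \to \ScengDet$. Given $\fdec{f}{\tuple{S',g'}}{\tuple{S,g}}$ and $\fdec{h}{\tuple{T,k}}{\tuple{T',k'}}$, one defines $[f,h]$ on measurements via $\pi_h$ (pre-composition in the $T$-slot) and on outcomes by the operation sending $\tuple{U,\alpha} \in O_{[S,T],x}$ to the pair obtained by pre-composing the $S$-procedure-datum with $f$ and post-composing with the outcome-part of $h$. The verification that this defines a valid deterministic procedure in $\ScengDet$ amounts to (i) checking that the induced simplicial relation on scenarios is indeed simplicial, which follows from simpliciality of $\pi_h$, and (ii) checking that the enhanced predicate on $[\tuple{S',g'},\tuple{T',k'}]$ is preserved, which is a direct diagram chase using that $f$ maps $g'$-assignments to $g$-assignments and $h$ maps $k$-assignments to $k'$-assignments. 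Functoriality in each variable is then routine.

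I would then construct the three structural transformations. The isomorphism $\fdec{i_{\tuple{T,h}}}{[\mathbf{1},\tuple{T,h}]}{\tuple{T,h}}$ is essentially the identity on measurements, and on outcomes uses that $X_\Zero = \emptyset$ forces every pair $\tuple{U,\alpha}$ in $O_{[\Zero,T],x}$ to have $U = \emptyset$, so that $\alpha$ picks out an element of $O_{T,x}$; this produces a bijection of outcome sets, and one checks that the enhanced predicate collapses to $h$. The unit $\fdec{j_{\tuple{S,g}}}{\mathbf{1}}{[\tuple{S,g},\tuple{S,g}]}$ is the procedure corresponding under the representability step above to the identity morphism $\tuple{S,g} \to \tuple{S,g}$. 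The composition morphism
\[ \fdec{L^{\tuple{S,g}}_{\tuple{T,h},\tuple{U,k}}}{[\tuple{T,h},\tuple{U,k}]}{[[\tuple{S,g},\tuple{T,h}],[\tuple{S,g},\tuple{U,k}]]} \]
internalises composition of procedures: on measurements it is (essentially) the identity on $X_U$, and on outcomes it takes a datum $\tuple{V,\alpha} \in O_{[T,U],x}$ to the map that, given a family of procedure-data realising a procedure $\tuple{S,g}\to\tuple{T,h}$, composes it with $\tuple{V,\alpha}$ to produce an element of $O_{[S,U],x}$. Naturality in $\tuple{T,h}$ and $\tuple{U,k}$, together with dinaturality in $\tuple{S,g}$, follow from associativity of the composition being internalised.

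The main remaining work, and the principal obstacle, is verifying the five closed-category coherence axioms (CC1--CC5 in \cite{manzyuk:closedcats}), which govern the interplay of $i$, $j$, and $L$ (associativity-like, unit-like, and compatibility conditions). The cleanest strategy is to reduce each axiom to a statement about composition of deterministic procedures using the representability step: applying $\ScengDet(\mathbf{1},-)$ to each of the diagrams turns them into assertions about equality of composite procedures $\tuple{S,g} \to \tuple{T,h}$, where associativity and unit laws hold by construction. To promote these point-level equalities to equality of the internal morphisms themselves, one uses that the enhanced predicate on $[S,T]$ is induced by a possibilistic model (as checked in the discussion following Definition~\ref{def:scenswithgames}), so morphisms into $[\tuple{S,g},\tuple{T,h}]$ are determined by their action on global assignments. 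The fiddly bit will be tracking the ``least subset'' conventions from Lemma~\ref{lem:leastsubset} through the construction of $L$, as composing two procedures may in principle enlarge the subsets of $X_S$ used; however, this enlargement is compatible with the predicate, and so does not affect the diagrammatic identities.
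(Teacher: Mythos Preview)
Your overall architecture matches the paper's: the same unit object $\tuple{\Zero,\mathsf{t}}$, the same explicit bifunctor $[-,-]$ via pre- and post-composition, the same structural maps $i$, $j$, $L$, and the same appeal to Proposition~\ref{prop:modelson[S,T]} for axiom CC5. Where you diverge is in the verification of CC1--CC4, and there the proposed mechanism has a genuine gap.

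You suggest checking each coherence diagram after applying $\ScengDet(\mathbf{1},-)$ and then ``promoting'' point-level equalities to equalities of internal morphisms, on the grounds that the structure predicate on $[S,T]$ is induced by a possibilistic model. But that fact does not imply that $\mathbf{1}$ is a generator: for an object $\tuple{S,g}$ whose predicate $g$ is strongly contextual there are \emph{no} morphisms $\mathbf{1}\to\tuple{S,g}$ whatsoever, so two distinct morphisms out of such an object could in principle agree on all (zero) points. The axioms CC2--CC4 involve morphisms whose domain is an iterated internal hom, which may well carry such a predicate, so the point-based reduction does not go through. The paper instead verifies CC1--CC4 by directly comparing the underlying data of the two composites --- the simplicial relation and the outcome-functions --- and observes that in each case both sides compute the same thing because function composition is associative and unital. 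This is no harder than what you propose, and it is sound.

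A smaller point: your worry about ``tracking the least-subset conventions from Lemma~\ref{lem:leastsubset} through the construction of $L$'' is misplaced. The definition of $L$ (and of the bifunctor action) in the paper uses the full unions $\bigcup_y U_y$ and never invokes minimality; Lemma~\ref{lem:leastsubset} belongs to Section~\ref{sec:main-result}, where one needs a canonical representative to build $e_F$ from an arbitrary convex map, and plays no r\^ole in the closed-category proof.
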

\begin{proof}

We first show how to make $[-,-]$ functorial in both variables. Given a deterministic morphism $\fdec{\tuple{\pi,\alpha}}{\tuple{S,e}}{\tuple{T,f}}$,
we define a morphism \[\fdec{[\id,\tuple{\pi,\alpha}]}{[\tuple{P,g},\tuple{S,e}]}{[\tuple{P,g},\tuple{T,f}]}\]
for any $\tuple{P,g}$. This boils down to defining a morphism $\fdec{[\id,\tuple{\pi,\alpha}]}{[P,S]}{[P,T]}$ and explaining why it agrees with the relevant predicates. As $\Sigma_S=\Sigma_{[P,S]}$ and similarly for $T$, we can define the simplicial relation underlying $[\id,\tuple{\pi,\alpha}]$ to be $\pi$.
Now, given $x\in X_T$, we define the function $\Ev_{[P,S]}(\pi(x))\to O_{[P,T],x}$ to be ``post-composition with $\alpha_x$''. More precisely, we send a joint assignment in $\Ev_{[P,S]}(\pi (x))$,
\ie a family $\family{\tuple{U_y,k_y}}_{y \in \pi(x)}$ of functions $\fdec{k_y}{\Ev_P(U_y)}{O_{S,y}}$,
to the element of $O_{[P,T],x}$ given by
\[\tuple{\;\;U \defeq \bigcup_{y \in \pi(x)} U_y\;,\;\;\; \alpha_x \circ \tuple{k_y \circ \rho_y}_{y \in \pi(x)}\;\;}\]
where $\fdec{\rho_y \defeq \Ev_P(U_y \subseteq U)}{\Ev_P(U)}{\Ev_P(U_y)}$ is the obvious projection;
\ie we take the composite at the bottom of the following diagram:
\[\begin{tikzcd}
& \Ev_P(U_y)  \arrow[r,above,"k_y"] & O_{S,y} \\
\Ev_P(U) \arrow[ru,"\rho_y"]  \arrow[rr,dashed,"\tuple{k_y\circ \rho_y}_{y \in \pi(x)}"] &  & \Ev_S(\pi x) =\prod\limits_{y\in \pi x}O_{S,y} \arrow[u]
\arrow[rr,"\alpha_x"] & &  \Ev_T(x) \\
\end{tikzcd} \Mdot\]

To see that this results in a morphism  $[\tuple{P,g},\tuple{S,e}]\to [\tuple{P,g},\tuple{T,f}]$, note that a function $k\colon \Ev_P(U)\to\Ev_S(\sigma)$ is an allowed outcome of $[\tuple{P,g},\tuple{S,e}]$ if and only if $U\in\Sigma_P$ and $k$ sends outcomes in the support of $g$ to outcomes in the support of $e$. Whenever this happens, the fact that $\tuple{\pi,\alpha}$ is a morphism  $\tuple{S,e}\to \tuple{T,f}$ implies that postcomposing with $\alpha$ results in a function that sends outcomes in the support of $g$ to outcomes in the support of $f$, and thus $[\id,\tuple{\pi,\alpha}]$ is indeed a morphism  $[\tuple{P,g},\tuple{S,e}]\to [\tuple{P,g},\tuple{T,f}]$. It is clear that this action on morphisms is functorial in the second variable.

Moving now to the first variable, given a deterministic morphism $\tuple{\pi,\alpha}\colon \tuple{S,e}\to \tuple{T,f}$, we wish to define a morphism \[ [\tuple{\pi,\alpha},\id]\colon [\tuple{T,f},\tuple{P,g}]\to  [\tuple{S,e},\tuple{P,g}] \Mdot\]
We define the underlying simplicial relation to be $\id[P]$, and define the action on outcomes by ``precomposition with $\tuple{\pi,\alpha}$''.
More specifically, given an outcome of $[T,P]$ at $x\in X_P$, of the form $\tuple{U,\Ev_T(U)\to O_{P,x}}$,
we send it to the composite $\tuple{\,\pi(U), \; \Ev_S(\pi(U))\xrightarrow[]{\alpha_U} \Ev_T(U)\to O_{P,x}\,}$. To see that this cooperates with the predicates these scenarios are equipped with, note that a possible outcome at $\sigma\in\Sigma_{[T,P]}$ corresponds to a function $\Ev_T(U)\to\Ev_P(\sigma)$ with $U\in\Sigma_T$ that sends outcome assignments allowed by $f$ to assignments allowed by $g$. Whenever this is the case, the composite $\Ev_S(\pi(U))\xrightarrow[]{\alpha_U} \Ev_T(U)\to O_{P,x}$ satisfies $\pi(U)\in\Sigma_S$ and it sends outcomes allowed by $e$ to outcomes allowed by $g$. Thus  $[\tuple{\pi,\alpha},\id]$ indeed is a morphism $[\tuple{T,f},\tuple{P,g}]\to  [\tuple{S,e},\tuple{P,g}]$. 

Now that $[-,-]$ has been shown to be functorial in each variable, it remains to show that it is a  bifunctor $(\ScengDet)\op\times \ScengDet\to \ScengDet$. But this is clear, since pre- and postcomposition commute with each other, \ie $(f\circ g)\circ h=f\circ (g\circ h)$ for all functions $f,g,h$.

We have now defined the bifunctor $[-,-]$. We next move to defining additional structure this bifunctor has, after which we check that this structure satisfies the required axioms. As our unit object, we choose the pair $\tuple{\Zero,\mathsf{t}}$, where $\Zero$ is the empty scenario and $\mathsf{t}$ is the predicate corresponding to the value $1\in\Ev_\two(*)$.
For any $\tuple{S,g}$, there is an isomorphism $i_{S,g}\colon \tuple{S,g}\to [\tuple{\Zero,\mathsf{t}},\tuple{S,g}]$ whose underlying simplicial relation is just the identity relation and which identifies an outcome in $\Ev_S(X)$ with the function  $\Ev_\Zero(\emptyset)=\enset{*}\to \Ev_S(x)$ (an outcome of $[\tuple{\Zero,\mathsf{t}},\tuple{S,g}]$ at $x$) having it as its value at the unique point of the domain. These isomorphisms are clearly natural in $\tuple{S,g}$.

We define a collection of morphisms $j_{S,g}\colon \tuple{\Zero,\mathsf{t}}\to [\tuple{S,g},\tuple{S,g}]$ by setting the underlying simplicial relation to be the empty relation. The function $\Ev_{\Zero}(\emptyset)\to\Ev_{[S,S]}(x)$ then sends the unique element of $\Ev_{\Zero}(\emptyset)$ to the function $\id\colon\Ev_S(x)\to\Ev_S(x)$. The morphisms $j_{S,g}$ should be dinatural in $\tuple{S,g}$, which amounts to checking the commutativity of
 \[\begin{tikzpicture} 
         \matrix (m) [matrix of math nodes,row sep=2em,column sep=5em,minimum width=2em]
         {
          \tuple{\Zero,\mathsf{t}} & \left[\tuple{S,g},\tuple{S,g}\right]  \\
          \left[\tuple{T,h},\tuple{T,h}\right] & \left[\tuple{S,g},\tuple{T,h}\right]  \\};
         \path[->]
         (m-1-1) edge node [left] {$j_{T,h}$} (m-2-1)
                edge node [above] {$j_{S,g}$} (m-1-2)
         (m-1-2) edge node [right] {$[\id,\tuple{\pi,\alpha}]$} (m-2-2)
         (m-2-1) edge node [below] {$[\tuple{\pi,\alpha},\id]$} (m-2-2);
  \end{tikzpicture}\]
for any $\tuple{\pi,\alpha}\colon\tuple{S,g}\to\tuple{T,h}$. Both morphisms have the empty relation as their underlying simplicial relation. Moreover, both of these morphisms have the same functions on outcomes: for $x\in X_T=X_{[S,T]}$ we get the function that sends the unique element of $\Ev_{\Zero}(\emptyset)$ to $\Ev_S(\pi(x))\xrightarrow[]{\alpha_x}\Ev_T(x)$.

Next we need to define morphisms $L_{S,g,T,h}^{P,f}\colon [\tuple{S,g},\tuple{T,h}]\to [[\tuple{P,f},\tuple{S,g}],[\tuple{P,f},\tuple{T,h}]]$ and check that they are natural in $\tuple{S,g}$, $\tuple{T,h}$ and dinatural in $\tuple{P,f}$. The intuition is that these morphisms internalise the composition operation on morphisms. We define $L_{S,g,T,h}^{P,f}$ as follows. As on both sides the underlying simplicial complex is (isomorphic to) $\Sigma_T$, we can set the underlying simplicial relation to be the identity. To define the functions on outcomes,
we need for each $x\in X_T$, a function that sends elements of $\Ev_{[S,T]}(x) = O_{[S,T],x}$ to elements of $O_{[[P,S],[P,T]],x}$.

Given an element of $O_{[S,T],x}$, \ie a tuple $\tuple{U,\fdec{k}{\Ev_S(U)}{O_{T,x}}}$, we send it to $\tuple{U,\tilde{k}} \in O_{[[P,S],[P,T]],x}$ where
$\fdec{\tilde{k}}{\Ev_{[P,S]}(U)}{O_{[P,T],x}}$
maps an element $\family{\tuple{V_y,\fdec{m_y}{\Ev_P(V_y)}{O_{S,y}}}}_{y\in U}$ of $\Ev_{[P,S]}(U)$
to the element of $O_{[P,T],x}$ given by
\[\tuple{\;\;V \defeq \bigcup_{y \in U} V_y\;,\;\;\; k \circ \tuple{m_y \circ \rho_y}_{y \in U}\;\;}\]
where the function is the composite at the bottom of the following diagram:
\[\begin{tikzcd}
& \Ev_P(V_y)  \arrow[r,above,"m_y"] & O_{S,y} \\
\Ev_P(V) \arrow[ru,"\rho_y"]  \arrow[rr,dashed,"\tuple{m_y\circ \rho_y}_{y \in U}"] &  & \Ev_S(U) = \prod\limits_{y \in U}O_{S,y} \arrow[u]
\arrow[rr,"k"] & & O_{T,x} \\
\end{tikzcd}
\Mdot\]
Thus composition with $k$ defines an element of $O_{[[P,S],[P,T]],x}$. As morphisms of scenarios that are compatible with the structure predicates compose, the morphism of scenarios $[S,T]\to [[P,S],[P,T]]$ is indeed a morphism $[\tuple{S,g},\tuple{T,h}]\to [[\tuple{P,f},\tuple{S,g}],[\tuple{P,f},\tuple{T,h}]]$.
Naturality of $L_{S,g,T,h}^{P,f}$ in $\tuple{S,g}$ and  $\tuple{T,h}$ follows from associativity of function composition. Dinaturality in $\tuple{P,f}$ amounts to the commutativity of 
 \[\begin{tikzpicture} 
         \matrix (m) [matrix of math nodes,row sep=2em,column sep=7em,minimum width=2em]
         {
          \left[\tuple{S,g},\tuple{T,h}\right] & \left[\left[\tuple{P,f},\tuple{S,g}\right],\left[\tuple{P,f},\tuple{T,h}\right]\right] \\&\\
         \left[\left[\tuple{Q,e},\tuple{S,g}\right],\left[\tuple{Q,e},\tuple{T,h}\right]\right]  & \left[\left[\tuple{P,f},\tuple{S,g}\right],\left[\tuple{Q,e},\tuple{T,h}\right]\right]  \\};
         \path[->]
         (m-1-1) edge node [left] {$L_{S,g,T,h}^{Q,e}$} (m-3-1)
                edge node [above] {$L_{S,g,T,h}^{P,f}$} (m-1-2)
         (m-1-2) edge node [right] {$[\id[], [\tuple{\pi,\alpha},\id]]$} (m-3-2)
         (m-3-1) edge node [below] {$[[\tuple{\pi,\alpha}],\id],\id]$} (m-3-2);
  \end{tikzpicture}\]
for any $\fdec{\tuple{\pi,\alpha}}{\tuple{Q,e}}{\tuple{P,f}}$, which again follows from associativity of function composition.

Next, we check that this data satisfies the five axioms required of a closed category. Here we refer to the numbering CC1--CC5 used in~\cite{manzyuk:closedcats}.

Axiom CC1 amounts to commutativity of 
 \[\begin{tikzpicture} 
         \matrix (m) [matrix of math nodes,row sep=2em,column sep=5em,minimum width=2em]
         {
          \tuple{\Zero,\mathsf{t}} & \left[\tuple{S,g},\tuple{S,g}\right]  \\
          &          \left[\left[\tuple{T,h},\tuple{S,g}\right],\left[\tuple{T,h},\tuple{S,g}\right]\right]  \\};
         \path[->]
         (m-1-1) edge node [below] {$j_{{[\tuple{T,h},\tuple{S,g}]}}$} (m-2-2)
                edge node [above] {$j_{S,g}$}     (m-1-2)
         (m-1-2) edge node [right]{$L_{S,g,S,g}^{T,h}$} (m-2-2);
  \end{tikzpicture}\]
 which boils down to the fact that composing with $\id$ keeps everything else fixed. 
 The diagram 
  \[\begin{tikzpicture} 
         \matrix (m) [matrix of math nodes,row sep=2em,column sep=5em,minimum width=2em]
         {
        \left[\tuple{S,g},\tuple{T,h}\right] & \left[\left[\tuple{S,g},\tuple{S,g}\right],\left[\tuple{S,g},\tuple{T,h}\right]\right]  \\
          &  \left[\tuple{\Zero,\mathsf{t}}, \left[\tuple{S,g},\tuple{T,h}\right]\right]        \\};
         \path[->]
         (m-1-1) edge node [below] {$i_{{[\tuple{S,g},\tuple{T,h}]}}$} (m-2-2)
                edge node [above] {$L_{S,g,T,h}^{S,g}$}     (m-1-2)
         (m-1-2) edge node [right]{$[j_{S,g},\id]$} (m-2-2);
  \end{tikzpicture}\]
  commutes essentially for the same reason, establishing CC2. 
 Axiom CC3 boils down to associativity of composition in $\ScengDet$, and CC4 is straightforward to check. 
 
The final and perhaps most important axiom, CC5, asserts that the function that sends
$\fdec{f}{\tuple{S,g}}{\tuple{T,h}}$ to the composite $\tuple{\Zero,\mathsf{t}}\xrightarrow{j_{S,g}}[\tuple{S,g},\tuple{S,g}]\xrightarrow{[\id,f]}[\tuple{S,g},\tuple{T,h}]$ defines a bijection between morphisms $\tuple{S,g}\to\tuple{T,h}$ and morphisms $\tuple{\Zero,\mathsf{t}}\to [\tuple{S,g},\tuple{T,h}]$. This essentially follows from Proposition~\ref{prop:modelson[S,T]}: deterministic morphisms $\tuple{\Zero,\mathsf{t}}\to [\tuple{S,g},\tuple{T,h}]$ that cooperate with the structure predicates
 are the same thing as morphisms $\tuple{S,g}\to\tuple{T,h}$.
\end{proof}

\begin{corollary}
 $[-,-]$ makes $\Sceng$ and $\Sceng_\BB$ into closed categories.
\end{corollary}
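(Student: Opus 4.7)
The plan is to lift the closed structure of Theorem~\ref{thm:closure} along the identity-on-objects enrichments $\ScengDet\hookrightarrow\Sceng$ and $\ScengDet\hookrightarrow\Sceng_\BB$, which (as explained in Section~\ref{ssec:probabilisticprocs}) freely adjoin convex and Boolean mixtures of morphisms respectively. The internal hom object $[\tuple{S,g},\tuple{T,h}]$ and the unit object $\tuple{0,\mathsf{t}}$ would be taken to be the same in all three categories, so the task reduces to extending the action of $[-,-]$ on morphisms and reverifying the closed-category axioms in the enlarged hom-sets.

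First I would extend the bifunctor: given a probabilistic procedure $f=\sum_i r_if_i$ with each $f_i$ deterministic, set
\[[\id,f]\defeq \sum_i r_i\,[\id,f_i] \qquad\text{and}\qquad [f,\id]\defeq \sum_i r_i\,[f_i,\id]\Mcomma\]
and analogously for Boolean mixtures in $\Sceng_\BB$. Each summand on the right is a morphism between the correct objects by Theorem~\ref{thm:closure}, and functoriality, bifunctoriality, and (di)naturality of the structural transformations would then follow from the deterministic case by linearity (respectively, by join-preservation). The structural morphisms $i_{S,g}$, $j_{S,g}$, and $L^{P,f}_{S,g,T,h}$ are themselves deterministic, hence exist unchanged in the larger categories, and axioms CC1--CC4 are equations between deterministic composites that carry over for free.

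The only axiom that needs genuinely new input is CC5, which asks that $f\mapsto [\id,f]\circ j_{S,g}$ be a bijection between $\Sceng(\tuple{S,g},\tuple{T,h})$ and $\Sceng(\tuple{0,\mathsf{t}},[\tuple{S,g},\tuple{T,h}])$. Here Theorem~\ref{thm:mainthm} would play the role that Proposition~\ref{prop:modelson[S,T]} played in the deterministic case. By Theorem~\ref{thm:contextualiffnonsimulable}, a morphism $\tuple{0,\mathsf{t}}\to[\tuple{S,g},\tuple{T,h}]$ in $\Sceng$ is exactly a non-contextual empirical model on $[S,T]$ satisfying the refined structure predicate $\bigvee_\sigma g_{\sigma,S,T,g,h}$ of Definition~\ref{def:scenswithgames}. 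On the other side, a morphism $\tuple{S,g}\to\tuple{T,h}$ in $\Sceng$ is a probabilistic procedure $S\to T$ whose induced map $\EMP(S)\to\EMP(T)$ restricts to a map $\EMP(S,g)\to\EMP(T,h)$. Theorem~\ref{thm:mainthm} gives a bijection between such procedures and non-contextual models on $[S,T]$ satisfying $g_{S,T}$; the extra predicate-preservation clauses built into the definition of $[\tuple{S,g},\tuple{T,h}]$ would then precisely cut this down to morphisms in $\Sceng$. The case of $\Sceng_\BB$ would be identical, with Boolean mixtures in place of convex ones and with the (easier) possibilistic analogue of Theorem~\ref{thm:mainthm}.

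The step I expect to be the main obstacle is the careful matching between the refined structure predicate of Definition~\ref{def:scenswithgames} and the predicate-preservation condition. Concretely, one needs to verify that in the canonical decomposition of $e_F$ into deterministic models $\delta_{s_i}$ used in the proof of Theorem~\ref{thm:mainthm}, the function $\EMP(f)$ sends models satisfying $g$ to models satisfying $h$ if and only if each individual global assignment $s_i$ lies in the support of $\bigvee_\sigma g_{\sigma,S,T,g,h}$. Once this equivalence is pinned down, the CC5 bijection follows, and the corollary is established.
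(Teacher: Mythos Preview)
Your overall plan---extend $[-,-]$ on morphisms by convex (resp.\ Boolean) linearity, keep the structural maps $i,j,L$ deterministic, and let CC1--CC4 carry over---is exactly what the paper does. The divergence is entirely in your treatment of CC5.

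The paper's argument for CC5 is much lighter than yours and does not invoke Theorem~\ref{thm:mainthm} at all. It simply observes that $\Sceng(\tuple{S,g},\tuple{T,h})=\Dist\bigl(\ScengDet(\tuple{S,g},\tuple{T,h})\bigr)$: because the target predicate $h$ is possibilistic, a mixture $\sum r_if_i$ preserves the predicates iff each $f_i$ with $r_i>0$ does. The same holds for the other hom-set, and the CC5 map $f\mapsto[\id,f]\circ j_{S,g}$ is convex by construction, so it is just $\Dist$ applied to the deterministic bijection already established in Theorem~\ref{thm:closure}. Done.

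Your route has two genuine gaps. First, a morphism $\tuple{0,\mathsf{t}}\to[\tuple{S,g},\tuple{T,h}]$ in $\Sceng$ is \emph{not} the same thing as a non-contextual model satisfying the structure predicate: it is a distribution on global assignments (each satisfying the predicate), and distinct such distributions can yield the same non-contextual model. So the identification you draw from Theorem~\ref{thm:contextualiffnonsimulable} is many-to-one, not a bijection, and Theorem~\ref{thm:mainthm} does not repair this since it characterises which \emph{functions} $\EMP(S)\to\EMP(T)$ arise from procedures, not which procedures arise. Second, for $\Sceng_\BB$ you appeal to ``the (easier) possibilistic analogue of Theorem~\ref{thm:mainthm}'', but the paper explicitly flags this analogue as open and problematic in Section~\ref{ssec:variantsofmainresult}: there is no guarantee that a join-preserving map $\EMP_\BB(S)\to\EMP_\BB(T)$ is determined by its values on deterministic models, which blocks the construction of $e_F$. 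The simple ``apply $\Dist_\BB$ to the deterministic bijection'' argument sidesteps both issues.
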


\begin{proof}
We sketch the proof for $\Sceng$, the case of $\Sceng_\BB$ being similar. We extend \[[-,-]\colon (\ScengDet)\op\times \ScengDet\to \ScengDet\] to a functor \[[-,-]\colon (\Sceng)\op\times \Sceng\to \Sceng\] by keeping the definition on objects fixed, and defining \[\left[\sum_i r_if_i,\sum_js_jg_j\right]\defeq\sum_{i,j}r_is_j\left[f_i,g_j\right]\Mdot\] We define the morphisms $j,i,L$ as before, and their naturality (and dinaturality) for the extended functor follows from them being natural (or dinatural) in the first place.%
    \footnote{Most of this should follow from the fact that changing the basis of enrichment is a 2-functor and in our case preserves the duality involutions, so one should automatically get a functor $[-,-]$ and the required natural transformations, leaving only dinaturality and some of the axioms to be checked by hand. However, we are not aware of general results guaranteeing that change-of-basis preserves closed structure, so we sketch the hands-on proof.}
Similarly, the axioms CC1--CC4 follow from those holding in $\ScengDet$. Finally, we wish to check axiom CC5, showing that postcomposition with $[\id,\sum_i r_if_i]=\sum_ir_i[id,f_i]$ defines a bijection between probabilistic morphisms $\tuple{S,g}\to\tuple{T,h}$ and probabilistic morphisms $\tuple{\Zero,\mathsf{t}}\to [\tuple{S,g},\tuple{T,h}]$. As we already had a bijection in the deterministic case, and since composition is convex, a convex mixture of morphisms $\tuple{S,g}\to\tuple{T,h}$ corresponds bijectively to a convex mixture of morphisms $\tuple{\Zero,\mathsf{t}}\to [\tuple{S,g},\tuple{T,h}]$, which completes the proof.
\end{proof}

\begin{remark}
 The category $\Sceng_\BB$ is isomorphic to $\cat{Emp}^{\leq}_\BB$. To see this, note that one can think of an object $\tuple{S,g}$ of $\Sceng_\BB$ as a pair $\tuple{S,e_g}$ where $e\in\EMP_\BB(S)$ induces $g$. Moreover, saying that a possibilistic procedure $f\colon S\to T$ is a morphism $\tuple{S,g}\to\tuple{T,h}$ is equivalent to saying that $\EMP_\BB(f)e_g$ is included in $e_h$, \ie that $f$ defines a weak simulation $e_g\to e_h$, \ie a morphism $e_g\to e_h$ in  $\cat{Emp}^{\leq}_\BB$.
\end{remark}

\section{What is to be done?}\label{sec:outlook}

With the results of the previous sections in hand, we now turn to a discussion
of some of the open questions and future research directions that naturally suggest
themselves.
We also offer some unpolished thoughts on how the sheaf and
resource approach to contextuality connects to other areas of research to
which Samson has made profound contributions.

\subsection{Using the closure}

Having a closed structure on a category lets one turn relativised questions, \ie those about morphisms, into questions about objects. This can be especially beneficial when the techniques for studying the objects are relatively well developed.
In our case, it suggest studying the resource theory of contextuality using the plethora of tools developed for studying contextuality, \eg cohomology~\cite{cohomology-of-contextuality,contextualitycohomologyparadox},
the contextual fraction~\cite{abramsky2017contextual}, or Bell inequalities. In particular, one is left wondering if, for suitable choices of scenarios, the logical Bell inequalities of~\cite{abramskyhardy:logical} have further logical structure that would help in answering questions of simulability.

\subsection{Variants of the main results}\label{ssec:variantsofmainresult}

In Section~\ref{sec:main-result} we presented a body of results about probabilistic procedures $\fdec{f}{S}{T}$ and their actions on (the convex sets of) probabilistic empirical models, $\fdec{\EMP(f)}{\EMP(S)}{\EMP(T)}$.
It is natural to ask whether these results carry over as one varies the categories of procedures and corresponding empirical models.
There are three main axes of variation: one could replace probabilities by possibilities, one could allow for adaptive procedures as studied in~\cite{ourlics:comonadicview}, and finally, one could work with scenarios equipped with structure predicates as in Section~\ref{sec:closure}.
Each results in interesting questions that aren't immediately answered by the above.

While some of the main results do carry over, the proofs of some sharper results do not.
Notably, our strategy for proving Theorem~\ref{thm:characterizingdeterministicprocedures}, which provides a characterisation of the actions on empirical models that are induced by deterministic procedures, runs into trouble in all three cases.

We will explain the specific issues arising along each axis of variation for our notion of procedure.
In summary, in the possibilistic case and in the variant with structure predicates, the issue is that a suitably structure-preserving map between (structured sets of) empirical models is not necessarily determined by its action on the deterministic models. In the adaptive case, the issue is with Lemma~\ref{lem:leastsubset}.

\subsubsection*{Possibilistic procedures}
In the possibilistic case, we wish to know when a map $F\colon\EMP_\BB(S)\to\EMP_\BB(T)$ is induced by a possibilistic procedure $S\to T$. We might as well start by assuming that $F$ preserves possibilistic sums as that is a necessary condition.
However, while this is an analogue of convexity, there is no obvious way to guarantee that $F$ is determined by its restriction to $\Ev(X_S)$. In other words, we do not have a possibilistic variant of Theorem~\ref{thm:convexfnsdeterminedatsections}.
This spells trouble for the nontrivial direction of Theorem~\ref{thm:characterizingdeterministicprocedures}.

On the other hand, possibilistic procedures $S\to T$ readily give rise to possibilistic models on $[S,T]$, so we are still able to prove a version of Theorem~\ref{thm:mainthm} in the possibilistic case.

\subsubsection*{Adaptive procedures}
The main obstacle when working with adaptive procedures is that the analogue of Lemma~\ref{lem:leastsubset} fails:
given a function $F\colon\Ev(X_S)\to Y$, there is no clear candidate for the optimal adaptive procedure to implement $F$. More abstractly, Lemma~\ref{lem:leastsubset} stems from the fact that each subset $U\subseteq X_S$ defines an equivalence relation $\sim_U$ on $\Ev(X_S)$ by setting $s\sim_U t$ whenever $s|_U=t|_U$, and such equivalence relations form a sublattice of all equivalence relations on $\Ev(X_S)$. Similarly, any measurement protocol on $X_S$ induces an equivalence relation on $\Ev(X_S)$ by setting $s\approx t$ when running the measurement protocol on $s$ and $t$ results in the same outcome. However, there are relatively small examples showing that such equivalence relations no longer form a sublattice of the lattice of all equivalence relations.
This makes it difficult to prove an adaptive variant of Theorem~\ref{thm:characterizingdeterministicprocedures} as our current proof relies heavily on Lemma~\ref{lem:leastsubset} in order to build the desired procedure.

Nonetheless, we are still able to establish adaptive variants of Theorem~\ref{thm:mainthm} and the results from Section~\ref{sec:closure} regarding closure.
Indeed, one can easily write down an adaptive version of $[S,T]$, and show that deterministic adaptive procedures correspond exactly to its global assignments that satisfy the predicate checking for simpliciality. Thus, probabilistic and adaptive global sections give rise to non-contextual models over (the adaptive variant of) $[S,T]$ that satisfy the simpliciality predicate, and moreover this adaptive version gives rise to closed structures on the adaptive versions of the categories in Definition~\ref{def:scenswithgames}.

\subsubsection*{Scenarios with structure predicates}

If given a function $\EMP(S)\to\EMP(T)$ that preserves convex combinations, deterministic models,
and also takes models that satisfy $g$ to models that satisfy $h$, we could readily apply Theorem~\ref{thm:characterizingdeterministicprocedures} to understand when it arises from a procedure $\tuple{S,g}\to\tuple{T,h}$.
However, if we are simply given a function $\fdec{F}{\EMP(S,g)}{\EMP(T,h)}$ that preserves convex combinations, the situation is less clear.
Indeed, it may happen that no deterministic model satisfies $g$ and then there is no clear sense in which $F$ is determined by a function $\Ev(X_S)\to\EMP(T)$.
In particular, if $g$ and $h$ are given by strongly contextual models $e_g$ and $e_h$ with minimal supports, then $\EMP(S,g)$ and $\EMP(T,h)$ are singletons and the unique function $\EMP(S,g)\to\EMP(T,h)$ preserves convex combinations vacuously.
Thus characterising which convex functions are induced by procedures $\tuple{S,g}\to\tuple{T,h}$ also gives a criterion for deciding whether $e_g$ simulates $e_h$, which in general is a difficult problem.

\subsection{Monoidal closure?}

Given the closed structure on $\Sceng$, a natural question is whether it is part of a monoidal closed structure. However, it seems like getting a monoidal structure would require:
(i) allowing adaptive morphisms, and (ii) having a ``directed'' tensor product in the sense that adaptive protocols on $S\otimes T$ should start on $S$ and consume it before moving on to $T$, never returning back.

To see what suggests this picture, consider first the natural candidate for an evaluation map $[S,T]\otimes S\to T$: to simulate something in $T$, one first measures the same measurement of $[S,T]$, the answer to which determines what to do in $S$ and how to interpret the outcome. This evaluation procedure is adaptive and directed. Moreover, any directed adaptive procedure $S\otimes T \to V$ seems to correspond to a procedure $S \to [T,V]$: if one just measures the $S$-half, the rest of the procedure tells what to do in regard to $T$, resulting in an outcome of $[T,V]$. Conversely, any morphism $S\to [T,V]$ seems to correspond to an adaptive and directed procedure $S\otimes T\to V$. To make this precise one might consider working with scenarios that are equipped with a preorder on measurements, and with adaptive protocols that only go forward along this preorder. Moving to such a setup is tempting not only because of potentially nice categorical properties, but also because it enables one to express other kinds of contextuality that do not quite fit in the Abramsky--Brandenburger framework as is. Indeed, one of the present authors has elsewhere considered moving to such a framework to discuss \stress{sequential} contextuality, as manifested in scenarios in which contexts are sequences of operations rather than sets of compatible one-shot measurements \cite{mansfield2018quantum,emeriau2020quantum}, and considered a cruder route to capturing violations Legget-Garg inequalities and macrorealism \cite{leggett1985quantum} by adding some extra structure by hand to the framework \cite{mansfield2017unified}. A similar approach is developed in~\cite{Gogiosopinzani:definitecausality}. However, incorporating morphisms of scenarios in that setting remains to be done.

\subsection{Towards duality via (generalised) partial Boolean algebras}

Recent work has shown how some of the concepts of the
sheaf-theoretic approach to contextuality can be formulated
in the language of partial Boolean algebras \cite{abramsky2021logic}.
This provides a dual algebraic-logical picture, in the sense
of Stone duality.

One limitation of this result, however, is that it only works for graphical measurement scenarios, \ie those whose compatibility structure is generated by a binary compatibility relation,
so that the simplicial complex of contexts arises as the clique complex of a graph. Extending this to all scenarios will require working with a generalisation of partial Boolean algebras, perhaps as discussed in~\cite{czelakowski1979} . Such a structure could be obtained by dualising the sheaf of events $\Ev$ to a copresheaf of Boolean algebras.
There is some hope that this could also be built form the set of procedures $S\to\dice{2}$, but this remains to be checked.

Another observation is that the partial Boolean algebras that correspond to measurement scenarios are in a sense freely built from the Boolean algebras corresponding to each measurement plus a compatibility relation. The theory of partial Boolean algebras is, however, much richer.
A speculative suggestion relates to the addition of the predicate $g$ to a scenario $S$. Its effect, at each context, is to pick out a subset of events. If one thinks of this as a clopen subset of a Stone space, then on the logico-algebraic side it corresponds
to a filter (or ideal). One may thus conjecture that the passage from $S$ to $\tuple{S,g}$ might correspond to the taking quotients of (generalised) partial Boolean algebras.

More generally, one might wonder whether there is a Stone-type duality at work. The scenario $\dice{2}$, which could be identified with the two-element discrete space, could conceivably act as the dualising object on the topological or model side. Note that, suggestively, procedures $S\to\dice{2}$ are naturally endowed with a logico-algebraic structure.

This also suggests a dual question to question \eqref{q:proc}. Instead of asking which (forward) state transforms arise from procedures,
one could inquire which (backward) predicate transforms do.

\subsection{What else can one do with predicates?}

Changing our basic objects to be pairs $\tuple{S,g}$ instead of just a scenario $S$ serves the purpose of restricting the kinds of behaviours allowed over an object. As such, it is reminiscent of the definition of basic objects in $\cat{SProc}$, a category Samson introduced as a model of concurrency~\cite{abramsky1996specification,abramsky1996interaction}.
Can this superficial similarity be pushed further? In particular, how much further structure does $\Sceng$ or its variants have? Is it \eg $*$-autonomous?

In~\cite{ourlics:comonadicview} we formalise adaptive procedures via a comonad $\MP$ that builds a new scenario $\MP(S)$, the measurements of which correspond to adaptive measurement protocols over $S$. Similarly, one could start by requiring that procedures have an underlying simplicial function rather than a relation, and then extend the category via a comonad $F$ that takes a scenario $S$ to the scenario $F(S)$ whose measurements are joint measurements in $S$.
Both of these comonads operate well with empirical models. If $e$ is a model on $S$ then there are induced empirical models $\MP(e):\MP(S)$ and $F(e):F(S)$. And moreover, these comonads induce comonads on the category of empirical models. However, there is a slight aesthetic issue in that not all empirical models on $\MP(S)$ (or on $F(S)$) are of this form. This is because the bare structure of a measurement scenario does not enforce on all models the intuition that a measurement of the new scenario corresponding to a derived measurement in the original one ought to behave as such. Our expectation is that this could be remedied by equipping $\MP(S)$ and $F(S)$ with structure predicates enforcing this condition. For the adaptive case, the structure predicate should itself be adaptive. 

\subsection{Possibilistic polytope}

Proposition~\ref{prop:possibilisticgamesasmodels} implies that the preorder of possibilistic morphisms $S\to \two$ is (dually?) equivalent to that of possibilistic models on $S$.
However, different operations seem natural on different sides of this equivalence. It is natural to take a union (\ie a possibilistic sum) of  predicates $\fdec{g}{S}{\two}$, which restricts the set of models that satisfy it. On the other hand, the union or Boolean sum operation for possibilistic models increases the set of models that satisfies the corresponding predicate. Thus one might hope that results on one side lead to new results on the other. Given that $\EMP_B(S)$ has been extensively studied in~\cite{abramsky2016possibilities}, one may hope either to leverage the results therein to better understand $\Scen_\BB(S,\two)$, or alternatively, to study  $\Scen_\BB(S,\two)$ in order to answer questions left open therein. In particular, can one understand the image of possibilistic collapse $\EMP(S)\to\EMP_{\BB}(S)$ by studying $\Scen_\BB(S,\two)$?

\subsection{Other questions}

The fact that procedures $S\to T$ give rise to non-contextual models on $[S,T]$ suggests thinking of contextual models  $e:[S,T]$ as ``contextual simulations'' $S\to T$. The aforementioned evaluation map gives an intuition about how to think of these operationally: given a measurement in $T$ to simulate on $S$, query the model $e$ on the same measurement -- the outcome of this will then indicate what to measure in $S$ and how to interpret it. This seems very closely related to another possible formulation of contextual simulations between empirical models, suggested at the end of~\cite{ourlics:comonadicview}: namely, contextual morphisms $d\to e$ could be formalised as adaptive maps $d\otimes c\to e$ without requiring that $c$ be non-contextual. Key differences are that this latter notion is formulated at the level of models rather than at the level of scenarios, and that the adaptivity need not be directed. Despite this, these could still turn out to be similar in terms of expressive power, and may give rise to interesting notions when one still restricts the amount of contextuality allowed, by \eg requiring that the contextual resources be quantum-realisable. 

Another natural question for the computer scientist would be to go beyond single-use black boxes.
Here we mean not just allowing for adaptivity but for true multiple use, with an internal state (as in Mealy machines). In the physics side, this would correspond to non-destructive measurement, or measurement with a next state. It might be interesting to extend the formalism to deal with such scenarios.

\section{Epilogue: A game of cat-and-mouse, a personal thank-you}

A recurrent tip of Samson's is to read Gian-Carlo Rota's \textit{Ten lessons I
wish I had been taught} \cite{rota1997ten}. There, amid an assortment of
sage advice, can be found a passage about becoming an institution and the
dubious expectations that are bestowed upon one in that event. One could almost look at
Samson's career as a game of cat-and-mouse against this prophecy. Rather than
settle in to a comfortable and earned existence as `a piece of period
furniture', time after time his relentless scientific curiosity has impelled
him to go forth to sow and explore new fields -- only to quickly find himself
established, there too, as something of an institution.


But plenty has been said of Samson's outstanding scientific contributions, to
logic and otherwise, and by others much better qualified than ourselves. So let
us take this opportunity to end on a more personal note.

Would-be institutions seem to come in different flavours: there are those that,
overbearing, tend to crush and stifle, and there are those that tend to be `rotten and
rotting others', and yet there are also those who help the people around them to flourish
and thrive to the best of their abilities. Samson is certainly one of the
latter. He has been a central formative influence in our development as
(computer?) scientists, and we count ourselves privileged for having worked,
and for continuing to work, closely with him. Even as students and fledgling
researchers we felt valued and taken seriously, and were afforded the space and
encouragement to develop our ideas, with the occasional crucial nudge in the
right direction. It may seem little, but it ain't.

We have witnessed up close, on enjoyable afternoons spent exchanging in
Samson's office, or over the occasional cheeky pint, his impressive breadth of
knowledge, his ability to see further and to spot or establish connections
between disparate fields, even across traditional disciplinary boundaries. We
have been led to recognise those boundaries not as ontological but as imagined
and thus re-imaginable. And in his company we have experienced an environment
in which common language can be found and shared among people of very different
backgrounds. Testament to this capacity for bringing people together is the present
set of authors, including by original academic background a computer scientist,
a mathematician, and a physicist.

But alongside his bird-like vision, there is also a frog-like attention to
detail \cite{dyson2009birds}. There are
the sudden bursts of activity, when e-mails budding with newly formed ideas
are fired afresh every few minutes, when he's engrossed in some intricate detail of
work -- or swimming with purpose in the milk churn, to borrow
a metaphor from one of Samson's favourite quotes found in the preface to Littlewood's miscellany \cite{littlewood1986littlewood}. And there are the times
when he pins you down on a fine point, forcing you to be precise and organise
your own thoughts. Indeed, it is through these sorts of exchanges that the milk
is often churned to butter.\footnotemark

\footnotetext{Littlewood wrote of his collaboration with Mary
Cartwright:
\textit{Two rats fell into a can of milk. After swimming for a time one of
them realised his hopeless fate and drowned. The other persisted, and
at last the milk was turned to butter and he could get out.}}

While as researchers we may each treasure those moments when, as in
Wordsworth's immortal lines about Newton, we find ourselves `voyaging through strange seas
of thought, alone', with Samson one also learns to see science as a
conversation. In that spirit we also wish to acknowledge the `sheaf team' and
the many friends and collaborators who have been a part of the conversations
and ideas that have nourished the research discussed in this chapter.

We would like to thank Samson and the editors, Mehrnoosh and Alessandra,
for giving us the opportunity to contribute a chapter to this volume,
and for their patience with us during a
tumultuous and drawn-out preparation period.
Doing so has been for us a source of great pleasure.
We hope, Samson, that you will find it equally enjoyable to read.

And in the end, so the saying goes, curiosity killed the cat\dots


\section*{Acknowledgements}
This work was in part carried out while RSB was based at the School of Informatics,
University of Edinburgh and while SM was based at the Paris Centre for Quantum Computing,
Institut de Recherche en Informatique Fondamentale, University of Paris with financial
support from the Bpifrance project RISQ.
RSB acknowledges financial support from EPSRC -- Engineering and Physical Sciences Research Council, EP/R044759/1, \textit{Combining Viewpoints in Quantum Theory (Ext.)}, and from FCT -- Fundação para a Ciência e a Tecnologia, CEECINST/00062/2018.

\bibliographystyle{spmpsci}
\bibliography{chapter_samsky}
\end{document}